\newtheorem{thm}{Theorem}
\newtheorem{lem}[thm]{Lemma}
\newtheorem{cor}[thm]{Corollary}
\newtheorem{ex}{Example}
\newtheorem{prop}[thm]{Proposition}
\newcommand{\N}{\mathcal{N}}
\newcommand{\M}{\mathcal{M}}
\newcommand{\B}{\mathcal{B}}
\newcommand{\tr}{\mathrm{Tr}}
\newcommand{\Q}{Q^{(1)}}
\newcommand{\bds}[1]{\vec{\boldsymbol #1}}
\begin{document}

\title{Extreme Capacities in Generalized Direct Sum Channels}

\author{Zhen Wu}
\affiliation{School of Mathematics and Statistics, Hainan University, Haikou, Hainan Province, 570228, China}
\affiliation{School of Mathematical Sciences, MOE-LSC, Shanghai Jiao Tong University, Shanghai, 200240, China}
\author{Si-Qi Zhou}
\email{zhousq9@mail.sysu.edu.cn}
\affiliation{School of Computer Science and Engineering, Sun Yat-sen University, Guangzhou 510006, China}
\affiliation{School of Mathematical Sciences, MOE-LSC, Shanghai Jiao Tong University, Shanghai, 200240, China}

\begin{abstract}
Quantum channel capacities play a central role in quantum Shannon theory, a formalism built upon rigorous coding theorems for noisy channels. Evaluating exact capacity values for general quantum channels remains intractable due to superadditivity. As a step toward understanding this phenomenon, we construct the generalized direct sum (GDS) channel, extending conventional direct sum channels through a direct sum structure in their Kraus operators. This construction forms the basis of the GDS framework, encompassing classes of channels with single-letter formula for quantum capacities and others exhibiting striking capacity features. The quantum capacity can approach zero yet display unbounded superadditivity combined with erasure channels. Private and classical capacities coincide and can become arbitrarily large, resulting in an unbounded gap with the quantum capacity. Providing a simpler and more intuitive approach, the framework deepens our understanding of quantum channel capacities.
\end{abstract}

\maketitle


\emph{Introduction.---} 
Channel capacity represents the ultimate physical limit on reliable communication. It quantifies the maximum rate at which information can be transmitted through a channel with asymptotic error-free performance. In the classical setting, this problem was resolved by Shannon in his seminal coding theorem~\cite{Shannon}, which laid the foundation of modern information theory. With the advent of quantum mechanics and the possibility of harnessing quantum resources for communication~\cite{NielsenChuang}, the notion of capacity was extended to the quantum domain, giving rise to quantum Shannon theory~\cite{Wilde,WatrousTQI}. At its core lie the quantum channel capacity theorems, which determine the ultimate rates at which different types of information can be faithfully transmitted over a quantum channel.

In the quantum regime, a channel $\B$ admits several distinct notions of capacity, reflecting different communication tasks. The most fundamental are the unassisted capacities: the quantum capacity $Q(\B)$~\cite{Hashing,InforTrans,LSD1,LSD2,LSD3} for quantum communication, the private capacity $P(\B)$~\cite{LSD3,RegExpP} for private classical communication, and the classical capacity $C(\B)$~\cite{classcapaHolevo,HSW1,HSW2} for classical information transmissions through quantum channels. These capacities not only characterize communication rates, but also capture the fundamental limits of other information theoretic tasks. In particular, the quantum capacity coincides with the highest rate of entanglement generation across the channel, and the private capacity with the maximum rate of secret-key generation~\cite{LSD2,BB84,E91}. More broadly, channel capacities quantify the amount of information that can be preserved in the presence of noise, thus providing a direct link to quantum error correction~\cite{QEC1,QEC2,CSS1,CSS2}. Together, these capacities specify the ultimate rates at which a quantum channel can transmit quantum, private, and classical information with asymptotically vanishing errors.

The mathematical characterization of quantum channel capacities is far more involved than in the classical case. Each of these capacities is expressed through a regularized formula (often called a multi-letter expression)~\cite{LSD1,LSD2,LSD3,RegExpP,HSW1,HSW2}, 
\begin{equation}\label{regexp}
    f(\B) = \lim_{m\rightarrow \infty} \frac{f^{(1)}(\B^{\otimes m})}{m}\,, 
\end{equation}
where $f$ denotes one of the capacities and $f^{(1)}$ is the corresponding one-shot capacity, namely the coherent information $\Q$, the private information $P^{(1)}$ and the Holevo capacity $C^{(1)}$ information.

A central obstacle in quantum Shannon theory is the phenomenon of \emph{superadditivity}: for two channels $\B_1$ and $\B_2$, one may have $f^{(1)}(\B_1\otimes \B_2) > f^{(1)}(\B_1) + f^{(1)}(\B_2)$, a phenomenon absent in classical theory~\cite{IEEEDeep,UniformAdd}. This effect has been observed for coherent information~\cite{SuperAddIcPauli,SuperAddIcPauli2,DephrasureChannel,ErrorThresholdsPauli,geneticalgorithms,IEEEPos,Filippov_2021,Singular}, private information~\cite{StructuredCodes,SuperAddIp1}, and Holevo information~\cite{Hastings2009,Shor2004}, as well as for the full quantum and private capacities~\cite{superactivation,Superact2,platypusPRL,wu2025,SuperAddP2,SuperAddP}. Whether the classical capacity is additive remains open. As a consequence, evaluating quantum channel capacities requires the regularized expression~\eqref{regexp}, a non-convex optimization over infinitely many variables for which no general method is known~\cite{UnboundedQ,UnboundP}.

Superadditivity is a defining feature of quantum channel capacities, yet there exist important cases where \emph{additivity} holds, $f^{(1)}(\B_1\otimes \B_2) = f^{(1)}(\B_1) + f^{(1)}(\B_2)$. In these instances, the regularized expression is no longer required, and the capacity is given exactly by the corresponding one-shot quantity, referred to as the single-letter formula. For classical communication, additivity of the Holevo information and the classical capacity has been established for several prominent channels when combined with arbitrary others~\cite{Additivity,Shor2002,Erasurechannel,AddCHadamard1,AddCHadamard2,King2003}. For quantum communication, the coherent information of degradable channels~\cite{LSD2,Structure} is additive when combined with other degradable channels. More generally, regularized less noisy channels, whose complementary channels have zero private capacity, admit single-letter formulas for both quantum and private capacities~\cite{Watanabe}.

\begin{figure}[!htb]
\begin{center}
\includegraphics[width=0.4\textwidth]{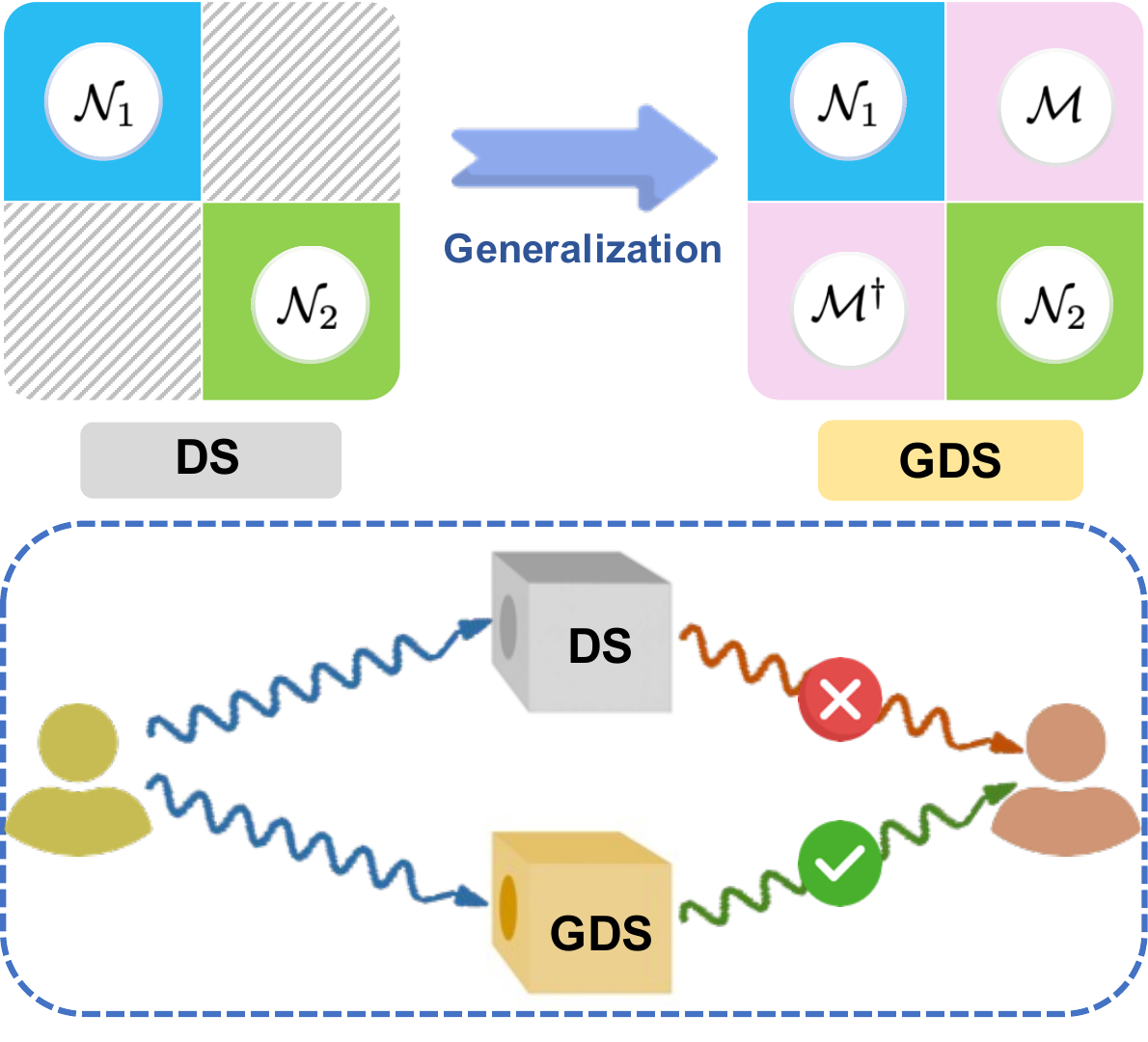}
\end{center}
\vspace{-7pt}
\caption{Structural contrast and communication advantage of generalized direct sum (GDS) over direct sum (DS) channels. The upper panel illustrates the structural distinction:  for simplicity, only two subchannels, $\N_1$ and $\N_2$, are shown. The DS channel exhibits a block-diagonal structure, whereas the GDS channel possess block coherent components $\M$ and $\M^\dagger$ linking the two subchannels. The lower panel highlights the resulting communication advantage: this structural distinction enables GDS channels to perform quantum communication tasks beyond the reach of DS channels.}\label{Figure1}
\vspace{-7pt}
\end{figure}

A fundamental ordering of channel capacities follows directly from their operational definitions: for any quantum channel $\B$, $Q(\B) \leq P(\B) \leq C(\B)$. Here we focus on the first inequality. Watanabe~\cite{Watanabe} provided a general sufficient condition for equality, showing that $Q(\B) = P(\B)$ whenever the complementary channel $\B^c$ has vanishing quantum capacity~\cite{IEEEPartialOrder}. Beyond this case, only a few explicit channels are known to exhibit a strict separation between $Q$ and $P$, such as Horodecki channels~\cite{Horodeckichannel2,Horodeckichannel3,HorodeckiChannel1,Horodeckichannel4}, which have vanishing quantum capacity and typically small private capacity; the half-rocket channel~\cite{Leung2014}, whose quantum capacity lies between $0.6$ and $1$ while its private capacity diverges; the platypus channel~\cite{wu2025,Lovasz,SDPBoundC,platypusTIT}, which achieves arbitrarily small but positive quantum capacity together with unit private capacity.

In this paper, we introduce the generalized direct sum (GDS) channel, defined through a sequence of subchannels $\{\N_i\}$. As shown in Fig.~\ref{Figure1}, the GDS channel, constructed from these subchannels, unlike ordinary direct sum (DS) channels, preserves the block structure while allowing coherent coupling between distinct subsystems, thereby enabling quantum communication processes beyond the reach of DS channels. We establish general bounds on the capacity of GDS channels and identify a class whose quantum capacity satisfies a single-letter formula, extending beyond all previously known additive coherent information. Furthermore, we construct a special instance generated from completely depolarizing subchannels. This channel possesses additive private and classical capacities, while its quantum capacity can be made arbitrarily small. It thus exhibits an unbounded separation between quantum and private capacities and demonstrates enormous superadditivity of quantum capacity. Compared with previous approaches, our GDS framework is markedly simpler and offers new insights into the content of quantum Shannon theory.

\emph{Preliminaries.---} 
The Hilbert space associated with system $A$ is denote by $\mathcal{H}_A$. A quantum channel $\mathcal{B}$ is a completely positive and trace-preserving (CPTP) linear map between operator algebras, which admits the Kraus form $\B(\rho) = \sum_i\, E_i \rho E_i^\dagger$ with $\sum_i E_i^\dagger E_i=\mathds{1}$. Equivalently, any channel can be represented via a unitary extension $U:A\to BE$ as $\B(\rho)  = \tr_E(U\rho U^\dagger)$, and the complementary channel is $\B^c(\rho) = \tr_B(U\rho U^\dagger)$. A channel $\mathcal{B}$ is degradable if there exists another channel $\mathcal{D}$ such that $\mathcal{D}\circ \B = \B^c$, in which case $\mathcal{B}^c$ is called antidegradable. Degradable channels admit a single-letter formula for the quantum capacity $\Q(\B) = Q(\B)$, whereas antidegradable channels have vanishing quantum capacity.

The channel capacity of interest is given by the regularized one-shot expression as Eq.~\eqref{regexp}. Specifically, $\Q(\B) = \max_\rho I_c(\rho,\B)$ is the coherent information of $\B$, where $I_c(\rho,\B) = S(\B(\rho)) - S(\B^c(\rho))$ and $S(\rho) = -\tr (\rho\log\rho)$ denotes the von Neumann entropy. For comparison, the private information and Holevo information of $\B$ are defined as 
$
    P^{(1)} = \max_{\mathfrak{E}} I_p(\mathfrak{E}, \B)\,, \,  C^{(1)} = \max_{\mathfrak{E}} \chi(\mathfrak{E}, \B)
$
where the maximization runs over all ensembles $\mathfrak{E} = \{p_i,\rho_i\}$ with average state $\bar{\rho} = \sum_i p_i\rho_i$, and
$I_p(\mathfrak{E},\B) = I_c(\bar{\rho}\,, \B) - \sum_i p_i I_c(\rho_i,\B)$, $\chi(\mathfrak{E}, \B) = S(\B(\bar{\rho}) - \sum_i p_i S(\B(\rho_i))$.

Recently, Chessa and Giovannetti~\cite{PCDSC} introduced the \textit{Partially Coherent Direct Sum} (PCDS) channel $\N$, generated by a set of subchannels $\{\N_i: A_i \to A_i\}_{i=0}^{n}$, where $\{A_i\}$ are mutually exclusive subsystems of the input system $A$. Formally, if $\{E_{k}^{(i)}\}_k$ is a Kraus representation of the subchannel $\N_i$, then the Kraus operator $\{E_k\}_k$ for the PCDS channel is $E_k = E_{k}^{(0)} \oplus \cdots \oplus E_k^{(n)}$. This direct sum structure of Kraus operators allows one to compute the quantum capacity of certain non-degradable PCDS channels~\cite{PCDSC}. Motivated by this observation, we relax the previous restriction that each subchannel $\N_i$ acts on the same input and output system, thereby introducing generalized direct sum channels.

\emph{Main results.---} 
We establish a general framework for generalized direct sum (GDS) channels, which exhibit several distinctive features with respect to quantum channel capacities. We show that the Kraus operators of a GDS channel necessarily take a direct sum form, so that the channel can be regarded as being generated by a collection of subchannels acting on distinct subsystems. A central consequence of this structural characterization is that degradability of a GDS channel is equivalent to the degradability of all its constituent subchannels. A further key observation lies in the optimal quantum state for the coherent information inherits a direct sum structure, enabling an exact determination of the quantum capacity in the degradable case. More generally, by analyzing the coherent information of generic GDS channels, we identify a class of quantum channels whose quantum capacity satisfies a single-letter formula and that lie beyond all previously known channels with additive coherent information. Taken together, these results delineate the GDS framework for quantum channel capacities.

To exemplify the framework, we construct a concrete class of GDS channels generated by $n+1$ distinct completely depolarizing subchannels. From the upper bound derived below, their quantum capacity can be made arbitrarily small, approaching zero. When combined with erasure channels, however, they exhibit a remarkable superadditivity. The joint quantum capacity reaches at least $\frac{1}{2}\log(n+1)$, which diverges with the number of subchannels, despite each individual capacity being vanishingly small. Moreover, the private and classical capacities coincide. Each admits a single-letter expression with exact value $\log(n+1)$, exposing an unbounded gap between quantum and private capacities. In contrast to previous approaches~\cite{SuperAddP,SuperAddP2,SuperAddIp1,Leung2014}, our framework dispenses with random unitaries and elaborate analysis, yielding a conceptually simple and fully explicit example.

In the following paragraphs we discuss our main results; see the Supplemental Material~\cite{SuppMaterial} for details.

\emph{Generalized direct sum channels.---}
Consider a quantum system $A$ with Hilbert space $\mathcal{H}_A$ admitting the direct sum decomposition $\mathcal{H}_A = \bigoplus_{i=0}^n \mathcal{H}_{A_i}$. Any linear operator $O$ on $\mathcal{H}_A$ can be written in block form $O = (O_{ij})_{i,j=0}^n$, where $O_{ij}: \mathcal{H}_{A_i} \to \mathcal{H}_{A_j}$. We call a channel $\N: A\to B$ to be a \textit{generalized direct sum} channel if it preserves the block structure between the input and output system. Explicitly, for any operator $O$, the output operator is
\begin{equation}\label{OutputOpe}
    \N(O) = \begin{pmatrix}
        \N_0(O_{00}) &  \cdots & \M_{0n}(O_{0n}) \\
        \vdots & \ddots & \vdots \\
        \M_{n0}(O_{n0}) & \cdots & \N_n(O_{nn})
    \end{pmatrix}
\end{equation}
where $\{\N_i\}$ act on the diagonal blocks, while $\M_{ij}$ act on the off-diagonal blocks. Analogous to the PCDS channels~\cite{PCDSC}, the CPTP condition for a GDS channel $\N$ is equivalent to requiring that its Kraus operators respect the same direct sum structure. The following lemma provides a precise characterization.

\begin{lem}
    A quantum channel $\N$ with Kraus operators $\{E_k\}$ is a GDS channel if and only if 
        \begin{equation}\label{KrausOpe}
    E_k = \bigoplus_{i=0}^n \; E_{k}^{(i)}\,.
\end{equation}
where $\{E_{k}^{(i)}\}$ is the Kraus operators of the subchannel $\N_i$, and $\M_{ij}(\cdot) = \sum_k E_k^{(i)} \cdot\, (E_{k}^{(j)})^\dagger$ with $\M_{ji} = \M_{ij}^\dagger$.
\end{lem}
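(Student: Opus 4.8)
The plan is to prove the two directions of the equivalence separately, exploiting the definition of a GDS channel as a channel preserving the block decomposition $\mathcal{H}_A = \bigoplus_i \mathcal{H}_{A_i}$ exactly as in Eq.~\eqref{OutputOpe}, together with the representation of CPTP maps via their Choi operators or, equivalently, Kraus decompositions. For the easy direction ($\Leftarrow$), suppose $E_k = \bigoplus_i E_k^{(i)}$ with $\{E_k^{(i)}\}$ the Kraus operators of a channel $\N_i$, so $\sum_k (E_k^{(i)})^\dagger E_k^{(i)} = \mathds{1}_{A_i}$. One simply computes $\N(O) = \sum_k E_k O E_k^\dagger$ with $O$ in block form; since $E_k$ and $E_k^\dagger$ are block-diagonal, the $(i,j)$ block of the output is $\sum_k E_k^{(i)} O_{ij} (E_k^{(j)})^\dagger = \M_{ij}(O_{ij})$, with $\N_i = \M_{ii}$ on the diagonal and $\M_{ji} = \M_{ij}^\dagger$ immediate from $(E_k^{(i)} \cdot (E_k^{(j)})^\dagger)^\dagger = E_k^{(j)} \cdot (E_k^{(i)})^\dagger$; trace preservation follows since $\sum_k E_k^\dagger E_k = \bigoplus_i \sum_k (E_k^{(i)})^\dagger E_k^{(i)} = \bigoplus_i \mathds{1}_{A_i} = \mathds{1}_A$.

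The substantive direction is ($\Rightarrow$): assuming $\N$ has the block-preserving form~\eqref{OutputOpe}, I must show that \emph{some} Kraus representation — hence, by unitary equivalence of Kraus representations, a canonical one — splits as a direct sum. The cleanest route is via the Choi matrix. Let $P_i$ denote the orthogonal projector onto $\mathcal{H}_{A_i}$. The block-preservation property says precisely that $\N(P_i O P_j) = P_i \N(O) P_j$ for all $i,j$ and all $O$; in particular $\N(P_i O P_i)$ is supported in the $(i,i)$ block. Writing the Choi operator $J(\N) = \sum_{a,b} |a\rangle\langle b| \otimes \N(|a\rangle\langle b|)$ in a basis adapted to the decomposition, this condition forces $J(\N)$ to be block-diagonal with respect to the grading that pairs the input label's block with the output label's block — i.e.\ $J(\N) = \bigoplus_i J_i$ where $J_i$ is supported on $\mathcal{H}_{A_i}\otimes\mathcal{H}_{A_i}$ (more carefully: the off-grading blocks of $J(\N)$ vanish because they equal $\N$ applied to rank-one operators $|a\rangle\langle b|$ with $a,b$ in different blocks, which lands in an off-diagonal output block and contributes only there, and a short bookkeeping argument shows these pieces are independent Choi data that must separately be CP). Positivity of $J(\N)$ then gives positivity of each $J_i$, and a spectral / purification decomposition $J_i = \sum_k |v_k^{(i)}\rangle\langle v_k^{(i)}|$ yields Kraus operators $E_k^{(i)}: \mathcal{H}_{A_i}\to\mathcal{H}_{A_i}$ supported on block $i$; assembling $E_k = \bigoplus_i E_k^{(i)}$ (padding with zeros where indices are missing) recovers $\N$, and trace preservation of $\N$ forces $\sum_k (E_k^{(i)})^\dagger E_k^{(i)} = \mathds{1}_{A_i}$, so each $\N_i$ is itself a channel. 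The identification $\M_{ij}(\cdot) = \sum_k E_k^{(i)}\cdot (E_k^{(j)})^\dagger$ is then read off by comparing with~\eqref{OutputOpe}.

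I expect the main obstacle to be the careful handling of the off-diagonal blocks: unlike the genuinely block-diagonal PCDS case, here the off-diagonal maps $\M_{ij}$ are nonzero, and one must verify that the \emph{same} Kraus labels $k$ can be used simultaneously for all blocks — that is, that the $\M_{ij}$ are not merely completely bounded maps but are \emph{compatible cross terms} of a single Kraus family. The key technical fact making this work is that the constraint $\M_{ji}=\M_{ij}^\dagger$ together with complete positivity of $\N$ forces the matrix of maps $(\M_{ij})$ to be a ``positive'' operator-valued matrix, and by an Arveson/Stinespring-type argument (or directly by the positivity and rank structure of the full Choi matrix $J(\N)$, which already encodes all blocks jointly) the cross terms are automatically $\M_{ij}(\cdot)=\sum_k E_k^{(i)}\cdot(E_k^{(j)})^\dagger$ with a \emph{common} index set $\{k\}$. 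I would phrase this by working with $J(\N)$ as a whole from the start — its block-diagonal structure (in the paired grading) plus global positivity delivers both the diagonal channels and the consistent off-diagonal cross terms in one stroke, which is why routing the argument through the Choi matrix rather than through the individual maps $\M_{ij}$ is the efficient choice.
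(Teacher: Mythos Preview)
Your easy direction ($\Leftarrow$) is fine and matches the paper. For the harder direction ($\Rightarrow$), however, your Choi-matrix description contains a genuine confusion that would derail the argument as written. You claim the GDS property forces $J(\N) = \bigoplus_i J_i$ with $J_i$ supported on $\mathcal{H}_{A_i}\otimes\mathcal{H}_{B_i}$, and then propose to spectrally decompose each $J_i$ separately and assemble $E_k = \bigoplus_i E_k^{(i)}$ by padding. But if $J(\N)$ were block-diagonal in that sense, $\N$ would be an ordinary \emph{direct sum} channel $\bigoplus_i \N_i$ with all $\M_{ij}=0$. What the GDS property actually implies is only that $J(\N)$ has its \emph{range} contained in the subspace $V=\bigoplus_i \mathcal{H}_{A_i}\otimes\mathcal{H}_{B_i}$; restricted to $V$, $J(\N)$ still has nonzero off-diagonal blocks --- these are exactly the Choi data of the $\M_{ij}$. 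Once you state it this way, any decomposition $J(\N)=\sum_k |v_k\rangle\langle v_k|$ of the \emph{full} Choi operator automatically has every $|v_k\rangle\in V$, hence every $E_k$ block-diagonal with a single common index $k$; your worry about ``compatible cross terms'' and a shared Kraus label then evaporates, and no Arveson--Stinespring machinery is needed. You gesture at this at the end (``working with $J(\N)$ as a whole''), but the intermediate plan as written would produce the wrong channel.

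The paper bypasses all of this with a three-line argument that is worth knowing. Start from an \emph{arbitrary} Kraus set $\{E_k\}$ for $\N$, feed in $O=P_{A_i}$, and note that for $j\neq i$ the GDS structure gives
\[
0 \;=\; P_{B_j}\,\N(P_{A_i})\,P_{B_j} \;=\; \sum_k \bigl(P_{B_j}E_kP_{A_i}\bigr)\bigl(P_{B_j}E_kP_{A_i}\bigr)^\dagger,
\]
a sum of positive operators, forcing $P_{B_j}E_kP_{A_i}=0$ for every $k$ and every $j\neq i$. Hence each $E_k$ is already block-diagonal --- and this holds for \emph{every} Kraus representation simultaneously, which is slightly stronger than what your Choi construction would yield.
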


The Kraus operators of $\N$ are fully determined by those of the subchannels $\{\N_i\}$; accordingly, we say that a GDS channel is generated by the Kraus operator $\{E_k^{(i)}\}$ of $\{\N_i\}$. As a consequence of the direct sum structure in Eq.~\eqref{KrausOpe}, its complementary channel is $\N^c(O) = \sum_{i=0}^n \N_i^c (O_i)\,.$ It holds that a GDS channel $\N$ is degradable if and only if all subchannels $\N_i$ are also degradable. In contrast, even when every $\N_i$ is antidegradable, the channel itself need not be. Furthermore, Eq.~\eqref{KrausOpe} yields the direct sum structure for optimal states of the coherent information of generic GDS channels, providing the upper and lower bounds for coherent information of a GDS channel. By utilizing antidegradable subchannels $\{\N_i\}_{i=0}^n$, these two bounds can be equal, thereby providing a class of GDS channels that satisfy single-letter formula for quantum capacity, which fall outside of any known types of channels with additive coherent information~\cite{Watanabe,IEEEPartialOrder,UniformAdd}, see more discussion in~\cite{SuppMaterial}.

\begin{figure*}[t]
\includegraphics[width=0.9\textwidth]{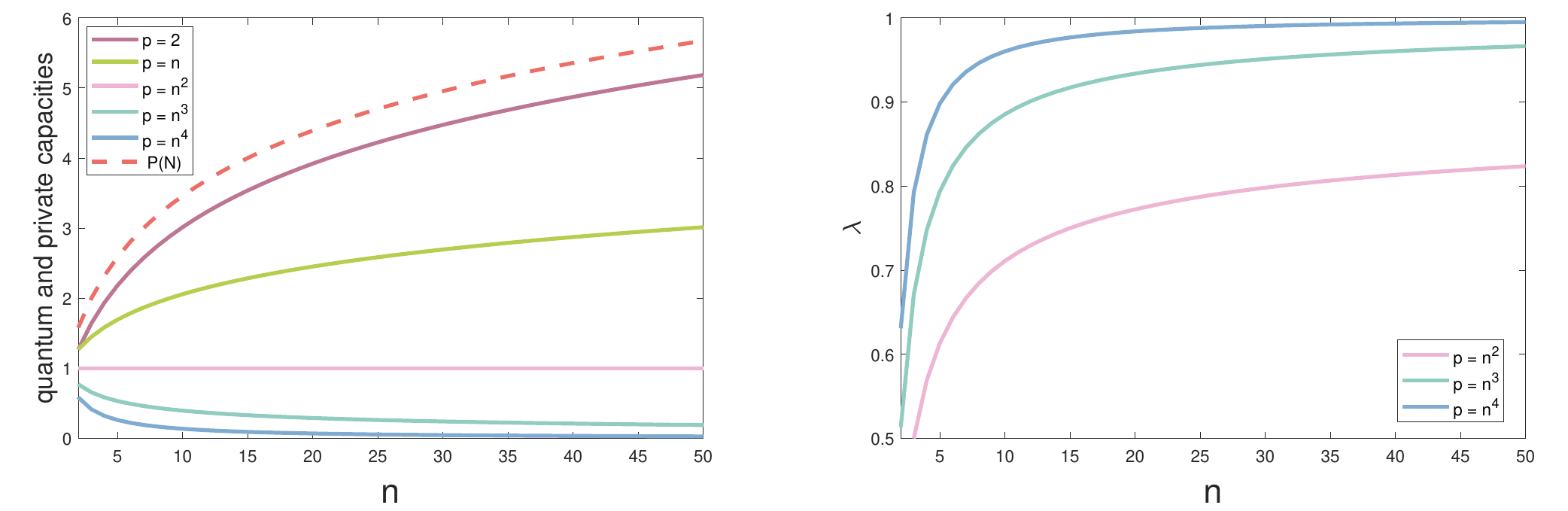}
\caption{Channel capacities analysis for the GDS channel. (left) Solid lines denote upper bounds on the quantum capacity of the GDS channel $\N$ for various $p$. The dash line indicates the exact private capacity $\log(n+1)$. The gap between quantum and private capacity diverges as $n$ or $p$ increases. (right) Maximal value of $\lambda$ for which the quantum capacity of $\N$ combined with $\mathcal{E}_\lambda$ exhibits superadditivity.}
\label{fig1}
\vspace{-7pt}
\end{figure*}

\emph{Upper bound of quantum capacity.---} 
We establish an analytic upper bound on the quantum capacity of a generic GDS channel $\N$. The central tool is the ``transposition bound''~\cite{TranspositionBound}, which for any channel $\B: A' \to B$ guarantees $Q(\B) \leq \log ||\mathcal{T}\circ \B||_{\diamond}$, with $\mathcal{T}$ the transposition map and $||\cdot||_{\diamond}$ the diamond norm. We next derive the upper bound by relating the transposition bounds of subchannels $\N_i$. For a matrix $M$, its absolute value is defined by $|M| = \sqrt{M^\dagger M}$. With this notation, the following result holds.

\begin{thm}\label{UpBdQN}
  Let $\N$ be a GDS channel generated by subchannels $\{\N_i\}_{i=0}^n$ with Kraus operators $\{E_k^{(i)}\}$ and $\mathscr{C}_{\M_{ij}}$ is the Choi matrix of $\M_{ij}$, then the quantum capacity of $\N$ satisfies the upper bound
  \begin{equation}\label{upperBdQ}
    \begin{aligned}
      Q(\N) & \leq \log \max_i \Big[||\mathcal{T}\circ \N_i||_{\diamond}  \\
      & \qquad\quad + \sum_{j\neq i} \sqrt{||\tr_B(|\mathscr{C}_{\M_{ij}}^{T_B}|)||_\infty\,||\tr_B(|\mathscr{C}_{\M_{ji}}^{T_B}|)||_\infty}\Big] \,,
    \end{aligned}
  \end{equation}
 where $||\cdot||_\infty$ denotes the spectral norm.
\end{thm}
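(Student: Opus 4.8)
The plan is to specialize the transposition bound to the block structure forced by Lemma~1. By that bound, $Q(\N)\le\log\|\mathcal{T}\circ\N\|_\diamond$, where $\mathcal{T}$ transposes the output $\mathcal{H}_B=\bigoplus_i\mathcal{H}_{B_i}$; since $\mathcal{T}\circ\N$ is Hermiticity preserving, $\|\mathcal{T}\circ\N\|_\diamond=\sup\|(\mathcal{T}\circ\N\otimes\mathrm{id}_R)(|\psi\rangle\langle\psi|)\|_1$ over pure states $|\psi\rangle_{AR}$. I would fix such a $|\psi\rangle$ and split it along the direct sum, $|\psi\rangle=\sum_{i=0}^n|\psi_i\rangle$ with $|\psi_i\rangle\in\mathcal{H}_{A_i}\otimes\mathcal{H}_R$ and weights $p_i:=\langle\psi_i|\psi_i\rangle$, $\sum_ip_i=1$. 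By Lemma~1 the channel respects the block decomposition, so the operator $Y:=(\mathcal{T}\circ\N\otimes\mathrm{id}_R)(|\psi\rangle\langle\psi|)$ has diagonal blocks $Y_{ii}=(\mathcal{T}\circ\N_i\otimes\mathrm{id}_R)(|\psi_i\rangle\langle\psi_i|)$ and, for $i\neq j$, off-diagonal blocks $Y_{ij}=\big[(\M_{ji}\otimes\mathrm{id}_R)(|\psi_j\rangle\langle\psi_i|)\big]^{T_B}$, with $T_B$ partial transposition on the output.

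Next I would estimate $\|Y\|_1$ block by block. The triangle inequality over the blocks gives $\|Y\|_1\le\sum_i\|Y_{ii}\|_1+\sum_{i\neq j}\|Y_{ij}\|_1$, and the diagonal part is immediate from the definition of the diamond norm: $\|Y_{ii}\|_1\le p_i\,\|\mathcal{T}\circ\N_i\|_\diamond$, because $|\psi_i\rangle\langle\psi_i|/p_i$ is a state on $\mathcal{H}_{A_i}\otimes\mathcal{H}_R$. The off-diagonal blocks are the heart of the argument. Representing $|\psi_i\rangle$ by an operator $C_i:\mathcal{H}_{A_i}\to\mathcal{H}_R$ through the operator--vector correspondence (so $\tr C_i^\dagger C_i=p_i$), a short computation rewrites the cross block as $Y_{ij}=(\mathds{1}_B\otimes C_j)\,G\,(\mathds{1}_B\otimes C_i^\dagger)$ with $G=\mathscr{C}_{\M_{ji}}^{T_B}$; here $G^\dagger=\mathscr{C}_{\M_{ij}}^{T_B}$, since $\mathscr{C}_{\M_{ij}}=\mathscr{C}_{\M_{ji}}^\dagger$ (the relation $\M_{ji}=\M_{ij}^\dagger$ of Lemma~1 lifted to Choi matrices) and Hermitian conjugation commutes with partial transposition.

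The key estimate then follows from a symmetric split. I would insert the polar decomposition $G=W|G|$, write $|G|=|G|^{1/2}|G|^{1/2}$, and apply $\|XZ\|_1\le\|X\|_2\|Z\|_2$ to $X=(\mathds{1}_B\otimes C_j)W|G|^{1/2}$ and $Z=|G|^{1/2}(\mathds{1}_B\otimes C_i^\dagger)$, whose product is $Y_{ij}$. Using $W|G|W^\dagger=|G^\dagger|$ together with $\tr\big[(\mathds{1}_B\otimes A)P\big]=\tr\big[A\,\tr_B P\big]\le\|\tr_B P\|_\infty\,\tr A$ for $A,P\ge0$, one gets $\|X\|_2^2\le p_j\,\|\tr_B|G^\dagger|\|_\infty$ and $\|Z\|_2^2\le p_i\,\|\tr_B|G|\|_\infty$; since $|G^\dagger|=|\mathscr{C}_{\M_{ij}}^{T_B}|$ and $|G|=|\mathscr{C}_{\M_{ji}}^{T_B}|$, this is exactly $\|Y_{ij}\|_1\le\sqrt{p_ip_j}\,c_{ij}$, with $c_{ij}$ the square-root factor appearing in~\eqref{upperBdQ} (note $c_{ij}=c_{ji}$). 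Collecting, $\|Y\|_1\le\sum_ip_i\|\mathcal{T}\circ\N_i\|_\diamond+\sum_{i\neq j}c_{ij}\sqrt{p_ip_j}$; the symmetry of $c_{ij}$ and $\sqrt{p_ip_j}\le\tfrac12(p_i+p_j)$ give $\sum_{i\neq j}c_{ij}\sqrt{p_ip_j}\le\sum_ip_i\sum_{j\neq i}c_{ij}$, hence $\|Y\|_1\le\sum_ip_i\big(\|\mathcal{T}\circ\N_i\|_\diamond+\sum_{j\neq i}c_{ij}\big)\le\max_i\big(\|\mathcal{T}\circ\N_i\|_\diamond+\sum_{j\neq i}c_{ij}\big)$, a convex combination being at most its largest entry. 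Taking the supremum over $|\psi\rangle$ bounds $\|\mathcal{T}\circ\N\|_\diamond$, and the transposition bound gives~\eqref{upperBdQ}.

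I expect the off-diagonal estimate to be the main obstacle: a crude triangle inequality over blocks produces only the weaker $\max_i\|\mathcal{T}\circ\N_i\|_\diamond+\sum_{i\neq j}c_{ij}$. Two observations do the work. First, in the purified picture the cross blocks $|\psi_j\rangle\langle\psi_i|$ dress the partially transposed Choi matrix of $\M_{ji}$ by $C_j$ on one side and $C_i^\dagger$ on the other, and the symmetric polar-decomposition split is precisely what separates these dressings so that each Hilbert--Schmidt factor carries only one weight ($p_i$ or $p_j$) and one Choi matrix. Second, using $c_{ij}=c_{ji}$ and the arithmetic--geometric mean inequality re-bundles the off-diagonal sum into the per-block weights, collapsing the whole bound into a single maximum. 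A purely bookkeeping point is to keep track of the rectangular index conventions for $\M_{ij}$ and $\mathscr{C}_{\M_{ij}}$ when the $\mathcal{H}_{A_i}$ have different dimensions; this does not affect the argument.
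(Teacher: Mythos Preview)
Your argument is correct and reaches the same bound, but it is genuinely different from the paper's. The paper works on the dual side: it uses the SDP characterization $\|\mathcal{T}\circ\N\|_\diamond=\min\{y:\,Y_{AB}\pm\mathscr{C}_\N^{T_B}\ge0,\ \tr_BY_{AB}\le y\mathds{1}\}$ and exhibits a feasible $Y_{AB}$ block by block, $Y_{AB}=\sum_iY_{A_iB_i}+\sum_{i<j}\big(\tfrac{b_{ij}}{a_{ij}}|\mathscr{C}_{\M_{ij}}^{T_B}|+\tfrac{a_{ij}}{b_{ij}}|\mathscr{C}_{\M_{ji}}^{T_B}|\big)$, where feasibility follows from the operator inequality $\pm(M+M^\dagger)\le\tfrac{b}{a}|M|+\tfrac{a}{b}|M^\dagger|$; optimizing the free parameters $a_{ij},b_{ij}$ to balance $\tr_B$ of the two absolute values produces the square root in~\eqref{upperBdQ}. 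You instead attack the primal directly: you bound $\|(\mathcal{T}\circ\N\otimes\mathrm{id})(|\psi\rangle\langle\psi|)\|_1$ for each pure input, control the off-diagonal blocks by a polar-decomposition/H\"older split, and then use AM--GM on $\sqrt{p_ip_j}$ to re-aggregate into a single maximum. The two arguments are essentially dual to one another; your AM--GM step on the weights plays exactly the role of the paper's optimal choice of $a_{ij},b_{ij}$. Your route is more elementary (no SDP formulation needed) and isolates transparently why each cross term contributes $c_{ij}\sqrt{p_ip_j}$. The paper's route buys something extra: the explicit feasible $Y_{AB}$ is later recycled verbatim (with $a_{ij}=b_{ij}=1$) as the witness in the $\beta(\N)$ SDP for the classical-capacity bound, so the dual construction does double duty. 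One small remark: in your formula $Y_{ij}=(\mathds{1}_B\otimes C_j)\,G\,(\mathds{1}_B\otimes C_i^\dagger)$ the $C$'s act on the $A$-leg of the Choi matrix, which in the paper's ordering is the \emph{first} tensor factor; this is exactly the bookkeeping point you flag, and it does not affect the estimate.
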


Theorem~\ref{UpBdQN} demonstrates that the quantum capacity is influenced not only by the subchannels $\N_i$ but also by the off-diagonal maps $\M_{ij}$ associated with the off-diagonal blocks in Eq.~\eqref{OutputOpe}. It is natural to expect this behavior: unlike the ordinary direct sum of channels~\cite{DirectSumC}, the GDS construction incorporates block coherent component $\M_{ij}$, which reveals that the GDS construction can leverage block coherence of the input quantum state to achieve higher communication rates, as discussed in~\cite{SuppMaterial}. Furthermore, although direct sum of channels $\Phi_1\oplus \Phi_2$ have played a significant role in proving the superadditivity of Holevo information~\cite{Hastings2009}, such superadditivity actually comes from the tensor product for these two channels $\Phi_1\otimes \Phi_2$ rather than the direct sum structure itself. Since a direct sum structure corresponds to block coherence~\cite{aberg06,CohePRL1}, which is recognized as a quantum resource~\cite{CoheRMP,CohePRL2}, it's reasonable to believe that GDS construction can offer new perspectives into the additivity and superadditivity of channel capacities. As an illustration, we now present a notable example for which the channel capacities can be computed explicitly.

\emph{Completely depolarizing subchannels.---}
We introduce a family of GDS channels $\N$, characterized by integers $p>1$ and $n$, and constructed from a set of completely depolarizing subchannels $\{\N_i\}_{i=0}^{n}$. Each subchannel $\N_i$ maps an input state $\rho_i$ on subsystem $A_i$ to the completely mixed state on the output subsystem $B_i$, where the dimensions satisfy $d_{A_i} = p^i$ and $d_{B_i} = p^{n - i}$. The Kraus operators associated with $\N_i$ take the form $\big\{\frac{1}{\sqrt{d_{B_i}}}|s\rangle\langle t|\,, s = 0,\ldots, p^i-1, t=0,\ldots, p^{n-i}-1\big\}$, which give rise to the channel representation
\begin{equation}\label{CDC}
\N_i: A_i \to B_i,\,\quad \N_i(\rho_i) = \tr(\rho_i)\, \frac{\mathds{1}_{B_i}}{d_{B_i}}\,.
\end{equation}
Each subchannel $\N_i$ is completely noisy and thus has vanishing capacity. Remarkably, when combined into a GDS channel $\mathcal{N}$, the resulting channel exhibits strictly positive capacities. A straightforward calculation shows that
$||\tr_B(|\mathscr{C}_{\M_{ij}}^{T_B}|)||_\infty\,||\tr_B(|\mathscr{C}_{\M_{ji}}^{T_B}|)||_\infty = p^{-|j-i|}\,.$
Consequently, Theorem~\ref{UpBdQN} implies that the quantum capacity of the GDS channel generated by the depolarizing subchannels in Eq.~\eqref{CDC} is bounded above by
\begin{equation}\label{UpQCDC}
    Q(\N) \leq \log \Big(1+\frac{n}{\sqrt{p}}\Big)\,.
\end{equation}

The upper bounds on the quantum capacity of $\N$ for varying $p = p(n)$ are shown in Fig.~\ref{fig1}. The figure also highlights the exact value $\log(n+1)$ for the private capacity of $\N$, displaying the large separation between the quantum and private capacities, together with the maximum erasure probability for which superadditivity of quantum capacity persists when $\N$ is combined with erasure channels, as discussed below. A fundamental lower bound on the coherent information of GDS channels was established in Ref.~\cite{SuppMaterial}, namely $\Q(\N) \geq \log(n+1)/p^n > 0$. Since the integers $p$ and $n$ are arbitrary, for any positive number $\beta$ one can choose $p\gg n$ such that $0<Q(\N) <\beta$, as illustrated in Fig.~\ref{fig1}. In particular, the ``general platypus channel''~\cite{platypusTIT,wu2025} arises as a special case of a GDS channel constructed from two discard-prepare subchannels: $\N_0: A_0\to B_0,\, \N_0(\cdot) = \sigma$ with $d_{A_0} = d_{B_1} = d$, and $\N_1: A_1 \to B_1,\, \N_1(\cdot) = |0\rangle\langle 0|_{B_1}$ with $d_{B_0} = d_{A_1} = 1$. Therefore, GDS channels extend the class of platypus channels while preserving the feature of possessing arbitrarily small yet positive quantum capacity.

Furthermore, the private and classical capacities of this specific GDS channel admit exact valuation by adapting the methods of Refs.~\cite{platypusTIT,wu2025}. Consider the ensemble $\mathfrak{E} = \{\frac{1}{n+1}, \frac{\mathds{1}_{A_i}}{d_{A_i}} \oplus 0_{A_i^c}\}_{i=0}^n$ where $\frac{\mathds{1}_{A_i}}{d_{A_i}}$ is supported on subsystem $A_i$ and $0_{A_i^c}$ denotes the zero operator on all remaining subsystems. For this ensemble, the private information of $\N$ satisfies 
$ P^{(1)}(\N) \geq I_p(\mathfrak{E},\N) = \log (n+1). $
An upper bound on the classical capacity is provided in Refs.~\cite{SDPBoundC,SDPBoundC2}, which guarantees that for any channel $\B$ its classical capacity satisfies $ C(\B) \leq \log \beta(\B)$. Through direct construction, it can be derived that $\beta(\N) \leq n+1$, leading to the following capacity bounds for $\N$.

\begin{thm}
    Let $\N$ be a GDS channel generated by completely depolarizing subchannels $\{\N_i\}_{i=0}^n$ defined in Eq.~\eqref{CDC}. Then the quantum channel capacity of $\N$ satisfy
    \begin{widetext}
        \begin{equation}\label{CapacityCDC}
         \frac{\log(n+1)}{p^n} \leq \Q(\N) \leq Q(\N) \leq \log \Big(1+\frac{n}{\sqrt{p}}\Big) < \log (n+1) = P^{(1)}(\N) = C^{(1)}(\N) = P(\N) = C(\N)\,.
    \end{equation}
    \end{widetext}
\end{thm}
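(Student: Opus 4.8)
The plan is to prove the chain of (in)equalities in Eq.~\eqref{CapacityCDC} by establishing each link separately and then invoking the general operational ordering $Q \le P \le C$ together with the already-stated structural results. The outer inequality $\log(n+1)/p^n \le \Q(\N)$ is the lower bound on coherent information quoted from~\cite{SuppMaterial}: I would produce it by evaluating $I_c(\rho,\N)$ on the maximally mixed input $\rho = \mathds{1}_A/d_A$ (which has block-diagonal form, consistent with the direct-sum structure of optimal states noted after Lemma), using $\N^c(O) = \sum_i \N_i^c(O_i)$ to compute the environment entropy, and observing that each completely depolarizing $\N_i$ contributes a computable $S(\N_i(\rho_i)) - S(\N_i^c(\rho_i))$; the coherence carried by the block-diagonal weights yields the $\log(n+1)$ term, suppressed by the normalization $p^n = d_A$. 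The middle inequality $\Q(\N) \le Q(\N)$ is immediate from the regularized definition~\eqref{regexp}. The bound $Q(\N) \le \log(1 + n/\sqrt p)$ is exactly Eq.~\eqref{UpQCDC}, which follows from Theorem~\ref{UpBdQN} once one checks the two facts already asserted: $\|\mathcal{T}\circ \N_i\|_\diamond$ for a completely depolarizing channel, and the identity $\|\tr_B(|\mathscr{C}_{\M_{ij}}^{T_B}|)\|_\infty \|\tr_B(|\mathscr{C}_{\M_{ji}}^{T_B}|)\|_\infty = p^{-|j-i|}$; summing the geometric-type series $\sum_{j\neq i} p^{-|j-i|/2}$ over $j$ and maximizing over $i$ gives at most $n/\sqrt p$ extra, while the transposition bound of a depolarizing channel contributes the leading $1$.

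For the strict inequality $\log(1+n/\sqrt p) < \log(n+1)$, I would simply note it holds for every integer $p \ge 2$ and $n \ge 1$ since $n/\sqrt p \le n < n+1$ (monotonicity of $\log$), so this link needs only a one-line remark; it is the reason $p$ is required to be an integer $>1$.

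The heart of the theorem is the five-fold equality $\log(n+1) = P^{(1)}(\N) = C^{(1)}(\N) = P(\N) = C(\N)$. I would prove it by a squeeze: a lower bound $\log(n+1) \le P^{(1)}(\N)$ and an upper bound $C(\N) \le \log(n+1)$, with the operational ordering $P^{(1)} \le C^{(1)}$, $P^{(1)} \le P$, $C^{(1)} \le C$, $P \le C$ forcing everything in between to collapse. The lower bound comes from the explicit ensemble $\mathfrak{E} = \{\tfrac{1}{n+1}, \tfrac{\mathds{1}_{A_i}}{d_{A_i}} \oplus 0_{A_i^c}\}_{i=0}^n$: here each signal state is supported on a single block, so $\N$ sends it to $\mathds{1}_{B_i}/d_{B_i}$ embedded in the appropriate output block, and the private information $I_p(\mathfrak{E},\N) = I_c(\bar\rho,\N) - \sum_i p_i I_c(\rho_i,\N)$ should evaluate to $\log(n+1)$ — the key point being that the individual signal states feed into a single block and hence have the same small coherent-information contribution as appears in the diagonal of $\bar\rho$, while the off-diagonal coherence of $\bar\rho$ vanishes (the $\rho_i$ live on orthogonal blocks), so the environment cannot distinguish which $i$ was sent from the $B$-output alone once the block label is ``used up'', yielding the full $\log(n+1)$ of private classical information. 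The upper bound $C(\N) \le \log \beta(\N)$ with $\beta(\N) \le n+1$ uses the SDP bound of~\cite{SDPBoundC,SDPBoundC2}; I would exhibit a feasible dual point of the $\beta$-SDP (for instance the block-diagonal operator $\bigoplus_i \mathds{1}_{B_i}$ suitably scaled, or its Choi-state analogue) achieving objective value $n+1$, using the completely-depolarizing structure $\N_i(\cdot) = \tr(\cdot)\mathds{1}_{B_i}/d_{B_i}$ to verify feasibility directly.

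The main obstacle I anticipate is the exact evaluation $I_p(\mathfrak{E},\N) = \log(n+1)$ for the chosen ensemble: one must carefully track the von Neumann entropies $S(\N(\bar\rho))$, $S(\N^c(\bar\rho))$, $S(\N(\rho_i))$, $S(\N^c(\rho_i))$ through the block structure, and in particular confirm that the coherent-information ``penalty'' terms $\sum_i p_i I_c(\rho_i,\N)$ cancel precisely against the diagonal part of $I_c(\bar\rho,\N)$, leaving only the classical entropy $\log(n+1)$ of the block-label distribution. A secondary obstacle is producing an explicit dual-feasible point certifying $\beta(\N) \le n+1$; the completely-depolarizing form of the subchannels should make the semidefinite constraints transparent, but one must confirm the off-diagonal maps $\M_{ij}$ do not spoil feasibility — intuitively they cannot increase $\beta$ because $\beta$ depends only on the Choi matrix's positivity and partial-transpose structure, which the block-coherent components respect by construction. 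Once these two estimates are in hand, the theorem follows by assembling the inequality chain.
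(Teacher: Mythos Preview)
Your overall strategy—establishing each link and then squeezing the five-fold equality between $P^{(1)}(\N)\geq\log(n+1)$ (via the ensemble $\mathfrak{E}$ you name) and $C(\N)\leq\log\beta(\N)\leq\log(n+1)$ (with $S_B=\bigoplus_i\mathds{1}_{B_i}/d_{B_i}$)—matches the paper exactly, as does the use of Theorem~\ref{UpBdQN} for the quantum-capacity upper bound.

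The genuine gap is your derivation of the leftmost inequality $\log(n+1)/p^n\leq\Q(\N)$. The maximally mixed input $\mathds{1}_A/d_A$ does \emph{not} produce this bound. First, $d_A=\sum_{i=0}^n p^i\neq p^n$. Second, its block weights $p_i=p^i/d_A$ concentrate on $i=n$ as $p$ grows, and a direct computation gives $I_c(\mathds{1}_A/d_A,\N)=\log d_A-2(\log p)\sum_i i\,p_i$, which is \emph{negative} for large $p$ (the environment output is exactly $\mathds{1}_E/p^n$, contributing entropy $n\log p$, while the channel-output entropy tends to zero). The paper does not evaluate $I_c$ on any single state. It instead invokes the general lower bound
\[
\Q(\N)\;\geq\; H(\bds{p})\Bigl(1-\tfrac12\max_{i,j}\|\omega_i-\omega_j\|_1\Bigr),
\]
obtained from the Audenaert-type upper bound on the Holevo term in the block-diagonal coherent-information formula, applied with \emph{uniform} weights $p_i=1/(n+1)$ and \emph{pure} block states $\rho_i=|0\rangle\langle 0|_{A_i}$, and then estimates $\max_{i,j}\|\omega_i-\omega_j\|_1\leq 2\bigl(1-1/\max_i d_{B_i}\bigr)=2(1-p^{-n})$ from the explicit form of $\omega_i=\N_i^c(|0\rangle\langle 0|)$. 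The factor $1/p^n$ thus arises from a trace-distance estimate between environment outputs, not from a normalization. A smaller point: for the $\beta$-SDP feasibility, the paper's verification that $|\mathscr{C}_{\M_{ij}}^{T_B}|^{T_B}\leq\mathds{1}_{A_j}\otimes\mathds{1}_{B_i}/d_{B_i}$ relies on the explicit spectral decomposition of $|\mathscr{C}_{\M_{ij}}^{T_B}|$ into maximally-entangled projectors (available precisely because $d_{B_j}\mid d_{B_i}$ here) together with the elementary inequality $\mathrm{SWAP}\leq\mathds{1}$; your ``by construction'' intuition does not substitute for this computation.
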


By choosing a suitable scaling of $p = p(n)$, for instance $p(n) = n^4$, the gap between quantum and private diverges as $n$ grows. A similar extensive separation was previously observed in the ``half-rocket channels''~\cite{Leung2014}, which maintain a quantum capacity exceeding $0.6$; in contrast, the quantum capacity of the present GDS channel can be made arbitrarily close to 0. Moreover, whereas the half-rocket construction relies on random unitaries and does not admit a simple closed form, the GDS channels introduced here provide an explicit example with a transparent expression. In addition, restricting the complementary channel $\N^c$ to subsystem $A_n$ yields an identical channel, implying that $Q(\N^c) \geq n\log p$. Consequently, the GDS channel does not fall into any previously known class of channels with additive private and Holevo information.

To demonstrate the superadditivity of the quantum capacity for the GDS channel $\N$ generated by subchannels $\{\N_i\}$ in Eq.~\eqref{CDC}, we combine $\N$ with the erasure channel $\mathcal{E}_\lambda (\rho) = (1-\lambda) \rho + \lambda \tr(\rho)|e\rangle\langle e|$, whose quantum capacity is $Q(\mathcal{E}_\lambda) = \max\{(1-2\lambda)\log d_{A}, 0\}$~\cite{Erasurechannel}.
Consider the quantum states $\{\mathds{1}_{A_i}/d_{A_i}\}_{i=0}^n$ corresponds to their purifications $\{|\Phi_i\rangle\langle\Phi_i|_{A_iA_i'}\}$, and define the bipartite state
\begin{equation}
    \omega_{AA'} = \frac{1}{n+1} \bigoplus_{i=0}^n\, |\Phi_i\rangle\langle\Phi_i|_{A_iA_i'}\,.
\end{equation}
Using the output of the complementary channel $\N^c$, the coherent information $I_c(\omega_{AA'},\N\otimes \mathcal{E}_\lambda)$ with respect to this input state as
\begin{equation}
    I_c(\omega_{AA'},\N\otimes \mathcal{E}_\lambda) = (1-\lambda)\log (n+1) \,.
\end{equation}
Combining this with the upper bound on $Q(\N)$ from Eq.~\eqref{UpQCDC} shows that, for $\lambda \geq 1/2$, there exist values of $p$ for which the quantum capacity of the GDS $\N$ combined with $\mathcal{E}_\lambda$ is superadditive,
$ Q(\N\otimes \mathcal{E}_{\lambda}) > Q(\N) + Q(\mathcal{E}_\lambda)\,,$
as illustrated in Fig.~\ref{fig1}. Notably, the range of $\lambda$ increases with $p$; for fixed $p$, the maximum $\lambda$ guaranteeing the superadditivity also gradually approaches $1$ when $n$ increases.

\emph{Discussion.---} 
The generalized direct sum (GDS) channel introduced in this paper employs a direct sum structure in its Kraus operators, thereby exhibiting estimable quantum channel capacities and forming a class of channels with a single-letter expression for the quantum capacity. GDS channels generated by completely depolarizing subchannels demonstrate both an arbitrarily large separation between quantum and private capacities and a pronounced superadditivity of the quantum capacity. Combining several completely depolarizing channels within the GDS framework enables quantum, private, and classical communication, outshining the quantum switch~\cite{Switch2,CSwitchPRL,QSWofCDC,WuSwitch} and coherent control~\cite{CoheControl} for these subchannels. For a fixed number of completely depolarizing subchannels, the classical capacity of the GDS channel remains independent of the input dimension, unlike the channels constructed via the quantum switch and coherent control, whose classical capacities vanish as the input dimension increases.

The GDS framework, in contrast to the quantum switch that relies on indefinite causal order, leverages a direct sum structure associated with coherence. Different quantum resources are expected to provide distinct communication advantages, enriching the content of quantum Shannon theory. We anticipate that the GDS framework will not only stimulate further developments in quantum Shannon theory but also find experimental realization and practical applications in broader areas of quantum information.


\emph{Acknowledgements.---}
Z.W. thanks Peixue Wu for helpful discussions.


\bibliography{reference}


\clearpage\clearpage
\newpage

\title{Methods}
\author{testing}

\maketitle
\onecolumngrid

\begin{center}\large \textbf{Extreme Capacities in Generalized Direct Sum Channels} \\
\textbf{--- Supplementary Material ---}\\
\end{center}



\onecolumngrid

The Supplemental Material is organized as follows. In Section I, we establish several basic properties of generalized direct sum (GDS) channels, including the structure of their Kraus operators and degradability. In Section II, we derive bounds on the coherent, private, and Holevo information, as well as the quantum capacity of a GDS channel. Subsequently, in section III, building on the bounds for the coherent information of a GDS channel provided in section II, we propose a new class of quantum channels with a single-letter formula for quantum capacity. Finally, section IV demonstrates the attractive capacity features of the special GDS channel in the main text.

\tableofcontents
\vspace{1cm}

\newpage

\section{I. Basic properties of generalized direct sum channels}

Let $\mathcal{H}_A$ be the Hilbert space corresponding to the quantum system $A$. A quantum channel from the input system $A$ to the output system $B$ is a completely positive and trace preserving linear map $\mathcal{N}: A \to B$ admitting the Kraus representation
\begin{equation}
  \N(\rho) = \sum_i\, E_i \rho E_i^\dagger\,,
\end{equation}
where the operators $\{E_i\}$ satisfy $\sum_i E_i^\dagger E_i = \mathds{1}$. The dual channel $\N^\dagger$ of $\N$ is a unital quantum channel with Kraus operators $\{E_i^\dagger\}$. Equivalently, the quantum channel has a Stinespring representation: there exists a unitary operator $V:A\to BE$ such that $\N(\rho) = \tr_E(V\rho V^\dagger)$, where $E$ denotes the environment system. In particular, the unitary can be chosen as $V = \sum_i E_i\otimes |i\rangle_E$.

\subsection{A. Kraus representation of generalized direct sum (GDS) channels}

The GDS channel generalizes the partially coherent direct sum channel introduced in~\cite{PCDSC}, and it no longer restricts the input and output operators of subchannels to be in the same space. Such a generalization is quite natural and can have a wider range of applications. In the main text, we give a special channel generated by completely depolarizing subchannels, which is a GDS channel but not a partially coherent direct sum channel.

Now let the input system $\mathcal{H}_A$ and the output system $\mathcal{H}_B$ admit the direct sum structure
\begin{equation*}
  \mathcal{H}_A = \mathcal{H}_{A_0} \oplus \cdots \oplus \mathcal{H}_{A_n}\,, \quad \mathcal{H}_B = \mathcal{H}_{B_0} \oplus \cdots \oplus  \mathcal{H}_{B_n}
\end{equation*}
where $\mathcal{H}_{A_i}$ are non-trivial subspace of $\mathcal{H}_{A}$ with dimensions $d_{A_i}$, and $\mathcal{H}_{B_j}$ are non-trivial subspace of $\mathcal{H}_B$ with dimensions $d_{B_j}$. The GDS channel $\N$ is a quantum channel mapping the input system $A$ to the output system $B$ preserving the direct sum structure. Formally, the Hermitian operation $O$ can be written as $O = (O_{ij})$, where each $O_{ij}$ is a linear operator from $\mathcal{H}_{A_i}$ to $\mathcal{H}_{A_j}$. A channel $\N$ is called a \textit{generalized direct sum} channel if it preserves the block structure between the input and output systems; that is, for every operator $O$, the output operator can be written as
\begin{equation}
    \N(O) = \begin{pmatrix}
        \N_0(O_{00}) & \M_{01}(O_{01}) & \cdots & \M_{0n}(O_{0n}) \\
        \M_{10}(O_{10}) & \N_1(O_{11}) & \cdots & \M_{1n}(O_{1n}) \\
        \vdots & \vdots & \ddots & \vdots \\
        \M_{n0}(O_{n0}) & \M_{n1}(O_{n1}) & \cdots & \N_n(O_{nn})
    \end{pmatrix}.
\end{equation}
The following theorem describes the structure of Kraus operators of a GDS channel $\N$ and the associated linear map $\M_{ij}$.
\begin{thm}\label{KrausSM}
  A quantum channel $\N$ with Kraus operators $\{E_k\}$ is a GDS channel if and only if
  \begin{equation}\label{eqKraus}
    E_k = E_{k}^{(0)}\oplus \cdots \oplus  E_{k}^{(n)}\,,
  \end{equation}
 for all $k$, where $\{E_{k}^{(i)}\}$ are the Kraus operator of quantum channels $\N_i$, and $\M_{ij}(\cdot) = \sum_k E_{k}^{(i)} \cdot E_{k}^{(j),\dagger}$ is a linear map. 
\end{thm}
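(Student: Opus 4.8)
The plan is to prove both directions of the equivalence by translating the block-preservation property of $\N$ into an algebraic condition on a Stinespring isometry, and then into the stated Kraus form.

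\emph{Sufficiency.} Suppose the Kraus operators have the form $E_k = \bigoplus_{i=0}^n E_k^{(i)}$. For an operator $O = (O_{ij})$ in block form, I would simply compute $E_k O E_k^\dagger$ block by block: since $E_k$ is block-diagonal, the $(i,j)$ block of $E_k O E_k^\dagger$ is exactly $E_k^{(i)} O_{ij} (E_k^{(j)})^\dagger$. Summing over $k$ gives $\N(O)_{ij} = \sum_k E_k^{(i)} O_{ij} (E_k^{(j)})^\dagger = \M_{ij}(O_{ij})$, which is precisely the block structure defining a GDS channel, with the diagonal maps $\N_i = \M_{ii}$ being genuine channels because $\sum_k (E_k^{(i)})^\dagger E_k^{(i)} = \mathds{1}_{A_i}$ (this follows by reading off the $i$-th diagonal block of $\sum_k E_k^\dagger E_k = \mathds{1}_A$, the off-diagonal blocks of which force $\sum_k (E_k^{(i)})^\dagger E_k^{(j)} = 0$ for $i \neq j$, a fact I will re-use). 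Trace preservation and complete positivity of $\N$ are automatic from the Kraus form. This direction is routine.

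\emph{Necessity.} This is the substantive direction. Assume $\N$ is GDS. I would fix an arbitrary Kraus representation $\{F_k\}$ of $\N$ and write each $F_k$ in block form $F_k = (F_k^{(ij)})$ with $F_k^{(ij)} : \mathcal{H}_{A_j} \to \mathcal{H}_{B_i}$ — note the index convention must be handled carefully since $O_{ij}$ maps $\mathcal{H}_{A_i} \to \mathcal{H}_{A_j}$. The GDS property says that for every $O$, the $(i,j)$ block of $\N(O)$ depends only on $O_{ij}$. The key idea is to test this against rank-one inputs. Taking $O$ supported only in the $(a,b)$ block (i.e. $O_{ij} = \delta_{ia}\delta_{jb} X$ for arbitrary $X$), the requirement that $\N(O)$ have a vanishing $(i,j)$ block whenever $(i,j) \neq (a,b)$, for all $X$, yields the operator identities $\sum_k F_k^{(i a)} X (F_k^{(j b)})^\dagger = 0$ for all $X$ and all $(i,j) \neq (a,b)$. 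Choosing $X$ to be rank-one dyads $|\psi\rangle\langle\varphi|$ and varying $\psi, \varphi$, this forces $\sum_k \langle \chi | F_k^{(ia)} |\psi\rangle \overline{\langle \eta | F_k^{(jb)}|\varphi\rangle} = 0$ for all vectors; i.e., the families of vectors $\{F_k^{(ia)}|\psi\rangle\}_k$ and $\{F_k^{(jb)}|\varphi\rangle\}_k$ are orthogonal (as elements of $\bigoplus_k \mathcal{H}_{B_i}$ etc.) — more cleanly, the "row vector of operators" $(F_k^{(i\bullet)})_k$ and $(F_k^{(j\bullet)})_k$ are orthogonal in an appropriate Hilbert-module sense. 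The cleanest route: assemble the isometry $V = \sum_k F_k \otimes |k\rangle_E : \mathcal{H}_A \to \mathcal{H}_B \otimes \mathcal{H}_E$. The GDS condition is equivalent to $(\Pi_{B_i} \otimes \mathds{1}_E) V \Pi_{A_j}$ being "coherent" only when tied across a common index; working this out shows $V$ maps $\mathcal{H}_{A_i}$ into $\mathcal{H}_{B_i} \otimes \mathcal{H}_{E}$ with the environment parts for different $i$ being orthogonal subspaces \emph{after} a suitable unitary rotation $W$ on $E$. Concretely, I would argue that there is a decomposition $\mathcal{H}_E = \bigoplus_i \mathcal{H}_{E_i}$ (enlarging $E$ if needed and applying an isometry on $E$, which does not change the channel) such that $(\mathds{1}_B \otimes W) V \Pi_{A_i} = V^{(i)} \Pi_{A_i}$ with $V^{(i)} : \mathcal{H}_{A_i} \to \mathcal{H}_{B_i} \otimes \mathcal{H}_{E_i}$ an isometry onto its image. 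Reading off Kraus operators from $(\mathds{1}_B \otimes W)V$ in the product basis of $\mathcal{H}_E$ then gives exactly $E_k = \bigoplus_i E_k^{(i)}$, and the off-diagonal map is recovered as $\M_{ij}(\cdot) = \sum_k E_k^{(i)} \cdot (E_k^{(j)})^\dagger$ with $\M_{ji} = \M_{ij}^\dagger$ immediate from the adjoint.

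\emph{Main obstacle.} The delicate point is the orthogonalization-of-environments step: from the bilinear vanishing conditions $\sum_k F_k^{(ia)} X (F_k^{(jb)})^\dagger = 0$ one must produce a \emph{single} unitary on $E$ simultaneously block-diagonalizing the isometry with respect to the input decomposition, rather than merely obtaining pairwise orthogonality relations. I expect to handle this by a Gram-matrix / polar-decomposition argument: define, for each $i$, the positive operator encoding the overlaps of the $i$-th "column of $F_k$'s" and show the cross-Gram operators between different $i$ vanish, so the ranges are orthogonal and a basis change on $E$ aligns them. One must also be careful that the stated form uses a \emph{common} index $k$ for all subchannels simultaneously — so the Kraus count of each $\N_i$ must be padded to a common value (allowing zero operators), which is harmless. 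The rest — trace preservation giving $\sum_k (E_k^{(i)})^\dagger E_k^{(i)} = \mathds{1}_{A_i}$ so each $\N_i$ is CPTP, and complete positivity being inherited — is bookkeeping.
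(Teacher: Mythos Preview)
Your sufficiency direction is fine and matches the paper. For necessity, however, you have badly overcomplicated matters and your stated ``main obstacle'' is a phantom. The paper's argument is a two-line positivity trick: take $O = P_{A_i}$, the projector onto $\mathcal{H}_{A_i}$. By the GDS property $\N(P_{A_i})$ is supported entirely in the $(i,i)$ block, so for $j\neq i$,
\[
0 \;=\; P_{B_j}\,\N(P_{A_i})\,P_{B_j} \;=\; \sum_k (P_{B_j}E_k P_{A_i})(P_{B_j}E_k P_{A_i})^\dagger,
\]
a sum of positive operators. Hence each term vanishes, i.e.\ $P_{B_j}E_k P_{A_i}=0$ for every $k$ and every $j\neq i$, which is exactly the block-diagonal form $E_k=\bigoplus_i E_k^{(i)}$ for the \emph{given} Kraus set. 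No Gram matrices, no polar decompositions, no rotation on the environment.

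Your own bilinear vanishing conditions already contain this: specialize to $(a,b)=(a,a)$, $(i,j)=(i,i)$ with $i\neq a$, and $X=\mathds{1}_{A_a}$, and you get $\sum_k F_k^{(ia)}(F_k^{(ia)})^\dagger=0$, killing the off-diagonal blocks immediately. Instead you veer into an environment-orthogonalization argument, which is not only unnecessary but also dangerous: applying a unitary $W$ on $E$ changes the Kraus representation, so at best you would conclude that \emph{some} Kraus set is block-diagonal, whereas the theorem (and the paper's proof) asserts this for \emph{every} Kraus set $\{E_k\}$. Once you know $V$ maps $\mathcal{H}_{A_i}$ into $\mathcal{H}_{B_i}\otimes\mathcal{H}_E$ --- which you state without justification --- you are already done, since that is precisely $P_{B_j}F_kP_{A_i}=0$; the subsequent rotation and the ``orthogonal environment subspaces'' are irrelevant to the claim.
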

\begin{proof}
  Suppose that the Kraus operators $\{E_k\}$ of a quantum channel $\N$ satisfy Eq.~\eqref{eqKraus}. Then, for every index $i = 0,\cdots,n$, we have $\sum_k E^{(i),\dagger}_k E_k^{(i)} = \mathds{1}$. Given an input operator $O$, the output operator $\N(O)$ is 
  \begin{equation*}
    \begin{aligned}
      \N(O) & = \sum_k \begin{pmatrix}
        E_{k}^{(0)} & & \\
        & \ddots & \\
        &  & E_{k}^{(n)}
        \end{pmatrix}
        \begin{pmatrix}
        O_{00} & \cdots & O_{0n} \\
        \vdots & \ddots & \vdots \\
        O_{n0} & \cdots & O_{nn}
    \end{pmatrix}
     \begin{pmatrix}
         E_{k}^{(0),\dagger} & & \\
        & \ddots & \\
        &  & E_{k}^{(n),\dagger}
      \end{pmatrix} \\
      & = \sum_k \begin{pmatrix}
        E_{k}^{(0)} O_{00} E_{k}^{(0),\dagger} & \cdots & E_{k}^{(0)} O_{0n} E_{k}^{(n),\dagger} \\
        \vdots & \ddots &\vdots \\
         E_{k}^{(n)} O_{n0} E_{k}^{(0),\dagger} & \cdots & E_{k}^{(n)} O_{nn} E_{k}^{(n),\dagger} \\
      \end{pmatrix} \\
      & =  \begin{pmatrix}
        \N_0(O_{00}) & \M_{01}(O_{01}) & \cdots & \M_{0n}(O_{0n}) \\
        \M_{10}(O_{10}) & \N_1(O_{11}) & \cdots & \M_{1n}(O_{1n}) \\
        \vdots & \vdots & \ddots & \vdots \\
        \M_{n0}(O_{n0}) & \M_{n1}(O_{n1}) & \cdots & \N_n(O_{nn})
    \end{pmatrix},
    \end{aligned}
  \end{equation*}
  where $\N_i(O_{ii}) = \sum_k E_{k}^{(i)} O_{ii} E_{k}^{(i),\dagger}$ are quantum channels and $\M_{ij}(O_{ij}) = \sum_k E_{k}^{(i)} O_{ij} E_{k}^{(j),\dagger}$ are linear maps. Thus, the quantum channel $\N$ preserves the direct sum decomposition.

  Now let $\N$ be a GDS channel with Kraus operators $\{E_k\}$. Define $P_{A_i}$ as the orthogonal projector onto the subspace $\mathcal{H}_{A_i}$ of system $A$, and $P_{B_j}$ as the orthogonal projector onto the subspace $\mathcal{H}_{B_j}$ of system $B$. Let the input operator be $O = P_{A_i}$. If $i\neq j$, we can find 
  \begin{equation*}
    \begin{aligned}
      0 = P_{B_j} \N(P_{A_i}) P_{B_j}^\dagger = \sum_k P_{B_j}E_{k} P_{A_i} P_{A_i}^\dagger E_{k}^\dagger P_{B_j}^\dagger  = \sum_{k} \big|P_{B_j}E_kP_{A_i}\big|^2\,,
    \end{aligned}
  \end{equation*}
  which implies that $P_{B_j}E_kP_{A_i} = 0$ for all $E_k$. Totally, taking over all $i\neq j$, we obtain the direct sum structure of $E_k$ as given in Eq.~\eqref{eqKraus}.
\end{proof}

The remarkable example called ``platypus channels'' introduced in~\cite{platypusTIT} is in fact a special GDS channel. The generalized platypus channel $\mathcal{O}_{\bds{\mu}}: A\to B$ with $d_A = d_B = d+1$ is defined by the following unitary $H$
\begin{equation*}
    H |0\rangle = \sum_{j=0}^{d-1} \sqrt{\mu_j} |j\rangle \otimes |j\rangle \,, \quad H |i\rangle = |d\rangle\otimes |i-1\rangle\,,\quad \text{for}\, 1\leq i \leq d\,,
\end{equation*}
where $\bds{\mu} = (\mu_0,\cdots,\mu_{d-1})$ is a probability vector. The Kraus operators of $\mathcal{O}_{\bds{\mu}}$ are given by $E_k = \sqrt{\mu_k}|k\rangle\langle 0| + |d\rangle\langle k+1|$ for all $k = 0,\cdots,d-1$. These Kraus operators $E_k$ can be rewritten as $E_k = E^{(0)}_k \oplus E^{(1)}_k$ with $E^{(0)}_k = \sqrt{\mu_k}|k\rangle_{B_0}\langle 0|_{A_0}$ and $E_k^{(1)} = |0\rangle_{B_1}\langle k|_{A_1}$, corresponding to two subchannels 
$$\N_0: A_0\to B_0,\quad \N_0(\rho_0) = \begin{pmatrix}
    \mu_0 & & \\
    & \ddots & \\
    & & \mu_{d-1}
\end{pmatrix}\,, \quad \N_1: A_1\to B_1,\quad \N_1(\rho_1) = |0\rangle\langle 0|$$ 
with $d_{A_0} = d_{B_1} = 1$, and $d_{B_0} = d_{A_1} = d$.

In general, neither GDS channels nor partially coherent direct sum channels are direct sum channels, since the first two classes of channels maintain coherent parts. On the other hand, by adding zero operators to the Kraus operators of the subchannels, one can construct a direct sum channel, which implies that a direct sum channel is a special type of GDS channel.

\subsection{B. Complementary channels and degradability for GDS channels}

The complementary channel of a quantum channel $\N$ with Kraus operators $\{E_k\}$ is given by
\begin{equation*}
  \N^c(\rho) = \sum_{k,k'} \tr\big(E_k\rho E_{k'}^\dagger\big)\, |k\rangle\langle k'|\,.
\end{equation*}
Given a GDS channel $\N$, according to Theorem~\ref{KrausSM} about its Kraus operators, there exist quantum channels $\{\N_i\}_{i=0}^n$ such that each channel $\{\N_i\}$ has Kraus operators $\{E_k^{(i)}\}$ to generate the Kraus operators of the GDS channel $\N$. For every input operator $O$, the output operator of the complementary channel for the GDS channel $\N$ is 
\begin{equation}\label{complementary}
  \N^c(O) = \sum_{k,k'} \tr(E_k O E_{k'}^\dagger)\, |k\rangle\langle k'| = \sum_{k,k'} \tr\Big[\sum_{i=0}^n E_{k}^{(i)} O_{ii} E_{k'}^{(i),\dagger} \Big]\, |k\rangle\langle k'| = \sum_{i=0}^n \N_i^c(O_{ii}).
\end{equation}

Recall a quantum channel $\B$ is said to be \emph{degradable} if there exists another channel $\mathcal{W}$ such that $\mathcal{W}\circ \B = \B^c$, while $\B$ is called \emph{antidegradable} if $\B^c$ is degradable. If subchannels $\N_i$ are all degradable, that is there exist channels $\mathcal{W}_i$ such that $\mathcal{W}_i \circ \N_i = \N^c_i$, then one can define a new channel $\mathcal{W}(R) = \sum_{i=0}^n \mathcal{W}_i(R_{ii})$, since all $\mathcal{W}_i$ are quantum channels. Then, utilizing Eq.~\eqref{complementary}, which gives the output operator of the complementary channel for a GDS channel $\N$, we obtain the following result that describes the degradability condition for GDS channels.
\begin{thm}\label{degradablity}
  A GDS channel $\N$ is degradable if and only if the subchannels $\{\N_i\}$ are all degradable.
\end{thm}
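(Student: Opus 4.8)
The plan is to derive both implications directly from the two structural facts already in hand: the Kraus form $E_k = \bigoplus_{i} E_k^{(i)}$ of Theorem~\ref{KrausSM}, and the resulting complementary channel $\N^c(O) = \sum_{i=0}^n \N_i^c(O_{ii})$ of Eq.~\eqref{complementary}. Throughout I write $\mathrm{Ad}_V(X) = VXV^\dagger$, use the block projectors $P_{A_i}, P_{B_i}$ from the proof of Theorem~\ref{KrausSM}, and let $V_i:\mathcal{H}_{A_i}\hookrightarrow\mathcal{H}_A$ and $W_i:\mathcal{H}_{B_i}\hookrightarrow\mathcal{H}_B$ be the inclusion isometries, so that $\mathrm{Ad}_{V_i}(\rho_i)$ is the operator with $\rho_i$ in the $i$-th block and zeros elsewhere, and likewise for $W_i$.

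For the ``if'' direction I would follow the construction already sketched: assuming each $\N_i$ is degradable via a channel $\mathcal{W}_i$ with $\mathcal{W}_i\circ\N_i = \N_i^c$, set $\mathcal{W}(R) = \sum_{i=0}^n \mathcal{W}_i(P_{B_i} R P_{B_i})$. The first step is to check that $\mathcal{W}$ is a quantum channel: each summand $\mathcal{W}_i\circ\mathrm{Ad}_{P_{B_i}}$ is completely positive as a composition of CP maps, and $\tr\mathcal{W}(R) = \sum_i \tr(P_{B_i} R P_{B_i}) = \tr\bigl(R\sum_i P_{B_i}\bigr) = \tr R$ because $\sum_i P_{B_i} = \mathds{1}_B$ and each $\mathcal{W}_i$ is trace preserving. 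The second step is that, since the $i$-th diagonal block of $\N(O)$ is $\N_i(O_{ii})$ while the pinching $R\mapsto\sum_i P_{B_i} R P_{B_i}$ annihilates the off-diagonal blocks, one gets $\mathcal{W}(\N(O)) = \sum_i \mathcal{W}_i(\N_i(O_{ii})) = \sum_i \N_i^c(O_{ii}) = \N^c(O)$ by Eq.~\eqref{complementary}; hence $\N$ is degradable.

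For the ``only if'' direction I would restrict a degrading channel of $\N$ to each block. Assume $\N$ is degradable with $\mathcal{D}\circ\N = \N^c$, and define $\mathcal{D}_i := \mathcal{D}\circ\mathrm{Ad}_{W_i}$, which is CPTP as a composition of CPTP maps. Feeding the GDS channel the input $\mathrm{Ad}_{V_i}(\rho_i)$ yields $\N_i(\rho_i)$ in the $i$-th output block and zero elsewhere, i.e. $\N(\mathrm{Ad}_{V_i}(\rho_i)) = \mathrm{Ad}_{W_i}(\N_i(\rho_i))$, while $\N^c(\mathrm{Ad}_{V_i}(\rho_i)) = \N_i^c(\rho_i)$ since only the $i$-th diagonal block survives in Eq.~\eqref{complementary}. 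Combining, $\mathcal{D}_i(\N_i(\rho_i)) = \mathcal{D}\bigl(\N(\mathrm{Ad}_{V_i}(\rho_i))\bigr) = \N^c(\mathrm{Ad}_{V_i}(\rho_i)) = \N_i^c(\rho_i)$, so $\mathcal{D}_i\circ\N_i = \N_i^c$ and every $\N_i$ is degradable.

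The only point that needs real care, rather than mechanical verification, is confirming that the candidate degraders are genuine channels and not merely CP maps: in the forward direction this relies on the block pinching being trace preserving precisely because $\sum_i P_{B_i} = \mathds{1}_B$, so that the individually trace-reducing pieces $\mathcal{W}_i\circ\mathrm{Ad}_{P_{B_i}}$ combine into a trace-preserving map; in the backward direction it relies on the inclusion isometries $W_i$ inducing CPTP maps. One small bookkeeping remark is that degradability of a subchannel $\N_i$ is phrased with respect to its own complementary channel, which differs from the $E$-valued $\N_i^c$ appearing in Eq.~\eqref{complementary} only by an isometry on the environment; since the complementary channel is defined only up to such an isometry, degradability is insensitive to this choice and the argument goes through unchanged. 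I would also note in passing that no analogous equivalence is expected for antidegradability, since the block-coherent maps $\M_{ij}$ can still transmit quantum information even when every $\N_i$ is antidegradable.
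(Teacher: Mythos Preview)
Your proposal is correct and follows essentially the same approach as the paper: both directions use the same constructions (the block-pinched degrader $\mathcal{W}(R)=\sum_i\mathcal{W}_i(R_{ii})$ in the forward direction, and the block-restricted degrader in the reverse direction), together with the complementary-channel formula~\eqref{complementary}. If anything you are slightly more careful than the paper, explicitly verifying the CPTP property of the candidate degraders and flagging the isometric freedom in the definition of $\N_i^c$; these are welcome but do not change the underlying argument.
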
 
\begin{proof}
  Suppose that the subchannels $\{\N_i\}$ are all degradable. Then there exist channels $\mathcal{W}_i$ such that $\mathcal{W}_i\circ \N_i = \N_i^c$. For every operator $R = (R_{ij})$ on the system $B$, we can define the channel $\mathcal{W}$ as
  \[
  \mathcal{W}(R) = \mathcal{W}_0(R_{00}) + \cdots + \mathcal{W}_n(R_{nn})\,.
  \]
  Using Eq.~\eqref{complementary}, for every operator $O$ on system $A$, we have 
  \begin{align*}
    \mathcal{W} \circ \N(O) & = \mathcal{W} \begin{pmatrix}
        \N_0(O_{00}) & \M_{01}(O_{01}) & \cdots & \M_{0n}(O_{0n}) \\
        \M_{10}(O_{10}) & \N_1(O_{11}) & \cdots & \M_{1n}(O_{1n}) \\
        \vdots & \vdots & \ddots & \vdots \\
        \M_{n0}(O_{n0}) & \M_{n1}(O_{n1}) & \cdots & \N_n(O_{nn})
    \end{pmatrix} \\
  & = \mathcal{W}_0\circ\N_0(O_{00}) + \cdots + \mathcal{W}_n\circ \N_n(O_{nn}) \\
  & = \N_0^c(O_{00}) + \cdots + \N_n^c(O_{nn}) = \N^c (O)\,,
  \end{align*}
  which implies the degradability of $\N$\,.

  Now if $\N$ is degradable, there exists a channel $\mathcal{W}$ so that $\mathcal{W}\circ\N = \N^c$\,. According to the expression of output operators of the channel $\N$, the channel $\mathcal{W}_i :=  \mathcal{W}|_{\mathcal{H}_{B_i}} = \mathcal{W}(P_{B_i}\cdot P_{B_i})$ acts from system $\mathcal{B}_i$ to the environment system $E$. Since every operator $R_{ii}$ on system $B_i$ can be embedded into the full system $B$ as $R$ satisfying $R_{ii} = P_{B_i} R P_{B_i}$. Thus
  \[
  \mathcal{W}_i \circ \N_i(O_{ii}) =  \mathcal{W} \begin{pmatrix}
  0 & & & &\\
  & \ddots & & &\\
  & & \N_i(O_{ii}) & & \\
  & & & \ddots &\\
  & & & & 0
  \end{pmatrix} = \N_i^c(O_{ii})\,,
  \]
  which shows that $\N_i$ is degradable for all $i = 0,\cdots, n$.
\end{proof}

One might wonder whether the antidegradability of a GDS channel $\N$ is also determined by the antidegradability of its subchannels $\N_i$. We give a negative answer to this question: even when all $\{\N_i\}$ are completely depolarizing channels and thus cannot transmit any information, the GDS channel $\N$ does have a positive quantum capacity.

\subsection{C. Dependence on implementations of subchannels}

It is well-known that the Kraus operators of a given quantum channel are not unique. If two Kraus operator sets $\{K_i\}$ and $\{K'_i\}$ give the same channel, there exists a unitary operator $U$ such that $K'_i = \sum_j U_{ij} K_j$ for all $i$. Suppose that $\{E_k^{(i)}\}$ are the Kraus operators corresponding to the Stinespring unitary  $V^{(i)} = \sum_k E_k^{(i)}\otimes |k\rangle$ of $\N_i$. Then, the Stinespring unitary of the corresponding GDS channel $\N$ can be chosen as 
\[
    V = V^{(0)} \oplus \cdots \oplus V^{(n)}\,.
\]
Now suppose that $\{F_{s}^{(i)}\}$ are also the Kraus operators of $\N_i$ and $U^{(i)}$ is the unitary so that $F_{s}^{(i)} = \sum_k U^{(i)}_{s k} E_{k}^{(i)}$. These Kraus operators of $\N_i$ can generate the other GDS channel $\widetilde{\N}$ with Kraus operators as $F_{s} = F_{s}^{(0)}\oplus \cdots \oplus  F_{s}^{(n)}$  according to Theorem~\ref{KrausSM}, which gives rise to the Stinespring unitary as
\[
\widetilde{V} = (\mathds{1}_{B_0} \otimes U^{(0)}) V^{(0)} \oplus \cdots \oplus (\mathds{1}_{B_n} \otimes U^{(n)}) V^{(n)} = \Big[ \bigoplus_{i=0}^n (\mathds{1}_{B_i}\otimes U^{(i)}) \Big]\, V\,.
\]
Thus the channel $\widetilde{\N}(\rho) = \tr_E(\widetilde{V}\rho \widetilde{V}^\dagger)$ is \emph{not} equal to the GDS channel $\N$ in general. The complementary channel of the GDS channel $\widetilde{\N}$ is formulated as 
\begin{align*}
\widetilde{\N}^c(O)  = \sum_{s,s'} \tr(F_s O F_{s'}^\dagger)|s\rangle\langle s'| = \sum_{s,s'} \Big[\sum_{i=0}^n \tr(F_s^{(i)}O_{ii} F_{s'}^{(i),\dagger})\Big]\, |s\rangle\langle s'| = \sum_{i=0}^n U^{(i)} \N_i^c(O_{ii}) U^{(i),\dagger}\,.
\end{align*}
Particularly, only when all these $U^{(i)} = U$ are identical, the Stinespring unitary $\widetilde{V} = (\mathds{1}\otimes U)\ V$, and hence $\widetilde{\N} = \N$. Otherwise, these two GDS channels are not equivalent, even in the information-theoretic sense~\cite{ChanEqui} (but their channel capacities may be identical).

Therefore, for a given set of subchannels $\{\N_i\}$, different choices of Kraus operators for these subchannels may give rise to different GDS channels. In this work, we assume that the subchannels have different Kraus operators; that is, for every $k$, $E_k^{(i)} \neq E_k^{(j)}$ for all $i,j$, even if the Kraus operators differ only by their ordering. If all these subchannels have the same Kraus operators $\{E_k^{(i)}\}$, the GDS channel $\N = \mathcal{I}_{n+1} \otimes \N_1$ is trivial. If some of the subchannels are with same Kraus operators and others are different, then the GDS channel $\N$ generated by all these subchannels is $\N = \mathcal{I}\otimes \N'$, where $\N'$ is the GDS channel generated by the subchannels with different Kraus operators. Here, we give an example of the same subchannels with different Kraus operators to generate different GDS channels.
\begin{ex}
    Let $\N_1 = \N_2 = \Delta: A\to B\,, \Delta(\rho) = \sum_{i=0}^1 |i\rangle\langle i|\rho|i\rangle\langle i|$ be two identical qubit completely dephasing channels from the qubit input system $A$ to the qubit output system $B$. There are two possible Kraus operator sets of the subchannel:
    \[
     \left\{E_0 = \begin{pmatrix}
         \frac{\sqrt{2}}{2} &   \\
           & \frac{\sqrt{2}}{2}
    \end{pmatrix},E_1 = \begin{pmatrix}
         \frac{\sqrt{2}}{2} &  \\
          & -\frac{\sqrt{2}}{2}
    \end{pmatrix} \right\}\,,\qquad  \left\{F_0 = \begin{pmatrix}
        1 &   \\
         & 0
    \end{pmatrix},F_1 = \begin{pmatrix}
         0 &  \\
         & 1
    \end{pmatrix} \right\}\,.
    \]
    \begin{itemize}
        \item[(1).] $\{E_0,E_1\}$ and $\{E_0,E_1\}$ can generate a GDS channel $\B_1(\sigma) = \mathcal{I}\otimes \Delta(\sigma)$ for all ququart input states $\sigma$.
     \item[(2).] $\{E_0,E_1\}$ and $\{E_1,E_0\}$ can generate a GDS channel $\B_2$, in the expression as 
    \[
    \B_2(\sigma) =\begin{pmatrix}
        \sigma_{00} & & \sigma_{02} &  \\
        & \sigma_{11} &  & -\sigma_{13} \\
        \sigma_{20} &  & \sigma_{22} & \\
         & -\sigma_{31} & & \sigma_{33}
    \end{pmatrix}\,,
    \]
   for all ququart input states $\sigma = \sum_{i,j=0}^3 \sigma_{ij}\,|i\rangle\langle j|$.
      \item[(3).] $\{E_0,E_1\}$ and $\{F_0,F_1\}$ can generate a GDS channel $\B_3$ in the expression as
    \[
    \B_3(\sigma) = \begin{pmatrix}
        \sigma_{00} & & \frac{\sigma_{02}}{\sqrt{2}} & \frac{\sigma_{03}}{\sqrt{2}} \\
        & \sigma_{11} & \frac{\sigma_{12}}{\sqrt{2}} & -\frac{\sigma_{13}}{\sqrt{2}} \\
        \frac{\sigma_{20}}{\sqrt{2}} & \frac{\sigma_{21}}{\sqrt{2}} & \sigma_{22} & \\
        \frac{\sigma_{30}}{\sqrt{2}} & -\frac{\sigma_{31}}{\sqrt{2}} & & \sigma_{33}
    \end{pmatrix}\,,
    \]
    for all ququart input states $\sigma = \sum_{i,j=0}^3 \sigma_{ij}\,|i\rangle\langle j|$.
    \end{itemize}
    It's clear that all these GDS channels are different, while $\B_2 = U\B_1 U^\dagger$ with the unitary $U = \begin{pmatrix}
        1 & & & \\
        & -1 & & \\
        & & 1 & \\
        & & & 1
    \end{pmatrix}\,,$ but $\B_3 \neq V\B_1 V^\dagger$ for any unitary $V$. Since the completely dephasing channel is degradable, the GDS channel $\B$ is also degradable, as we proved. Therefore, this example demonstrates that the structure of degradable channels is richer than we anticipated.
\end{ex}

\section{II. Estimations of Quantum Channel Capacities }

In this section, we present general results on the channel capacity of a GDS channel $\N$, including the coherent information $\Q(\N)$, lower bounds on the private information $P^{(1)}(\N)$ and Holevo information $C^{(1)}(\N)$, as well as upper bounds on quantum capacity $Q(\N)$.

\subsection{A. Coherent Information of a GDS channel}\label{Q1Bd}
The coherent information of a quantum channel $\N$ is defined as
\begin{equation}
  \Q(\N) = \max_{\rho} I_c(\rho,\N) \,,
\end{equation}
where $I_c(\rho,\N) := S\big(\N(\rho)\big) - S\big(\N^c(\rho)\big)$, and $S(\rho) = -\tr(\rho \log \rho)$ denotes the von Neumann entropy. Generally, coherent information is hard to evaluate due to the optimization over all possible input states. However, for a GDS channel $\N$, by utilizing the direct sum structure of the whole system, it is sufficient to consider the input state $\rho$ as a direct sum of states $\rho_i$ on the subspaces $\mathcal{H}_{A_i}$.

\begin{prop}
    Let $\N$ be a GDS channel generated by subchannels $\{\N_i\}_{i=0}^n$ with Kraus operators $\{E_k^{(i)}\}$. The coherent information of $\N$ is attained on the block-diagonal state.
\end{prop}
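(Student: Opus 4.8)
The plan is to show that for any input state $\rho$ on $\mathcal{H}_A = \bigoplus_i \mathcal{H}_{A_i}$, replacing $\rho$ by its "pinched" (block-diagonal) version $\rho_{\mathrm{diag}} = \bigoplus_i P_{A_i}\rho P_{A_i}$ does not decrease the coherent information $I_c(\rho, \N)$. The key structural input is Theorem~\ref{KrausSM} together with Eq.~\eqref{complementary}: since the Kraus operators of $\N$ have the form $E_k = \bigoplus_i E_k^{(i)}$, the complementary channel acts only on the diagonal blocks, $\N^c(\rho) = \sum_i \N_i^c(\rho_{ii})$, where $\rho_{ii} = P_{A_i}\rho P_{A_i}$. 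Hence $\N^c(\rho) = \N^c(\rho_{\mathrm{diag}})$ for every $\rho$, so the entropy of the environment output $S(\N^c(\rho))$ depends only on the diagonal blocks of $\rho$ and is unchanged under pinching. The entire argument therefore reduces to controlling the first term $S(\N(\rho))$.

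For the output entropy, I would argue that pinching can only increase $S(\N(\rho))$. First observe that $\N(\rho_{\mathrm{diag}})$ is exactly the pinching of $\N(\rho)$ with respect to the output block decomposition $\mathcal{H}_B = \bigoplus_j \mathcal{H}_{B_j}$: from Eq.~\eqref{OutputOpe}, $\N$ sends the block $\rho_{ij}$ to the block $\M_{ij}(\rho_{ij})$ sitting in the $(i,j)$ slot of the output, so killing the off-diagonal input blocks is the same as killing the off-diagonal output blocks of $\N(\rho)$, i.e. $\N(\rho_{\mathrm{diag}}) = \sum_j P_{B_j}\N(\rho)P_{B_j}$. Then I invoke the standard fact that a pinching channel (a projective measurement that is immediately forgotten, $\mathcal{P}(X) = \sum_j P_j X P_j$) is unital and self-adjoint, hence mixing-enhancing, so $S(\mathcal{P}(\sigma)) \ge S(\sigma)$ for any state $\sigma$; equivalently this is monotonicity of relative entropy / concavity of entropy applied to $\sigma \mapsto \sum_j P_j \sigma P_j$, or the operator-convexity argument $S(\mathcal{P}(\sigma)) - S(\sigma) = D(\sigma \| \mathcal{P}(\sigma)) \ge 0$. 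Applying this with $\sigma = \N(\rho)$ gives $S(\N(\rho_{\mathrm{diag}})) \ge S(\N(\rho))$.

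Combining the two observations: $I_c(\rho_{\mathrm{diag}}, \N) = S(\N(\rho_{\mathrm{diag}})) - S(\N^c(\rho_{\mathrm{diag}})) \ge S(\N(\rho)) - S(\N^c(\rho)) = I_c(\rho, \N)$, so the maximum in $\Q(\N) = \max_\rho I_c(\rho,\N)$ is attained on a block-diagonal state. One should also check that $\rho_{\mathrm{diag}}$ is a legitimate density operator (it is: pinching preserves positivity and trace), and note that since the block structure of the purifying reference can be taken to match, the argument extends verbatim to the entropy-exchange formulation $I_c(\rho,\N) = S(\N(\rho)) + I_c$ with a reference system if one prefers that convention. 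I would then remark that once restricted to block-diagonal inputs $\rho = \bigoplus_i q_i \rho_i$ (with $q_i \ge 0$, $\sum_i q_i = 1$), the coherent information decomposes into a classical mixing term plus a convex combination of the subchannel contributions, which is precisely what sets up the subsequent upper and lower bounds on $\Q(\N)$; but that decomposition is a separate computation and not needed for the present claim.

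The main obstacle is essentially bookkeeping rather than a deep step: one must be careful that the output pinching $\sum_j P_{B_j}(\cdot)P_{B_j}$ really coincides with $\N$ applied to the input pinching — this is where Eq.~\eqref{OutputOpe}/Theorem~\ref{KrausSM} is used, and it is the only place the GDS structure (as opposed to a generic channel) enters the output-entropy half of the argument. After that, the inequality $S(\mathcal{P}(\sigma)) \ge S(\sigma)$ for a pinching is completely standard, and the invariance $S(\N^c(\rho)) = S(\N^c(\rho_{\mathrm{diag}}))$ is immediate from Eq.~\eqref{complementary}. So the proof is short; the only thing to get right is the identification of input-pinching with output-pinching under $\N$.
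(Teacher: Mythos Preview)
Your proof is correct and follows essentially the same route as the paper's: both use Eq.~\eqref{complementary} to see that $S(\N^c(\cdot))$ is unchanged under block-pinching, and both use that $\N(\rho_{\mathrm{diag}})$ is the block-diagonal part of $\N(\rho)$ (from the block-preserving structure of $\N$) together with the fact that pinching increases entropy. The only cosmetic difference is how that last fact is justified --- the paper phrases it via majorization of eigenvalue vectors (the Ky Fan result $\bds{\lambda}(\tilde{\sigma}) \prec \bds{\lambda}(\sigma)$ for a block-diagonal truncation $\tilde{\sigma}$), while you use the equivalent relative-entropy/unital-channel formulation $S(\mathcal{P}(\sigma)) - S(\sigma) = D(\sigma\|\mathcal{P}(\sigma)) \ge 0$; these are interchangeable standard arguments.
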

\begin{proof}
Since the output state of the complementary channel for a GDS channel, as given in Eq.~\eqref{complementary}, depends only on the states within the subspaces $\mathcal{H}_{A_i}$. For a given input state $\rho = \begin{pmatrix}
  p_0\rho_0 & C_{01} & \cdots & C_{0n}\\
  C_{01}^\dagger & p_1\rho_1 & \cdots & C_{1n} \\
  \vdots & \vdots & \ddots & \vdots\\
  C_{0n}^\dagger & C_{1n}^\dagger & \cdots & p_n\rho_n
\end{pmatrix}$, the output state $\N^c(\rho)$ is same as $\N^c(\tilde{\rho})$ with $\tilde{\rho} = \begin{pmatrix}
   p_0\rho_0 & &  & \\
   & p_1\rho_1 &  &  \\
   &  & \ddots & \\
   &  &  & p_n\rho_n
\end{pmatrix}$ is the block-diagonal part of $\rho$. 

Next, we introduce some basic knowledge about majorization theory. Recall that for two real vectors $\bds{x} = (x_1,\cdots,x_n)$ and $\bds{y} = (y_1,\cdots,y_n)$ satisfying $\sum_{i=1}^n x_i = \sum_{i=1}^n y_i$, we call $\bds{x}$ is majorized by $\bds{y}$, denoted as $\bds{x} \prec \bds{y}$, if $\sum_{i=1}^k x_{(i)} \leq \sum_{i=1}^k y_{(i)}$ for all $k$ where $x_{(i)}$ denotes $i$-th largest entry of $\bds{x}$. The majorization relation is a partial order on the set of real vectors, and it can be used to compare the entropies of two states. In particular, if $\bds{x} \prec \bds{y}$, then $H(\bds{x}) \geq H(\bds{y})$, where $H(\bds{x})$ is the Shannon entropy of the vector $\bds{x}$. According to an elementary but useful result in matrix majorization theory~\cite{KyFan1954,Marshall1980}:
$$\Big(p_0\bds{\lambda}(\rho_0),\, \cdots, p_n \bds{\lambda}(\rho_n)\Big) = \bds{\lambda}(\tilde{\rho}) \prec \bds{\lambda}(\rho)\,,$$ 
where $\bds{\lambda}(\rho)$ is the eigenvalue vector of $\rho$, we have $S\big(\N(\rho)\big) \leq S\big(\N(\tilde{\rho})\big)$ since the Shannon entropy for the eigenvalue vector of a quantum state equals its von Neumann entropy. Therefore $I_c(\rho,\N) \leq I_c(\tilde{\rho},\N)$, which implies the coherent information of a GDS channel $\N$ is attained on the block-diagonal state $\tilde{\rho}$.  

\end{proof}

Now suppose that $\tilde{\rho} = p_0\rho_0 \oplus \cdots \oplus  p_n \rho_n$, where $\rho_i$ is the quantum states on the subsystem $A_i$, and $\bds{p} = (p_0,\cdots,p_n)$ is probability vector. Let $\omega_i = \N^c_i(\rho_i)$, the coherent information of $\N$ satisfies
\begin{equation}\label{Q1N}
  \begin{aligned}
    \Q(\N) & = \max_{\bds{p},\rho_i} I_c(\tilde{\rho},\N) \\
     & = \max_{\bds{p},\rho_i} \bigg[ H(\bds{p}) + \sum_{i=0}^n\, p_iI_c(\rho_i,\N_i) - \chi\big(\{p_i,\omega_i\}\big) \bigg]\,,
  \end{aligned}
\end{equation}
where the non-negative term $\chi\big(\{p_i,\omega_i\}\big) = S\big(\sum_{i=0}^n \, p_i\omega_i \big) - \sum_{i=0}^n\, p_i S(\omega_i)$ is the Holevo information of the ensemble $\{p_i,\omega_i\}$, which quantifying the concavity of von Neumann entropy with respect to the states $\{\omega_i\}$ and the probability vector $\bds{p}$. 

Therefore, the bounds on the coherent information of a GDS channel $\N$ can be derived by analyzing the Holevo information $\chi\big(\{p_i,\omega_i\}\big)$. There are several bounds on the Holevo information have been proposed in~\cite{Audenaert14,BoundsConcavity,Shirokov19}. Each of these bounds performs optimally in different situations. For example, Ref.~\cite{BoundsConcavity} proved a tight bound on Holevo information for ensembles with two elements, i.e., $\{(x,\omega_0),(1-x,\omega_1)\}$. Here we choose the upper bound proved in~\cite{BoundsConcavity,Audenaert14}, which states 
\begin{equation}\label{BdConcavity} - 
  \frac{1}{2}\sum_{i=0}^n \, p_i ||\omega_i - \bar{\omega}||_1^2 \leq \chi\big(\{p_i,\omega_i\}\big) \leq \frac{H(\bds{p})}{2} \max_{i,j} ||\omega_i - \omega_j||_1 \,,
\end{equation}
where $||A||_1 = \tr \sqrt{A^\dagger A}$ is the trace norm, and $\bar{\omega} = \sum_i\, p_i\omega_i$ is the mixture of the ensemble $\{p_i,\omega_i\}$. Based on this bound, the following lower bounds on the coherent information of a GDS channel $\N$ hold.
\begin{prop}\label{PropLBdQ1}
  Let $\N$ be a GDS channel with subchannels $\{\N_i\}_{i=0}^n $. Then, the coherent information of $\N$ can be bounded from below as
  \begin{equation}\label{lowerBdQ1}
    \begin{aligned}
      \Q(\N) & \geq \max_{\bds{p},\rho_i} \bigg[ H(\bds{p}) + \sum_{i=0}^n\, p_iI_c(\rho_i,\N_i) - \frac{H(\bds{p})}{2} \max_{i,j} ||\omega_i - \omega_j||_1 \bigg]\\ 
      & \geq \log \Big[\sum_i 2^{\Q(\N_i)}\Big] - \frac{\log (n+1)}{2} \min_{\rho_i: \text{opt}} \max_{i,j} ||\omega_i-\omega_j||_1\,,
    \end{aligned}
  \end{equation}
  where the minimum is taken over all optimal states $\rho_i$ such that $I_c(\rho_i,\N_i) = \Q(\N_i)$, and $\omega_i = \N^c_i(\rho_i)$ are the output states of the complementary channels $\N^c_i$.
\end{prop}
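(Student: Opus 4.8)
The plan is to start from the exact expression~\eqref{Q1N} for the coherent information of a GDS channel, which already exhibits $\Q(\N)$ as a maximization of $H(\bds{p}) + \sum_i p_i I_c(\rho_i,\N_i) - \chi(\{p_i,\omega_i\})$ over probability vectors $\bds{p}$ and input states $\rho_i$. The first step is to invoke the upper bound on the Holevo information from~\eqref{BdConcavity}, namely $\chi(\{p_i,\omega_i\}) \leq \tfrac{H(\bds p)}{2}\max_{i,j}\|\omega_i-\omega_j\|_1$. Substituting this into~\eqref{Q1N} and keeping the maximization in place immediately yields the first inequality of~\eqref{lowerBdQ1}, since replacing the exact $\chi$ by a larger quantity can only decrease the objective, but we are taking a max, so the max over the new objective is a valid lower bound on $\Q(\N)$.

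For the second, cleaner inequality, I would no longer optimize fully but instead restrict the maximization to a convenient choice. I would fix each $\rho_i$ to be an optimal state for its subchannel, i.e. one with $I_c(\rho_i,\N_i) = \Q(\N_i)$ (these exist by compactness), so that $\sum_i p_i I_c(\rho_i,\N_i) = \sum_i p_i \Q(\N_i)$. Then the bracket becomes $H(\bds p) + \sum_i p_i\Q(\N_i) - \tfrac{H(\bds p)}{2}\max_{i,j}\|\omega_i-\omega_j\|_1$, and I would choose the probability vector $\bds p$ to maximize $H(\bds p) + \sum_i p_i\Q(\N_i)$ alone. A standard Lagrange/Gibbs-variational computation shows that $\max_{\bds p}\big[H(\bds p)+\sum_i p_i\Q(\N_i)\big] = \log\big[\sum_i 2^{\Q(\N_i)}\big]$, attained at $p_i \propto 2^{\Q(\N_i)}$. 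For the error term I would bound $H(\bds p)\leq \log(n+1)$ crudely, and then, because the $\rho_i$ were an arbitrary optimal choice, take the infimum over all such optimal tuples $\{\rho_i\}$ of $\max_{i,j}\|\omega_i-\omega_j\|_1$ with $\omega_i=\N_i^c(\rho_i)$; any particular optimal choice already gives a valid lower bound, hence so does the best one. Combining these two pieces gives exactly the second line of~\eqref{lowerBdQ1}.

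The one subtlety I would be careful about is that the $\bds p$ chosen in the second step need not be the one appearing in the error term's $H(\bds p)$ — but since we simply upper-bound that $H(\bds p)$ by $\log(n+1)$ uniformly in $\bds p$, the decoupling is harmless: we are lower-bounding the bracket by $\big(H(\bds p)+\sum_i p_i\Q(\N_i)\big) - \tfrac{\log(n+1)}{2}\max_{i,j}\|\omega_i-\omega_j\|_1$ for every $\bds p$, then maximizing the first group over $\bds p$. The main (and really only) obstacle is getting the variational identity $\max_{\bds p}[H(\bds p)+\sum_i p_i a_i]=\log\sum_i 2^{a_i}$ with the right logarithm base consistent with the paper's convention for $S$ and $H$; since the paper uses $\log$ throughout for both the entropy and the capacity formulas, I would just verify that $2^{\Q(\N_i)}$ indeed matches that convention and carry the identity through. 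Everything else is direct substitution into~\eqref{Q1N} and monotonicity of the max.
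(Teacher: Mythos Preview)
Your proposal is correct and follows essentially the same route as the paper's proof: invoke the Holevo-information upper bound from~\eqref{BdConcavity} inside the exact expression~\eqref{Q1N} to get the first inequality, then restrict to optimal $\rho_i$, use the Gibbs/Lagrange identity $\max_{\bds p}[H(\bds p)+\sum_i p_i\Q(\N_i)]=\log\sum_i 2^{\Q(\N_i)}$, crudely bound $H(\bds p)\le\log(n+1)$ in the penalty term, and minimize that penalty over the optimal tuples. Your explicit handling of the decoupling between the $\bds p$ that maximizes the main term and the $H(\bds p)$ appearing in the error is a welcome clarification that the paper leaves implicit.
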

\begin{proof}
  We only need to show the second inequality, as the first one follows directly from Eq.~\eqref{BdConcavity}. Note that  
  \begin{align*}
     \max_{\bds{p},\rho_i} \Big[ H(\bds{p}) + \sum_{i=0}^n\, p_iI_c(\rho_i,\N_i)\Big] = \max_{\bds{p}} \Big[ H(\bds{p}) + \sum_{i=0}^n\, p_i \Q(\N_i)\Big] \,,
  \end{align*}
  where $\bds{p}$ denotes a probability distribution, and $\rho_i$ are input states. This expression can be treated as a function of $\bds{p}$, and its maximum can be solved via the Lagrange multiplier method. After some direct computation, the maximum value of the above optimization is $\log \sum_i 2^{\Q(\N_i)}$. Since $H(\bds{p}) \leq \log (n+1)$, we can always choose each $\rho_i$ to be optimal and $\widehat{\omega_i} = \N_i^c(\rho_i)$ so that  
  $$\max_{i,j} ||\widehat{\omega_i}-\widehat{\omega_j}||_1 = \min_{\rho_i: \text{opt}} \max_{i,j} ||\omega_i-\omega_j||_1.$$
\end{proof}

The coherent information of the GDS channel $\N$ generated by subchannels $\{\N_i\}$ has a trivial lower bound as 
\begin{equation}\label{trivialLBdQ1}
  \Q(\N) \geq \max_i\, \Q(\N_i) \,,
\end{equation}
which can be obtained by setting $p_i = 0$ or $1$ in the first inequality in Eq.~\eqref{lowerBdQ1}. It's worth noting that the lower bound in Eq.~\eqref{lowerBdQ1} can be tighter than Eq.~\eqref{trivialLBdQ1} for some special cases. For example, if all $\N_i$ have the same coherent information, i.e., $\Q(\N_i) = \Q(\N_0)$, the lower bound in Eq.~\eqref{lowerBdQ1} gives 
\[
\Q(\N) \geq \Q(\N_0) + \log(n+1) -\frac{\log(n+1)}{2} \min_{\rho_i: \text{opt}} \max_{i,j} ||\omega_i-\omega_j||_1  \geq \Q(\N_0) \,,
\]
while Eq.~\eqref{trivialLBdQ1} only gives the trivial lower bound $\Q(\N) \geq \Q(\N_0) \geq 0$. Furthermore, the lower bound in Eq.~\eqref{lowerBdQ1} implies that the vanishing of coherent information for a GDS channel $\N$ requires not only the vanishing of coherent information for subchannels $\N_i$, but also the trace distance between the output states $\omega_i$ and $\omega_j$ can be as large as $2$, i.e., $\omega_i$ is orthogonal to $\omega_j$ for all $i\neq j$. Therefore, we can choose special subchannels $\{\N_i\}$ such that $Q(\N_i) = 0$ but the trace distance between the output states $\N_i^c(\rho_i)$ and $\N_j^c(\rho_j)$ is always less than $2$, to produce a GDS channel $\N$ with positive coherent information as well as quantum capacity.

On the other hand, the coherent information $\Q(\N)$ has a trivial upper bound as 
\begin{equation}\label{trivialUBdQ1}
  \Q(\N) \leq  \log \Big[\sum_i 2^{\Q(\N_i)}\Big]\,,
\end{equation}
which can be obtained by ignoring the Holevo information term $\chi\big(\{p_i,\omega_i\}\big)$ in the expression of $\Q(\N)$ as in Eq.~\eqref{Q1N}. This upper bound is tight when the GDS channel is generated by two subchannels, with $\N_1$ being the identity channel and $\N_0$ having zero quantum capacity, admitting a pure fixed point~\cite{PCDSC}. In general, the upper bound can be improved by analyzing the Holevo information term $\chi\big(\{p_i,\omega_i\}\big)$. For example, when these subchannels have the same coherent information $\Q(\N_i) = \Q(\N_0)$, using Eq.~\eqref{Q1N} and \eqref{BdConcavity}, we can obtain the following upper bound on the coherent information of a GDS channel $\N$.

\begin{prop}\label{PropUBQ1}
  Let $\N$ be a GDS channel with subchannels $\{\N_i\}_{i=0}^n$. Suppose that all subchannels have the same coherent information $\Q(\N_i) = \Q(\N_0)$. The coherent information of $\N$ can be bounded above as
  \begin{equation}\label{upperBdQ1}
    \begin{aligned}
      \Q(\N) & \leq \Q(\N_0) + \log \Big[ \sum_{i=0}^n \, 2^{-\frac{\min_{\rho_i:\text{opt}} || \omega_i - \bar{\omega}||_1^2}{2}}\Big] \\
    & \leq \log(n+1) + \Q(\N_0) = \log \Big[\sum_{i=0}^n 2^{\Q(\N_i)}\Big]  \,.
    \end{aligned}
  \end{equation}
\end{prop}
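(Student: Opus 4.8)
The plan is to start from the exact expression for the coherent information of a GDS channel derived in Eq.~\eqref{Q1N}, namely $\Q(\N) = \max_{\bds{p},\rho_i}[H(\bds{p}) + \sum_i p_i I_c(\rho_i,\N_i) - \chi(\{p_i,\omega_i\})]$, and then lower-bound the Holevo term $\chi(\{p_i,\omega_i\})$ using the left inequality of Eq.~\eqref{BdConcavity}, $\chi(\{p_i,\omega_i\}) \geq -\tfrac{1}{2}\sum_i p_i\|\omega_i - \bar{\omega}\|_1^2$. Substituting this into the expression for $\Q(\N)$ and using the hypothesis $I_c(\rho_i,\N_i) \leq \Q(\N_i) = \Q(\N_0)$ for all $i$, we get
\begin{equation*}
  \Q(\N) \leq \Q(\N_0) + \max_{\bds{p},\rho_i}\Big[H(\bds{p}) - \tfrac{1}{2}\sum_i p_i \|\omega_i - \bar\omega\|_1^2\Big]\,.
\end{equation*}
The remaining work is to bound the optimization on the right-hand side.

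For the inner maximization, first fix the states $\rho_i$ (hence the $\omega_i = \N_i^c(\rho_i)$) and optimize over the probability vector $\bds{p}$. Writing $a_i := \tfrac{1}{2}\|\omega_i - \bar\omega\|_1^2$ — though note $\bar\omega$ itself depends on $\bds{p}$, so one should instead lower-bound $\sum_i p_i \|\omega_i-\bar\omega\|_1^2$ by a $\bds{p}$-independent quantity, or more simply just drop it down to its minimum over optimal $\rho_i$. The cleanest route: since each $\|\omega_i - \bar\omega\|_1^2$ is a sum over $i$ of a nonnegative term, for every fixed choice one has $H(\bds{p}) - \tfrac12\sum_i p_i\|\omega_i-\bar\omega\|_1^2 \leq \log\sum_i 2^{-\tfrac12\|\omega_i-\bar\omega\|_1^2}$ by the standard ``max of $H(\bds p) - \sum_i p_i c_i$ equals $\log\sum_i 2^{-c_i}$'' Lagrange-multiplier identity — applied with $c_i$ that are themselves bounded below by their value at an optimal $\rho_i$. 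Then choosing each $\rho_i$ to be an optimal state and taking the configuration that minimizes each $\|\omega_i-\bar\omega\|_1$ yields the first inequality of Eq.~\eqref{upperBdQ1}. For the second inequality, simply discard the exponentials' negative exponents: $\sum_i 2^{-\tfrac12\|\omega_i-\bar\omega\|_1^2} \leq \sum_i 2^0 = n+1$, so $\log[\cdots] \leq \log(n+1)$, giving $\Q(\N) \leq \Q(\N_0) + \log(n+1) = \log[\sum_i 2^{\Q(\N_i)}]$ since all $\Q(\N_i)$ are equal.

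The main subtlety — the step I'd be most careful about — is the $\bds{p}$-dependence of $\bar\omega = \sum_i p_i\omega_i$ inside the term being optimized; one cannot naively treat the $\|\omega_i-\bar\omega\|_1^2$ as constants when applying the Lagrange identity for $\max_{\bds p}[H(\bds p) - \sum_i p_i c_i]$. The fix is to first bound $\|\omega_i - \bar\omega\|_1$ from below uniformly (over the probability simplex and over optimal $\rho_i$) using $\min_{\rho_i:\text{opt}}\|\omega_i-\bar\omega\|_1$, obtaining constants, and only then invoke the identity; alternatively one verifies directly that the relevant concave-in-$\bds p$ function attains the claimed bound. Everything else — the Lagrange-multiplier computation, the monotonicity used to pass to optimal $\rho_i$, and the final $2^{-x}\leq 1$ estimate — is routine.
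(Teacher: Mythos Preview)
Your proposal follows the paper's proof essentially verbatim: start from Eq.~\eqref{Q1N}, lower-bound $\chi$ via the left inequality in Eq.~\eqref{BdConcavity}, replace each $I_c(\rho_i,\N_i)$ by $\Q(\N_0)$, apply the Lagrange identity $\max_{\bds p}[H(\bds p)-\sum_i p_i c_i]=\log\sum_i 2^{-c_i}$, and then use $2^{-x}\le 1$ for the second inequality. The subtlety you flag --- that $\bar\omega$ depends on $\bds p$, so the $c_i$ are not constants when the Lagrange identity is invoked --- is equally present and equally unaddressed in the paper's own proof, which simply passes to $\min_{\rho_i:\text{opt}}\|\omega_i-\bar\omega\|_1^2$ and proceeds.
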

\begin{proof}
  According to Eq.~\eqref{BdConcavity}, the coherent information of the GDS channel $\N$ satisfies 
  \begin{align*}
    \Q(\N) & \leq  \max_{\bds{p},\rho_i} \bigg[ H(\bds{p}) + \sum_{i=0}^n\, p_iI_c(\rho_i,\N_i) - \frac{1}{2}\sum_{i=0}^n\, p_i ||\omega_i - \bar{\omega}||_1^2 \bigg] \\
    & \leq \max_{\bds{p}} \bigg[ H(\bds{p}) + \sum_{i=0}^n\, p_i\Q(\N_i) -\frac{1}{2}\sum_{i=0}^n\, p_i \min_{\rho_i:\text{opt}} ||\omega_i - \bar{\omega}||_1^2 \bigg]\,.
  \end{align*}
  The maximum value over $\bds{p}$ can be obtained using the Lagrange multiplier method, giving
  \begin{align*}
    \Q(\N) & \leq \Q(\N_1) + \log \Big[ \sum_{i=0}^n \, 2^{-\frac{\min_{\rho_i:\text{opt}} || \omega_i - \bar{\omega}||_1^2}{2}}\Big] \\
    & \leq \log(n+1) + \Q(\N_1) = \log \Big[\sum_{i=0}^n 2^{\Q(\N_i)}\Big]\,,
    \end{align*}
    where the second inequality is for $\min_{\rho_i:\text{opt}} || \omega_i - \bar{\omega}||_1^2 \geq 0$ and the equality holds only when all $\omega_i = \bar{\omega}$.
\end{proof}

\begin{ex}
   Let two subchannels be the qubit phase-flip $\mathcal{Z}_p(\rho) = (1-p)\rho + p\, \sigma_Z\rho\sigma_Z$ and the bit-flip channels $\mathcal{X}_p(\rho) = (1-p)\rho + p\, \sigma_X\rho\sigma_X$, where $\sigma_X$ and $\sigma_Z$ are Pauli matrices, with their Kraus operators given by $\{\sqrt{1-p}\mathds{1}\,, \sqrt{p}\sigma_Z\}$ and $\{\sqrt{1-p}\mathds{1}\,, \sqrt{p}\sigma_X\}$ respectively. The GDS channel generated by these two subchannels is given by
    \[
    \N(\omega) = (1-p)\omega + p \begin{pmatrix}
        \omega_{00} & -\omega_{01} & \omega_{03} & \omega_{02} \\
        -\omega_{10} & \omega_{11} &  -\omega_{13} & -\omega_{12} \\
       \omega_{30} & -\omega_{31} & \omega_{33} & \omega_{32} \\
       \omega_{20} & -\omega_{21} & \omega_{23} & \omega_{22}
    \end{pmatrix}\,,
    \]
    for any input state $\omega = \sum_{i,j=0}^3 \omega_{ij} |i\rangle\langle j|$. Since both the phase-flip and bit-flip channels are degradable~\cite{Structure}, the GDS channel $\N$ is also degradable. Therefore, its quantum capacity has a single-letter formula, $Q(\N) = \Q(\N)$. According to the optimal states of both $\mathcal{Z}_p$ and $\mathcal{X}_p$ are the completely mixed state $\mathds{1}/2$, and $\mathcal{Z}_p^c(\frac{\mathds{1}}{2}) = \mathcal{X}_p^c(\frac{\mathds{1}}{2})$, we obtain 
    \begin{equation}
        Q(\N) = \Q(\N) = 2 - H_b(p)\,,
    \end{equation}
    where $H_b(p) = -p\log p - (1-p)\log(1-p)$ is a binary entropy function. The quantum capacity of the GDS channel $\N$ is strictly greater than that of the channel obtained by the coherent control of the same subchannels, which has the lower bound $p - H_b(p) + H_b(\frac{1-p}{2})$~\cite{CoheControl}. Even though the input-output dimension of the GDS channel is four, dividing the channel capacity by $2$ to get $1-H_b(p)/2$, the resulting value remains larger than the quantum capacity generated by coherent control. This result reveals that our GDS construction possesses communication advantages compared to coherent control of subchannels~\cite{CoheControl} and quantum switch~\cite{salek2018}.
\end{ex}

\subsection{B. Lower bounds on private and Holevo information for a GDS channel}\label{P1C1Bd}

While the quantum capacity measures the maximum rate for asymptotically transmitting quantum information with vanishing errors through a quantum channel $\B$, the private $P$ and classical $C$ capacities portray the highest rates for the quantum channel $\B$ transmitting private and classical information with asymptotically vanishing errors. The private and classical capacities for a quantum channel $\B$ are given by the regularized expressions~\cite{LSD3,RegExpP} and~\cite{HSW1,HSW2}
\begin{equation*}
    P(\B) = \lim_{n\rightarrow \infty} \frac{P^{(1)}(\B^{\otimes n})}{n}\,, \quad C(\B) = \lim_{n\rightarrow \infty} \frac{C^{(1)}(\B^{\otimes n})}{n}\,.
\end{equation*}
The private information $P^{(1)}$ and Holevo information $C^{(1)}$ are defined by taking the maximum of $I_p$ and $\chi$ over all possible ensembles $\mathfrak{E} = \{p_i,\rho_i\}$, respectively; that is
\begin{equation*}
    P^{(1)} = \max_{\{p_i,\rho_i\}} I_p\Big(\{p_i,\rho_i\}, \B\Big)\,, \quad  C^{(1)} = \max_{\{p_i,\rho_i\}} \chi\Big(\{p_i,\rho_i\}, \B\Big)\,,
\end{equation*}
where $I_p$ and $\chi$ of the channel $\B$ with respect to an ensemble $\{p_i,\rho_i\}$ are defined as 
\begin{equation*}
    \begin{aligned}
        I_p\Big(\{p_i,\rho_i\}, \B\Big) & = I_c\Big(\sum_i p_i\rho_i\,, \B\Big) - \sum_i p_i I_c(\rho_i,\B)\,, \\
        \chi\Big(\{p_i,\rho_i\}, \B\Big) & = S\Big(\B\big(\sum_i p_i\rho_i\big)\Big) - \sum_i p_i S\big(\B(\rho_i)\big)\,.
    \end{aligned}
\end{equation*}

Inspired by the optimal state achieving the coherent information of a GDS channel $\N$ is block-diagonal, we now use the ensemble consisting of block-diagonal states to derive lower bounds on the private information $P^{(1)}(\N)$ and the Holevo information $C^{(1)}(\N)$ for a GDS channel $\N$. 
\begin{prop}\label{PropLBdP1}
  Let $\N$ be a GDS channel generated by subchannels $\{\N_i\}_{i=0}^n$. The private information $P^{(1)}(\N)$ of $\N$ can be bounded below as
  \begin{equation}\label{lowerBdP1}
   \begin{aligned}
       P^{(1)}(\N) & \geq \max_{\bds{p},\mathfrak{E}'_i} \bigg[H(\bds{p}) + \sum_{i=0}^n p_i I_p\big(\mathfrak{E}'_i,\N_i\big)  - \chi\big(\{p_i,\omega_i\}\big) \bigg] \\
      & \geq \log \Big[\sum_{i=0}^n 2^{P^{(1)}(\N_i)}\Big] - \frac{\log (n+1)}{2} \min_{\rho_i: \text{opt}} \max_{i,j} ||\omega_i-\omega_j||_1\,,
    \end{aligned}
  \end{equation}
  where $\mathfrak{E}'_i = \{r_j^{(i)},\rho_j^{(i)}\}$ is an ensemble for the subchannel $\N_i$, and $\omega_i = \N_i^c(\sum r_j^{(i)}\rho_{j}^{(i)})$. The minimum is taken over all optimal ensembles $\{\mathfrak{E}'_i\}$ such that $I_p(\mathfrak{E}'_i,\N_i) = P^{(1)}(\N_i)$, and $\rho_i = \sum_j r_j^{(i)}\rho_j^{(i)}$ is the mixture state of the optimal ensemble.
\end{prop}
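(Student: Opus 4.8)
The plan is to build a good ensemble for the GDS channel $\N$ out of good ensembles for the subchannels, exactly mirroring the block-diagonal structure that was already exploited for the coherent information in Eq.~\eqref{Q1N}. Concretely, suppose each $\N_i$ has an ensemble $\mathfrak{E}'_i = \{r_j^{(i)},\rho_j^{(i)}\}$ with average state $\rho_i = \sum_j r_j^{(i)}\rho_j^{(i)}$. Fix a probability vector $\bds{p} = (p_0,\dots,p_n)$ and form the ensemble $\mathfrak{E}$ on the full input system $A$ whose members are the block-embedded states $\rho_j^{(i)} \oplus 0_{A_i^c}$ (i.e.\ $\rho_j^{(i)}$ supported on $\mathcal{H}_{A_i}$, zero elsewhere) with probabilities $p_i r_j^{(i)}$. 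The average state of $\mathfrak{E}$ is the block-diagonal state $\tilde\rho = \bigoplus_i p_i\rho_i$. The first step is then to write $I_p(\mathfrak{E},\N) = I_c(\tilde\rho,\N) - \sum_i p_i \sum_j r_j^{(i)} I_c(\rho_j^{(i)}\oplus 0,\N)$ and simplify each piece using the already-established facts about GDS channels: $\N^c(\rho_j^{(i)}\oplus 0) = \N_i^c(\rho_j^{(i)})$ from Eq.~\eqref{complementary}, and $\N(\rho_j^{(i)}\oplus 0) = \N_i(\rho_j^{(i)})\oplus 0$, so $I_c(\rho_j^{(i)}\oplus 0,\N) = I_c(\rho_j^{(i)},\N_i)$.

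The second step is to expand $I_c(\tilde\rho,\N)$ via Eq.~\eqref{Q1N}, namely $I_c(\tilde\rho,\N) = H(\bds p) + \sum_i p_i I_c(\rho_i,\N_i) - \chi(\{p_i,\omega_i\})$ with $\omega_i = \N_i^c(\rho_i)$. Subtracting the second-step expression and regrouping, the $\sum_i p_i I_c(\rho_i,\N_i) - \sum_i p_i \sum_j r_j^{(i)} I_c(\rho_j^{(i)},\N_i)$ terms collapse precisely to $\sum_i p_i \, I_p(\mathfrak{E}'_i,\N_i)$ by the definition of the private information of the ensemble $\mathfrak{E}'_i$. This yields the first inequality in Eq.~\eqref{lowerBdP1} after taking the maximum over $\bds p$ and over the subchannel ensembles $\mathfrak{E}'_i$. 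The third step is to derive the second, cleaner inequality: choose each $\mathfrak{E}'_i$ to be optimal for $\N_i$ so that $I_p(\mathfrak{E}'_i,\N_i) = P^{(1)}(\N_i)$, bound $\chi(\{p_i,\omega_i\}) \le \frac{H(\bds p)}{2}\max_{i,j}\|\omega_i-\omega_j\|_1$ using the upper bound in Eq.~\eqref{BdConcavity}, and then maximize $H(\bds p) + \sum_i p_i P^{(1)}(\N_i)$ over $\bds p$ by the Lagrange-multiplier computation already used in the proof of Proposition~\ref{PropLBdQ1}, which gives $\log\sum_i 2^{P^{(1)}(\N_i)}$; finally use $H(\bds p)\le \log(n+1)$ on the remaining term and pick the optimal ensembles minimizing $\max_{i,j}\|\omega_i-\omega_j\|_1$ over all optimal choices.

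The main obstacle is bookkeeping rather than conceptual: one must be careful that the two optimizations — over $\bds p$ on one hand and over the subchannel ensembles $\mathfrak{E}'_i$ (which determine both $I_p(\mathfrak{E}'_i,\N_i)$ and the states $\omega_i$ entering $\chi$) on the other — are decoupled correctly so that the Lagrange step is legitimate, and that in passing to the second line one does not lose too much by bounding $H(\bds p)$ by $\log(n+1)$ in the $\chi$ term while using the exact maximizing $\bds p$ in the main term. The subtlety is that the $\bds p$ maximizing $H(\bds p) + \sum_i p_i P^{(1)}(\N_i)$ need not be uniform, so the substitution $H(\bds p) \to \log(n+1)$ in the subtracted $\chi$-bound is a genuine (but valid, since it only makes the lower bound smaller) relaxation; this is exactly the same maneuver as in Proposition~\ref{PropLBdQ1}, so I would simply invoke that argument verbatim. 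A minor check is that $\rho_j^{(i)}\oplus 0_{A_i^c}$ is a legitimate density operator on $\mathcal{H}_A$ and that $\{p_i r_j^{(i)}, \rho_j^{(i)}\oplus 0\}$ is a valid ensemble, which is immediate.
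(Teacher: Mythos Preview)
Your proposal is correct and follows essentially the same route as the paper: build the block-diagonal ensemble $\mathfrak{E}=\{p_i r_j^{(i)},\rho_j^{(i)}\oplus 0_{A_i^c}\}$, reduce $I_p(\mathfrak{E},\N)$ via Eq.~\eqref{Q1N} and Eq.~\eqref{complementary} to $H(\bds p)+\sum_i p_i I_p(\mathfrak{E}'_i,\N_i)-\chi(\{p_i,\omega_i\})$, then apply the bound in Eq.~\eqref{BdConcavity} and the same Lagrange/$H(\bds p)\le\log(n+1)$ relaxations as in Proposition~\ref{PropLBdQ1}. Your explicit discussion of the decoupling subtlety is a nice addition but does not change the argument.
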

\begin{proof}
  Let $\mathfrak{E}_i = \{r_j^{(i)}, \rho_j^{(i)} \oplus 0_{A_i^c}\}$ be an ensemble with quantum states $\rho_j^{(i)}$ on subsystem $A_i$, and $0_{A_i^c}$ is the zero operator acting on all subsystems except $A_i$. Define a new ensemble $\mathfrak{E} = \{p_i,\mathfrak{E}_i\}$ to be the probable union of these ensembles $\{\mathfrak{E}_i\}$. Let $\rho_i = \sum_j\, r_j^{(i)} \rho_j^{(i)}$, $\omega_i = \N_i^c(\rho_i)$ and $\rho = \bigoplus_{i=0}^n p_i \rho_i$, we have 
\begin{align*}
    I_p\Big(\mathfrak{E},\N \Big) & = I_c(\rho,\N) - \sum_{i=0}^n p_i \sum_j r_j^{(i)} I_c(\rho_j^{(i)},\N_i) \\
    & = H(\bds{p}) + \sum_{i=0}^n p_i I_p\big(\mathfrak{E}'_i,\N_i\big)  - \chi\big(\{p_i,\omega_i\}\big)\,,
\end{align*}
where $\mathfrak{E}'_i = \{r_j^{(i)}, \rho_j^{(i)} \}$ is the induced ensemble for the subchannel $\N_i$. Thus, we have 
  \begin{equation}
    \begin{aligned}
       P^{(1)}(\N) & \geq \max_{\bds{p},\mathfrak{E}'_i} \bigg[H(\bds{p}) + \sum_{i=0}^n p_i I_p\big(\mathfrak{E}'_i,\N_i\big)  - \chi\big(\{p_i,\omega_i\}\big) \bigg] \\
     & \geq  \max_{\bds{p},\mathfrak{E}'_i} \bigg[H(\bds{p}) + \sum_{i=0}^n p_i I_p\big(\mathfrak{E}'_i,\N_i\big) - \frac{H(\bds{p})}{2} \max_{i,j} ||\omega_i - \omega_j||_1 \bigg]\\ 
      & \geq \log \Big[\sum_i 2^{P^{(1)}(\N_i)}\Big] - \frac{\log (n+1)}{2} \min_{\rho_i: \text{opt}} \max_{i,j} ||\omega_i-\omega_j||_1\,.
    \end{aligned}
  \end{equation}

\end{proof}

\begin{prop}\label{PropLBdC1}
 Let $\N$ be a GDS channel generated by subchannels $\{\N_i\}_{i=0}^n$. The Holevo information $C^{(1)}(\N)$ of $\N$ can be bounded below as
  \begin{equation}\label{lowerBdC1}
    C^{(1)}(\N) \geq \log \Big[\sum_{i=0}^n 2^{C^{(1)}(\N_i)}\Big] \,.
  \end{equation}
\end{prop}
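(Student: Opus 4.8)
The plan is to mirror the ensemble construction used for the private information in Proposition~\ref{PropLBdP1}, but exploit the fact that the Holevo information involves no environment system, so that the block-diagonal bookkeeping closes exactly with no penalty term. First I would fix, for each subchannel $\N_i$, an optimal ensemble $\mathfrak{E}'_i = \{r_j^{(i)},\rho_j^{(i)}\}$ attaining $\chi(\mathfrak{E}'_i,\N_i) = C^{(1)}(\N_i)$, and embed each state into the full input system via $\rho_j^{(i)} \mapsto \rho_j^{(i)} \oplus 0_{A_i^c}$, where $0_{A_i^c}$ is the zero operator on all summands other than $\mathcal{H}_{A_i}$. For an arbitrary probability vector $\bds{p} = (p_0,\dots,p_n)$ I would then assemble the combined ensemble $\mathfrak{E} = \{\, p_i r_j^{(i)},\; \rho_j^{(i)} \oplus 0_{A_i^c}\,\}$ and compute $\chi(\mathfrak{E},\N)$.

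The key structural input, immediate from Theorem~\ref{KrausSM}, is that a GDS channel maps a state supported on a single summand $\mathcal{H}_{A_i}$ to a state supported on $\mathcal{H}_{B_i}$: every diagonal block but the $i$-th acts on a zero operator and every off-diagonal map $\M_{kl}$ is applied to a zero block, so $\N(\rho_j^{(i)}\oplus 0_{A_i^c}) = \N_i(\rho_j^{(i)}) \oplus 0_{B_i^c}$. Consequently the average output state is the block-diagonal operator $\bigoplus_{i=0}^n p_i\,\N_i(\rho_i)$ with $\rho_i = \sum_j r_j^{(i)}\rho_j^{(i)}$, whose von Neumann entropy splits as $H(\bds{p}) + \sum_i p_i S\big(\N_i(\rho_i)\big)$; subtracting the average individual output entropy $\sum_{i}p_i\sum_j r_j^{(i)} S\big(\N_i(\rho_j^{(i)})\big)$ yields, after regrouping, the clean identity $\chi(\mathfrak{E},\N) = H(\bds{p}) + \sum_{i=0}^n p_i\,\chi(\mathfrak{E}'_i,\N_i) = H(\bds{p}) + \sum_{i=0}^n p_i\, C^{(1)}(\N_i)$. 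In contrast to the coherent- and private-information cases in Proposition~\ref{PropLBdQ1} and Proposition~\ref{PropLBdP1}, no Holevo penalty term $\chi(\{p_i,\omega_i\})$ appears, precisely because the output direct sum $\bigoplus_i\mathcal{H}_{B_i}$ carries no shared environment.

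Finally I would optimize the resulting bound $C^{(1)}(\N)\ge \max_{\bds{p}}\big[H(\bds{p}) + \sum_i p_i\,C^{(1)}(\N_i)\big]$ over $\bds{p}$ by a Lagrange-multiplier argument; the stationarity condition forces $p_i \propto 2^{C^{(1)}(\N_i)}$, and substituting this choice back gives the claimed value $\log\big[\sum_{i=0}^n 2^{C^{(1)}(\N_i)}\big]$. There is essentially no obstacle here: the argument is a specialization of the private-information proof with the penalty dropped, and the only points requiring care are the support-preservation of single-summand states (a one-line consequence of the Kraus form) and the elementary entropy identity for block-diagonal operators; the $\bds{p}$-optimization is the same convex maximization already invoked in Propositions~\ref{PropLBdQ1}--\ref{PropUBQ1}.
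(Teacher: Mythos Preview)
Your proposal is correct and follows essentially the same approach as the paper: both construct the block-supported ensemble $\mathfrak{E}=\{p_ir_j^{(i)},\rho_j^{(i)}\oplus 0_{A_i^c}\}$, use the direct sum structure to obtain $\chi(\mathfrak{E},\N)=H(\bds{p})+\sum_i p_i\,\chi(\mathfrak{E}'_i,\N_i)$, and then optimize over $\bds{p}$ via the same Lagrange computation. Your write-up is in fact slightly more explicit than the paper's in justifying the support preservation $\N(\rho_j^{(i)}\oplus 0_{A_i^c})=\N_i(\rho_j^{(i)})\oplus 0_{B_i^c}$ and the block-diagonal entropy identity.
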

\begin{proof}
 Let $\mathfrak{E}_i = \{r_j^{(i)}, \rho_j^{(i)} \oplus 0_{A_i^c}\}$ be an ensemble with quantum states $\rho_j^{(i)}$ on subsystem $A_i$, and $0_{A_i^c}$ is the zero operator acting on all subsystems except $A_i$. Define a new ensemble $\mathfrak{E} = \{p_i,\mathfrak{E}_i\}$ to be the probable union of these ensembles $\{\mathfrak{E}_i\}$. Let $\rho_i = \sum_j\, r_j^{(i)} \rho_j^{(i)}$ and $\rho = \bigoplus_{i=0}^n p_i \rho_i$, we have
\begin{align*}
    \chi\Big(\mathfrak{E},\N\Big) & = S\big[\N(\rho)\big] - \sum_{i=0}^n p_i \sum_j r_j^{(i)} \, S\big[\N_i(\rho_j^{(i)})\big] \\
    & = H(\bds{p}) + \sum_{i=0}^n p_i\, \chi(\mathfrak{E}'_i,\N_i)\,,
  \end{align*}
  where $\mathfrak{E}'_i = \{r_j^{(i)}, \rho_j^{(i)} \}$ is the induced ensemble for the subchannel $\N_i$. Thus, we have 
  \begin{align*}
    C^{(1)}(\N) & \geq \max_{\bds{p},\mathfrak{E}'_i} \bigg[H(\bds{p}) + \sum_{i=0}^n p_i\, \chi(\mathfrak{E}'_i,\N_i) \bigg] \\
    & = \log \Big[\sum_{i=0}^n 2^{C^{(1)}(\N_i)}\Big] \,.
  \end{align*}
\end{proof}

It remains unclear whether the ensemble consisting of block-diagonal states attains the exact values of the private and Holevo information for a generic GDS channel. However, as shown in the last section, for a special class of GDS channels generated by completely depolarizing subchannels, the ensemble consisting of block-diagonal states indeed achieves both the private and Holevo informations. Furthermore, for these special GDS channels, the private capacity equals the private information, and the classical capacity equals the Holevo information. 

On the other hand, since one can take the block dephasing channel 
$$\Delta(O) = \begin{pmatrix}
O_{00} & & \\
& \ddots & \\
& & O_{nn}
\end{pmatrix}\,,$$ 
such that $\Delta\circ \N = \N_0\oplus\cdots \oplus \N_n$ for a GDS channel $\N$, where $\N_0\oplus\cdots \oplus \N_n$ is the direct sum of subchannels $\{\N_i\}$. The coherent, private, and Holevo information for the direct sum channel $\N_0\oplus\cdots \oplus \N_n$ have been derived in Refs.~\cite{DirectSumC,UnboundP} as below.
\begin{prop}[\cite{DirectSumC,UnboundP}]
  Let $\{\N_i\}$ be a class of quantum channels. The coherent, private, and Holevo information for the direct sum channel $\N_0\oplus\cdots \oplus \N_n$ are given by
  \begin{align*}
    \Q(\N_0\oplus\cdots \oplus \N_n) & =\max_i\, \Q(\N_i)\,, \\
    P^{(1)}(\N_0\oplus\cdots \oplus \N_n) & = \max_i\, P^{(1)}(\N_i) \,, \\   
    C^{(1)}(\N_0\oplus\cdots \oplus \N_n) & = \log \Big[\sum_{i=0}^n 2^{C^{(1)}(\N_i)}\Big]\,.
  \end{align*}
\end{prop}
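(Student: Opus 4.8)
The plan is to use that $\N_0\oplus\cdots\oplus\N_n$ is a GDS channel — the one generated by $\{\N_i\}$ after padding each Kraus set with zeros — but one whose environment carries a classical ``which-block'' flag. Taking the Stinespring isometry $V=\bigoplus_i V^{(i)}$ built from minimal dilations $V^{(i)}:A_i\to B_iE_i$, the key structural facts are: (i) the direct sum channel and its complement annihilate off-diagonal input blocks, since for $i\ne i'$ the cross terms $V^{(i)}O_{ii'}V^{(i')\dagger}$ vanish under both $\tr_E$ and $\tr_B$ because the subsystems $B_i$ (resp.\ $E_i$) are mutually orthogonal; and (ii) on a block-diagonal input $\tilde{\rho}=\bigoplus_i p_i\rho_i$ one gets $(\N_0\oplus\cdots\oplus\N_n)(\tilde{\rho})=\bigoplus_i p_i\N_i(\rho_i)$ and $(\N_0\oplus\cdots\oplus\N_n)^c(\tilde{\rho})=\bigoplus_i p_i\N_i^c(\rho_i)$, so the von Neumann entropy of each output splits as $H(\bds{p})+\sum_i p_i S(\cdot)$ because eigenvalues of a block-diagonal operator concatenate. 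By (i), for every one of the three quantities one may restrict the input state(s) to be block-diagonal without loss.

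For the coherent information, (ii) makes the $H(\bds{p})$ contributions to $S(\N(\tilde{\rho}))$ and $S(\N^c(\tilde{\rho}))$ cancel, leaving $I_c(\tilde{\rho},\N_0\oplus\cdots\oplus\N_n)=\sum_i p_i I_c(\rho_i,\N_i)\le\max_i\Q(\N_i)$; equivalently, the Holevo defect $\chi(\{p_i,\omega_i\})$ in Eq.~\eqref{Q1N} equals $H(\bds{p})$ here, as the $\omega_i=\N_i^c(\rho_i)$ have mutually orthogonal supports. Concentrating all the weight on an optimal block saturates the bound, giving $\Q(\N_0\oplus\cdots\oplus\N_n)=\max_i\Q(\N_i)$.

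For $P^{(1)}$ and $C^{(1)}$, write a block-diagonal ensemble as $\mathfrak{E}=\{q_j,\sigma_j\}$ with $\sigma_j=\bigoplus_i p_j^{(i)}\rho_j^{(i)}$, put $P_i=\sum_j q_j p_j^{(i)}$, and group terms exactly as in the proofs of Propositions~\ref{PropLBdP1} and~\ref{PropLBdC1}: this yields $I_p(\mathfrak{E},\N_0\oplus\cdots\oplus\N_n)=\sum_i P_i\,I_p(\mathfrak{E}_i,\N_i)$ and $\chi(\mathfrak{E},\N_0\oplus\cdots\oplus\N_n)=\big[H(\bds{P})-\sum_j q_j H(\bds{p}_j)\big]+\sum_i P_i\,\chi(\mathfrak{E}_i,\N_i)$ for the induced conditional ensembles $\mathfrak{E}_i=\{q_j p_j^{(i)}/P_i,\rho_j^{(i)}\}$, where $\bds{p}_j=(p_j^{(0)},\dots,p_j^{(n)})$ and $\bds{P}=(P_0,\dots,P_n)$. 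The private case parallels the coherent one — the classical flag Holevo term is absorbed in $I_c(\bar{\rho},\cdot)-\sum_j q_j I_c(\sigma_j,\cdot)$ and hence does not appear — so $I_p\le\max_i P^{(1)}(\N_i)$, matched by a single-block ensemble. For the classical capacity the flag term $H(\bds{P})-\sum_j q_j H(\bds{p}_j)$ does \emph{not} cancel; bounding it above by $H(\bds{P})$ (saturated when each $\sigma_j$ is supported on a single block, so $\bds{p}_j$ is a point mass) and bounding $\chi(\mathfrak{E}_i,\N_i)\le C^{(1)}(\N_i)$ reduces everything to $\max_{\bds{P}}\big[H(\bds{P})+\sum_i P_i C^{(1)}(\N_i)\big]$, the standard log-sum-exp optimization solved by $P_i\propto 2^{C^{(1)}(\N_i)}$ with value $\log\sum_i 2^{C^{(1)}(\N_i)}$; the matching lower bound is Proposition~\ref{PropLBdC1}.

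I expect the crux to be conceptual rather than computational: explaining why the ``which-block'' classical register enters the three capacities so asymmetrically — invisible to $\Q$ and $P^{(1)}$, where its entropy cancels against the complementary side, but boosting $C^{(1)}$ by up to $H(\bds{P})$ — and, relatedly, arguing carefully that the optimal classical strategy is to both spread the input weight across blocks as $P_i\propto 2^{C^{(1)}(\N_i)}$ \emph{and} keep each individual ensemble state confined to one block (forcing $\sum_j q_j H(\bds{p}_j)=0$). The block-diagonal reduction, while routine, also relies on the cross-term-vanishing observation and is worth stating cleanly.
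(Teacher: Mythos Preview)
The paper does not give its own proof of this proposition; it is quoted verbatim from the cited references and used as a known input. So there is nothing to compare against directly.

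Your argument is correct and self-contained. The key structural point you isolate --- that for the \emph{ordinary} direct sum each Kraus operator $F_{i,k}$ is supported on a single block, so both $\tr_E$ and $\tr_B$ of the cross terms $F_{i,k}\,O\,F_{i',k'}^\dagger$ vanish when $i\ne i'$ --- is exactly what distinguishes $\N_0\oplus\cdots\oplus\N_n$ from a general GDS channel (where the complement in Eq.~\eqref{complementary} is a \emph{sum}, not a direct sum). Once that is in place, the block-diagonal reduction and the entropy splittings $S(\bigoplus_i p_i\sigma_i)=H(\bds{p})+\sum_i p_iS(\sigma_i)$ give precisely your formulas $I_c=\sum_i p_i I_c(\rho_i,\N_i)$, $I_p=\sum_i P_i I_p(\mathfrak{E}_i,\N_i)$, and $\chi=[H(\bds{P})-\sum_j q_j H(\bds{p}_j)]+\sum_i P_i\chi(\mathfrak{E}_i,\N_i)$. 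The asymmetry you emphasize --- the ``which-block'' entropy cancels against its complementary copy in $\Q$ and $P^{(1)}$ but survives in $C^{(1)}$ --- is the whole conceptual point, and your log-sum-exp optimization for $C^{(1)}$ is the same Lagrange computation the paper invokes in Proposition~\ref{PropLBdQ1}. The saturation argument for $C^{(1)}$ (each ensemble state confined to one block so that $\sum_j q_j H(\bds{p}_j)=0$, with weights $P_i\propto 2^{C^{(1)}(\N_i)}$) is stated cleanly and is correct.
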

Therefore, the data processing inequality gives rise to that the coherent, private and Holevo information of a GDS channel $\N$ generated by subchannel $\{\N_i\}$ can be bounded below by $\Q,P^{(1)},C^{(1)}$ of the direct sum channel $\N_0\oplus\cdots \oplus \N_n$, respectively. Although Proposition~\ref{PropLBdC1} does not provide a better lower bound on the Holevo information of a GDS channel $\N$ than $C^{(1)}(\N_0\oplus\cdots \oplus \N_n)$, the lower bounds on coherent and private information provided in Propositions~\ref{PropLBdQ1} and \ref{PropLBdP1} can be tighter than the coherent and private information of $\N_0\oplus\cdots \oplus \N_n$. In particular, when all subchannels $\N_i$ have same coherent information $\Q(\N_0) = \Q(\N_i)$, the discussion after Eq.~\eqref{trivialLBdQ1} has already mentioned that Proposition~\ref{PropLBdQ1} provides a tighter lower bound. Especially, Eq.~\eqref{lowerBdP1} is tighter than $P^{(1)}(\N_0\oplus\cdots \oplus \N_n)$ when the private informations of all $\N_i$ are equal.   

\subsection{C. Upper bounds on quantum capacity for a GDS channel}\label{QBd}

Quantum capacity quantifies the maximum rate of quantum information that can be transmitted through a quantum channel with arbitrarily low probability of error. However, due to quantum effects, the mathematical characterization of quantum capacity is more involved than the classical channel capacity, as it is given by the following regularized expression~\cite{LSD1,LSD2,LSD3}
\begin{equation}\label{regexpSM}
    Q(\B) = \lim_{n\rightarrow \infty} \frac{Q^{(1)}(\B^{\otimes n})}{n}\,. 
\end{equation}

When channel $\B$ is degradable, the quantum capacity is equal to the coherent information $Q(\B) = \Q(\B)$. According to Theorem~\ref{degradablity}, the GDS channel $\N$ is degradable if and only if all subchannels $\N_i$ are degradable. In this case, the quantum capacity of a GDS channel $\N$ can be computed by the expression in Eq.~\eqref{Q1N}. 

For a non-degradable channel, however, the quantum capacity cannot be equal to the coherent information but is instead given by the regularized expression in Eq.~\eqref{regexpSM}. In this section, we will derive an analytic upper bound on the quantum capacity for a generic GDS channel $\N$. The tool we use is the well-known ``transposition bound''~\cite{TranspositionBound}, which states that the quantum capacity of a quantum channel $\B: A' \to B$ is upper bounded by $Q(\B) \leq \log ||\mathcal{T}\circ \B||_{\diamond}$, where $\mathcal{T}$ denotes the transposition map, and $||\cdot||_{\diamond}$ is the diamond norm, defined for a linear map $\B$ as  
\[
||\B||_{\diamond} = \sup_{X_{AA'}} \{||\mathcal{I}_{A}\otimes \B(X)||_1\,,\, ||X||_1 \leq 1\} \,.
\] 
In particular, the transposition bound can be represented as the following SDP~\cite{Fang2021}
\begin{equation}\label{TransBdSM}
  ||\mathcal{T}\circ \B||_{\diamond} = \min \{y| Y_{AB}\pm \mathscr{C}_{\B}^{T_B} \geq 0\,,\, \tr_B Y_{AB} \leq y\mathds{1}\}\,,
\end{equation}
where $\mathscr{C}_{\B} = \mathcal{I}_A \otimes \B(|\Phi\rangle\langle\Phi|)$ is the Choi matrix of the channel $\B$ with the non-normed maximum entangled state $|\Phi\rangle = \sum_{i} |ii\rangle_{AA'}$, and $T_B$ is the partial transposition on the system $B$.

We now provide an upper bound on the quantum capacity for a GDS channel $\N$ in terms of the transposition bound on the quantum capacity for subchannels $\N_i$. In particular, denotes the Choi matrix for the linear map $\M_{ij}$ in the expression of the GDS channel $\N$ as
\begin{equation}\label{ChoiMSM}
\mathscr{C}_{\M_{ij}} = \sum_{s=0}^{d_{A_i}-1}\sum_{t=0}^{d_{A_j}-1} |s\rangle\langle t|\otimes \sum_k E_{k}^{(i)}|s\rangle\langle t|E_{k}^{(j),\dagger}\,,
\end{equation}
where $\{|s\rangle\}_{i=0}^{d_{A_i}-1}$ and $\{|t\rangle\}_{j=0}^{d_{A_j}-1}$ are the orthonormal bases of the subspaces $\mathcal{H}_{A_i}$ and $\mathcal{H}_{A_j}$ respectively. Moreover, we have $\mathscr{C}_{\M_{ij}}^\dagger = \mathscr{C}_{\M_{ji}}$. Since $d_{A_i}$ may not be equal to $d_{A_j}$, the map $\M_{ij}$ does not act on the square matrices, and its Choi matrix $\mathscr{C}_{\M_{ij}}$ can  not be Hermitian. Recall the matrix absolute value $|M| = \sqrt{M^\dagger M}$ is defined for any matrix $M$, we have following Lemma.
\begin{lem}\label{Lem1}
  Let $M$ be a matrix with its absolute value operator $|M|$. For any two positive numbers $a$ and $b$, we have 
  \begin{equation}
    \pm\Big(M + M^\dagger \Big) \leq \frac{b}{a} |M| + \frac{a}{b} |M^\dagger|\,.
  \end{equation}
\end{lem}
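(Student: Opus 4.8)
The plan is to prove the operator inequality
\[
\pm\bigl(M + M^\dagger\bigr) \le \frac{b}{a}\,|M| + \frac{a}{b}\,|M^\dagger|
\]
by reducing it to a completion-of-squares argument. First I would use the polar decomposition $M = U|M|$, where $U$ is a partial isometry; this is the cleanest route even when $M$ is rectangular, since then $|M^\dagger| = U|M|U^\dagger$ and $M^\dagger = |M|U^\dagger$. The key algebraic identity I would write down is, for the choice $\lambda = \sqrt{b/a}$,
\[
0 \le \bigl(\lambda |M|^{1/2} \mp \tfrac{1}{\lambda}|M|^{1/2}U^\dagger\bigr)^\dagger\bigl(\lambda |M|^{1/2} \mp \tfrac{1}{\lambda}|M|^{1/2}U^\dagger\bigr),
\]
where the operator $|M|^{1/2}$ is the positive square root of $|M|$. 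Expanding this manifestly positive expression gives
\[
0 \le \lambda^2 |M| \mp \bigl(|M|^{1/2}\cdot|M|^{1/2}U^\dagger + U|M|^{1/2}\cdot|M|^{1/2}\bigr) + \frac{1}{\lambda^2} U|M|U^\dagger,
\]
and recognizing $|M|^{1/2}\cdot|M|^{1/2} = |M|$, so that $|M|U^\dagger = M^\dagger$ and $U|M| = M$, together with $U|M|U^\dagger = |M^\dagger|$, this rearranges to exactly the claimed bound with $\lambda^2 = b/a$.

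The main subtlety — and the step I expect to require the most care — is the rectangular case, i.e. when $\M_{ij}$ maps between subsystems of different dimension so that $M$ need not be a square matrix. Here one must be careful that $U$ is only a partial isometry ($U^\dagger U$ is the projector onto $\mathrm{supp}\,|M|$, not the identity), and that $|M|^{1/2}U^\dagger$ is well-defined. The identities $U|M|^{1/2}\cdot|M|^{1/2} = U|M| = M$ and $|M|^{1/2}\cdot|M|^{1/2}U^\dagger = |M|U^\dagger = M^\dagger$ still hold because $|M|$ is supported exactly where $U$ acts isometrically, and $U|M|U^\dagger = |U|M|^{1/2}(U|M|^{1/2})^\dagger$ has the same nonzero spectrum as $|M|$ while being an operator on the target space, which is precisely $|M^\dagger| = \sqrt{MM^\dagger}$. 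I would verify this last identity explicitly from the singular value decomposition $M = \sum_k \sigma_k |u_k\rangle\langle v_k|$, giving $|M| = \sum_k \sigma_k|v_k\rangle\langle v_k|$ and $|M^\dagger| = \sum_k \sigma_k|u_k\rangle\langle u_k|$, which makes all the manipulations transparent.

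An alternative, perhaps more robust, presentation avoids polar decomposition entirely: work in the singular value basis and observe that $M + M^\dagger$ restricted to each two-dimensional subspace $\mathrm{span}\{|u_k\rangle,|v_k\rangle\}$ (or just the relevant block structure) is controlled by $2\times 2$ matrix inequalities $\pm\begin{pmatrix}0 & \sigma_k\\ \sigma_k & 0\end{pmatrix} \le \begin{pmatrix}\frac{a}{b}\sigma_k & 0\\ 0 & \frac{b}{a}\sigma_k\end{pmatrix}$, each of which holds since the difference has nonnegative trace and nonnegative determinant $\sigma_k^2 - \sigma_k^2 = 0$. Summing these blockwise recovers the global inequality. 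I would likely present the completion-of-squares version in the main text for brevity and note that it is exactly the operator analogue of the scalar AM–GM-type bound $\pm 2\,\mathrm{Re}(z) \le \frac{b}{a}|z| + \frac{a}{b}|z|$ with $ab$ balancing. The downstream use is then immediate: applying the lemma to $M = \mathscr{C}_{\M_{ij}}^{T_B}$ with $a,b$ chosen to balance the two trace terms feeds directly into the SDP characterization~\eqref{TransBdSM} of the transposition bound and yields Theorem~\ref{UpBdQN}.
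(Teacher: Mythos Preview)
Your proof is correct. The paper's own argument is essentially your ``alternative'' route: it writes the singular value decomposition $M = \sum_l s_l\,\bds{x}_l\bds{y}_l^\dagger$, so that $|M| = \sum_l s_l\,\bds{y}_l\bds{y}_l^\dagger$ and $|M^\dagger| = \sum_l s_l\,\bds{x}_l\bds{x}_l^\dagger$, and then sums the rank-one positivity inequalities $(a\bds{x}_l \pm b\bds{y}_l)(a\bds{x}_l \pm b\bds{y}_l)^\dagger \ge 0$ over $l$, weighted by $s_l$. This is exactly your $2\times 2$ block argument, stated as a rank-one completion of squares rather than a trace/determinant check.

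Your primary route via polar decomposition $M=U|M|$ and the single global square $(\lambda|M|^{1/2}\mp\lambda^{-1}|M|^{1/2}U^\dagger)^\dagger(\cdots)\ge 0$ is a genuinely different packaging: instead of summing $d$ rank-one inequalities, you produce the full operator bound in one stroke. The trade-off is that the paper's SVD argument is self-evidently valid for rectangular $M$ (the vectors $\bds{x}_l$ and $\bds{y}_l$ simply live in orthogonal subspaces of the ambient $A\otimes B$), whereas your polar approach obliges you to spend a paragraph checking that $U|M|U^\dagger=|M^\dagger|$ and that the partial-isometry identities survive---which you do correctly. Either version is fine; the paper's is marginally shorter because it sidesteps that verification.
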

\begin{proof}
 The matrix $M$ and its Hermitian conjugate $M^\dagger$ have singular value decompositions
  \[
  M = \sum_l s_l \bds{x}_l \bds{y}_l^\dagger\,, \quad M^\dagger = \sum_l s_l \bds{y}_l \bds{x}_l^\dagger\,,
  \]
  where $s_l$ are the singular values of $M$, and $\bds{x}_l$  ($\bds{y}_l$) are the left (right) singular vectors. Then the absolute value operators of $M$ and $M^\dagger$ are 
  \[
  |M| = \sum_l s_l \bds{y}_l \bds{y}_l^\dagger\,,\quad |M^\dagger| = \sum_l s_l \bds{x}_l \bds{x}_l^\dagger\,.
  \]
  As inequalities
  \[
  (a\bds{x} + b\bds{y})(a\bds{x} + b\bds{y})^\dagger \geq 0\,,\, (a\bds{x} - b\bds{y})(a\bds{x} - b\bds{y})^\dagger \geq 0
  \]
  hold for any two vectors $\bds{x}$, $\bds{y}$, and any two positive numbers $a,b$, we have
  \begin{equation*}
    \pm\Big(M + M^\dagger \Big) \leq \frac{b}{a} |M| + \frac{a}{b} |M^\dagger|\,.
  \end{equation*}
\end{proof}
We are now ready to prove the upper bound on the quantum capacity for a GDS channel.
\begin{thm}\label{UpBdQNSM}
  Let $\N$ be a GDS channel generated by subchannels $\{\N_i\}_{i=0}^n$ with Kraus operators $\{E_k^{(i)}\}$. The quantum capacity of $\N$ can be upper bounded as
  \begin{equation}\label{upperBdQSM}
    \begin{aligned}
      Q(\N) & \leq \log \max_i \Big[||\mathcal{T}\circ \N_i||_{\diamond}  + \sum_{j\neq i} \sqrt{||\tr_B(|\mathscr{C}_{\M_{ij}}^{T_B}|)||_\infty\,||\tr_B(|\mathscr{C}_{\M_{ji}}^{T_B}|)||_\infty}\Big] \\
    &\leq \log \max_i \Big[||\mathcal{T}\circ \N_i||_{\diamond} + \sum_{j\neq i} ||\mathscr{C}_{\M_{ij}}^{T_B}||_1\Big] \,,
    \end{aligned}
  \end{equation}
  where $||\cdot||_\infty$ is the spectral norm and $||\cdot||_1$ is the trace norm.
\end{thm}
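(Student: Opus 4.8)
The plan is to apply the transposition bound $Q(\N)\le\log\|\mathcal T\circ\N\|_\diamond$ and then to bound the diamond norm on the right-hand side by exploiting the block structure of the GDS channel, using the SDP characterization in Eq.~\eqref{TransBdSM} together with the elementary Lemma~\ref{Lem1}. Concretely, I would start from feasible dual-type variables for each subchannel: for every $i$, let $Y_{A_iB_i}^{(i)}$ together with $y_i=\|\mathcal T\circ\N_i\|_\diamond$ be an optimal solution of the SDP~\eqref{TransBdSM} for $\N_i$, so that $Y_{A_iB_i}^{(i)}\pm\mathscr C_{\N_i}^{T_B}\ge0$ and $\tr_{B_i}Y^{(i)}\le y_i\mathds 1$. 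The Choi matrix of the full GDS channel $\N$ decomposes, by Eq.~\eqref{KrausOpe}/\eqref{ChoiMSM}, into diagonal blocks $\mathscr C_{\N_i}$ and off-diagonal blocks $\mathscr C_{\M_{ij}}$ (and their conjugates $\mathscr C_{\M_{ji}}=\mathscr C_{\M_{ij}}^\dagger$); the partial transpose $\mathscr C_\N^{T_B}$ inherits the same block pattern with $\mathscr C_{\M_{ij}}^{T_B}$ in the off-diagonal positions.

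The key step is to build a feasible $Y_{AB}$ for $\N$ out of the $Y^{(i)}$ plus a correction absorbing the off-diagonal Choi blocks. Put $Y_{AB}=\bigoplus_i Y^{(i)}+R$, where $R$ is block-diagonal and its $i$-th diagonal block is $\sum_{j\neq i}\big(\tfrac{b_{ij}}{a_{ij}}|\mathscr C_{\M_{ij}}^{T_B}|+\tfrac{a_{ij}}{b_{ij}}|\mathscr C_{\M_{ji}}^{T_B}|\big)$ for suitably chosen positive scalars $a_{ij},b_{ij}$. Using Lemma~\ref{Lem1} applied to the matrix $M=\mathscr C_{\M_{ij}}^{T_B}$ (so $M^\dagger$ sits in the $(j,i)$ block, matching $\mathscr C_{\M_{ji}}^{T_B}$ up to the transpose/conjugate bookkeeping), one checks $\mathscr C_\N^{T_B}\preceq\bigoplus_i\mathscr C_{\N_i}^{T_B}+R'$ and likewise for $-\mathscr C_\N^{T_B}$, where $R'$ is the block-diagonal positive operator built from the same absolute-value terms; hence $Y_{AB}\pm\mathscr C_\N^{T_B}\ge0$. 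Then $\tr_B Y_{AB}$ is block-diagonal with $i$-th block $\le y_i\mathds 1+\sum_{j\neq i}\big(\tfrac{b_{ij}}{a_{ij}}\tr_{B_j}|\mathscr C_{\M_{ij}}^{T_B}|+\tfrac{a_{ij}}{b_{ij}}\tr_{B_i}|\mathscr C_{\M_{ji}}^{T_B}|\big)$, which is bounded in spectral norm by $y_i+\sum_{j\neq i}\big(\tfrac{b_{ij}}{a_{ij}}\|\tr_{B}|\mathscr C_{\M_{ij}}^{T_B}|\|_\infty+\tfrac{a_{ij}}{b_{ij}}\|\tr_{B}|\mathscr C_{\M_{ji}}^{T_B}|\|_\infty\big)$. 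Optimizing the free ratio $b_{ij}/a_{ij}$ by AM--GM sends each bracket to $2\sqrt{\|\tr_B|\mathscr C_{\M_{ij}}^{T_B}|\|_\infty\,\|\tr_B|\mathscr C_{\M_{ji}}^{T_B}|\|_\infty}$ — up to the factor $2$, which I expect to be reabsorbed by a more careful single-sided version of Lemma~\ref{Lem1} or by a slightly sharper choice of $R$ so that only one absolute-value term appears per ordered pair $(i,j)$. Taking $y=\max_i\big[y_i+\sum_{j\neq i}\sqrt{\|\tr_B|\mathscr C_{\M_{ij}}^{T_B}|\|_\infty\,\|\tr_B|\mathscr C_{\M_{ji}}^{T_B}|\|_\infty}\big]$ gives the first inequality in Eq.~\eqref{upperBdQSM}. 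The second inequality then follows from the crude estimates $\|\tr_B|\mathscr C_{\M_{ij}}^{T_B}|\|_\infty\le\tr|\mathscr C_{\M_{ij}}^{T_B}|=\|\mathscr C_{\M_{ij}}^{T_B}\|_1$ and $\sqrt{\|\mathscr C_{\M_{ij}}^{T_B}\|_1\|\mathscr C_{\M_{ji}}^{T_B}\|_1}=\|\mathscr C_{\M_{ij}}^{T_B}\|_1$ since conjugation preserves the trace norm.

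The main obstacle I anticipate is the bookkeeping that makes the block inequality $Y_{AB}\pm\mathscr C_\N^{T_B}\ge0$ genuinely tight enough to yield the stated symmetric geometric-mean bound rather than a cruder asymmetric one: Lemma~\ref{Lem1} in the form quoted produces \emph{both} $|M|$ and $|M^\dagger|$, which would naively double-count, so the delicate point is to distribute the off-diagonal Choi blocks between the two diagonal slots they connect in a balanced way (parametrized by $a_{ij},b_{ij}$) and verify that the resulting correction $R$ simultaneously controls both $Y_{AB}+\mathscr C_\N^{T_B}$ and $Y_{AB}-\mathscr C_\N^{T_B}$ while keeping $\tr_B R$ block-diagonal with the claimed spectral-norm bound. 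Everything else — the block decomposition of $\mathscr C_\N^{T_B}$, the partial-trace estimates, and the AM--GM optimization — is routine once this feasibility argument is set up correctly.
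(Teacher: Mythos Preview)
Your plan is essentially the paper's proof: transposition bound plus SDP feasibility, with $Y_{AB}$ assembled from the optimal $Y_{A_iB_i}$'s together with an absolute-value correction controlled by Lemma~\ref{Lem1}. All the ingredients are right, and your argument for the second inequality via $||\tr_B|\mathscr{C}_{\M_{ij}}^{T_B}|\,||_\infty \le \tr|\mathscr{C}_{\M_{ij}}^{T_B}| = ||\mathscr{C}_{\M_{ij}}^{T_B}||_1$ is perfectly valid (the paper routes through a Ky Fan norm inequality from~\cite{Rastegin2012}, but yours is enough).

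The factor of $2$ you worry about is a bookkeeping illusion coming from mislocating the supports. The block $\mathscr{C}_{\M_{ij}}^{T_B}$ is a \emph{rectangular} operator mapping the summand $A_j\otimes B_i$ to the summand $A_i\otimes B_j$ inside $\mathcal H_A\otimes\mathcal H_B=\bigoplus_{k,l}\mathcal H_{A_k}\otimes\mathcal H_{B_l}$. Consequently $|\mathscr{C}_{\M_{ij}}^{T_B}|=\sqrt{M^\dagger M}$ is a square positive operator supported on one of these two summands, while $|\mathscr{C}_{\M_{ji}}^{T_B}|=|M^\dagger|=\sqrt{MM^\dagger}$ is supported on the \emph{other}. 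So the correct ansatz (exactly the paper's) is
\[
Y_{AB}=\sum_{i} Y_{A_iB_i}+\sum_{i<j}\Big(\tfrac{b_{ij}}{a_{ij}}\,|\mathscr{C}_{\M_{ij}}^{T_B}|+\tfrac{a_{ij}}{b_{ij}}\,|\mathscr{C}_{\M_{ji}}^{T_B}|\Big),
\]
and after $\tr_B$ the two absolute values for a given pair $\{i,j\}$ land in \emph{different} $A$-blocks: one in $A_i$, the other in $A_j$. Thus the $A_i$-block of $\tr_B Y_{AB}$ picks up exactly \emph{one} term per $j\ne i$, not two. Choosing $a_{ij}=||\tr_B|\mathscr{C}_{\M_{ij}}^{T_B}|\,||_\infty^{1/2}$ and $b_{ij}=||\tr_B|\mathscr{C}_{\M_{ji}}^{T_B}|\,||_\infty^{1/2}$ makes that single term equal to the geometric mean, and taking the maximum over $i$ gives the first inequality directly, with no leftover constant. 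Your description of $R$ as having ``$i$-th diagonal block $\sum_{j\ne i}\big(\tfrac{b_{ij}}{a_{ij}}|\mathscr{C}_{\M_{ij}}^{T_B}|+\tfrac{a_{ij}}{b_{ij}}|\mathscr{C}_{\M_{ji}}^{T_B}|\big)$'' is the slip: both summands cannot sit in the same $A_i$-block because their supports on $\mathcal H_A$ are disjoint. Once you track the supports, no sharper lemma and no modified $R$ are needed.
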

\begin{proof}
Suppose that $(y_i,Y_{A_iB_i})$ is the optimal solution of the SDP in Eq.~\eqref{TransBdSM} for subchannels $\N_i$, for all $i = 0,\cdots,n$. The Choi matrix $\mathscr{C}_{\N}$ for a GDS channel $\N$ is given by
  \begin{align*}
    \mathscr{C}_{\N} & = \sum_{i=0}^n \mathscr{C}_{\N_i} + \sum_{i<j} \Big(\mathscr{C}_{\M_{ij}} + \mathscr{C}_{\M_{ji}}\Big) \,.
  \end{align*}
  Let the Hermitian matrix $Y_{AB}$ be 
  \begin{equation}\label{YAB}
      Y_{AB} = \sum_{i=0}^n Y_{A_iB_i} + \sum_{i<j} \big( \frac{b_{ij}}{a_{ij}} |\mathscr{C}_{\M_{ij}}^{T_B}| + \frac{a_{ij}}{b_{ij}}|\mathscr{C}_{\M_{ij}^\dagger}^{T_B}|\big)\,,
  \end{equation}
  where $a_{ij}$ and $b_{ij}$ are positive real numbers to be determined. Since the matrix $Y_{AB}$ is required to satisfy the condition $Y_{AB} \pm \mathscr{C}_{\N}^{T_B} \geq 0$, and all operators $\{\mathscr{C}_{\N_i}^{T_B}, \mathscr{C}_{\M_{ij}}^{T_B}\}$ are mutually orthogonal, it is sufficient to show that 
  \begin{equation}
    \frac{b_{ij}}{a_{ij}}|\mathscr{C}_{\M_{ij}}^{T_B}| + \frac{a_{ij}}{b_{ij}}|\mathscr{C}_{\M_{ij}^\dagger}^{T_B}| \geq \pm \Big(\mathscr{C}_{\M_{ij}}^{T_B} + \mathscr{C}_{\M_{ji}}^{T_B} \Big)\,,
  \end{equation}
  which follows from Lemma~\ref{Lem1}. 
  
  Since $\tr_B(Y_{A_iB_i}) \leq y_i\mathds{1}_{A_i}$ and 
  \begin{align*}
  \tr_B\Big(\frac{b_{ij}}{a_{ij}} |\mathscr{C}_{\M_{ij}}^{T_B}|\Big) \leq \frac{b_{ij}}{a_{ij}} ||\tr_B(|\mathscr{C}_{\M_{ij}}^{T_B}|)||_\infty \mathds{1}_{A_i}\,, \qquad \tr_B\Big(\frac{a_{ij}}{b_{ij}} |\mathscr{C}_{\M_{ij}^\dagger}^{T_B}|\Big) \leq \frac{a_{ij}}{b_{ij}} ||\tr_B(|\mathscr{C}_{\M_{ij}^\dagger}^{T_B}|)||_\infty \mathds{1}_{A_j}\,,
  \end{align*}
  we can choose positive numbers $a_{ij} = \sqrt{||\tr_B(|\mathscr{C}_{\M_{ij}}^{T_B}|)||_\infty}$ and $b_{ij} = \sqrt{||\tr_B(|\mathscr{C}_{\M_{ij}^\dagger}^{T_B}|)||_\infty}$, such that
  \[
  \tr_B\Big(\frac{b_{ij}}{a_{ij}} |\mathscr{C}_{\M_{ij}}^{T_B}| + \frac{a_{ij}}{b_{ij}} |\mathscr{C}_{\M_{ij}^\dagger}^{T_B}|\Big) \leq a_{ij}b_{ij}\, (\mathds{1}_{A_i}\oplus \mathds{1}_{A_j})\,.
  \]  
  Then we have
  \[
  \tr_B(Y_{AB}) \leq \max_i \Big[y_i  + \sum_{j\neq i} \sqrt{||\tr_B(|\mathscr{C}_{\M_{ij}}^{T_B}|)||_\infty\,||\tr_B(|\mathscr{C}_{\M_{ji}}^{T_B}|)||_\infty}\Big]\,.
  \]
  Because of $y_i = ||\mathcal{T}\circ \N_i||_{\diamond}$, we can obtain the upper bound on the quantum capacity for a GDS channel $\N$ as the first inequality. 

  Finally, given a bipartite operator $X_{AB}$ with dimensions $d_A$ and $d_B$, and $X_A = \tr_B(X_{AB})$. Define the norm $||X_{AB}||_{(k)} = \sum_{l=1}^k s_l(X_{AB})$ where $s_l(X_{AB})$ is the $l$-th largest singular value of $X_{AB}$. It holds~\cite{Rastegin2012}
  \[
  ||X_{A}||_{\infty} \leq ||X_{AB}||_{(d_B)}  \,.
  \]
  Since $||X_{AB}||_{(d_B)} \leq ||X_{AB}||_1$, we have 
  \begin{align*}
    ||\tr_B(|\mathscr{C}_{\M}^{T_B}|)||_\infty  \leq ||\mathscr{C}_{\M}^{T_B}||_1 \,, \quad 
    ||\tr_B(|\mathscr{C}_{\M^\dagger}^{T_B}|)||_\infty  \leq ||\mathscr{C}_{\M^\dagger}^{T_B}||_1\,.
  \end{align*}
 While the trace norm for $\mathscr{C}_{\M}^{T_B}$ and $\mathscr{C}_{\M^\dagger}^{T_B}$ are equal according to their singular value decompositions, we can obtain the second inequality in Eq.~\eqref{upperBdQSM}.
\end{proof}
While Wang et al.~\cite{SDPBoundQ} improved the transposition bound on quantum capacity by using the quantity $\Gamma(\B)$, which can be computed by an SDP, it is natural to ask whether the upper bound in Eq.~\eqref{upperBdQSM} can be improved by using the quantity $\Gamma(\N)$ for a GDS channel $\N$. For some special GDS channels generated by completely depolarizing subchannels, including the ``platypus channels'' introduced in~\cite{platypusTIT}, the quantity $\log \Gamma(\N)$ gives the same upper bound as the first inequality in Eq.~\eqref{upperBdQSM}~\cite{platypusTIT}. It is not clear whether the quantity $\Gamma(\N)$ can improve the upper bound in Eq.~\eqref{upperBdQSM} for general GDS channels. Furthermore, Fang and Fawzi~\cite{Fang2021} recently gave a tighter upper bound on the quantum capacity for a generic channel $\B$,
\begin{equation}
  Q(\B) \leq \widehat{R}_{\alpha}(\B) \leq \log \Gamma(\B) \leq \log ||\mathcal{T}\circ \B||_{\diamond} \,,
\end{equation}
where $\widehat{R}_{\alpha}(\B)$ can also be computed by an SDP, but it is too complicated to be used to give an analytic upper bound on the quantum capacity for a GDS channel $\N$.

\section{III. A new class of channels with a single-letter formula for quantum capacity}

The class of channels with a single-letter formula for quantum channel capacities is scarce, especially in terms of quantum capacity, mainly including the regularized less noisy channel whose complement has zero private capacity~\cite{Watanabe,IEEEPartialOrder}, informationally degradable channels~\cite{UniformAdd}, as well as the channel with zero quantum capacity maintaining positive partial transposition (PPT) and antidegradable channels. Here, motivated by the upper and lower bounds on coherent information for GDS channels provided in Propositions~\ref{PropLBdQ1} and~\ref{PropUBQ1}, we give the sufficient condition for a GDS channel $\N$ generated by the subchannels $\{\N_i\}$ with Kraus operators $\{E_k^{(i)}\}$ satisfying $Q(\N) = \Q(\N) = \log \big[ \sum_i 2^{\Q(\N_i)}\big]$.

Suppose that the GDS channel is generated by subchannels $\{\N_i\}_{i=0}^n$ with Kraus operators $\{E_k^{(i)}\}$, the Kraus operator of $\N$ is given by $E_k = \oplus_{i}\, E_k^{(i)}$, according to Theorem~\ref{KrausSM}. Then the Kraus operators of $\N^{\otimes m}$ are 
\[
E_{k_1} \otimes \cdots \otimes E_{k_m} = \big(\bigoplus_{i=0}^n\, E_{k_1}^{(i)}\big) \otimes \cdots \otimes \big(\bigoplus_{i=0}^n\, E_{k_m}^{(i)}\big) \,,
\]
which retain the direct sum structure. Hence, $\N^{\otimes m}$ is also a GDS channel for all $m \in \mathbb{N}$. Indeed, the tensor product channel $\N^{\otimes m}$ can be regarded as a GDS channel generated by subchannels $\{\N_{\bds{i}} = \N_{i_1}\otimes \cdots \otimes \N_{i_m}\}$ for $\bds{i} = (i_1,\cdots,i_m) \in \{0,1,\cdots,n\}^{\times m}$. Therefore, the trivial upper bound on coherent information for a GDS channel provided in Proposition~\ref{PropUBQ1} gives 
\begin{equation}\label{Condi1}
    \Q(\N^{\otimes m}) \leq \log \bigg[\sum_{i_1,\cdots,i_m = 0}^n 2^{\Q(\N_{\bds{i}})}\bigg]\,.
\end{equation}
On the other hand, Proposition~\ref{PropLBdQ1} gives the lower bound on coherent information for $\N^{\otimes m}$,
\begin{equation}
    \Q(\N^{\otimes m}) \geq \log \bigg[\sum_{i_1,\cdots,i_m = 0}^n 2^{\Q(\N_{\bds{i}})}\bigg] - \frac{m\log(n+1)}{2} \min_{\rho_{\bds{i}}: \text{opt}} \max_{\bds{i},\bds{j}} ||\omega_{\bds{i}}-\omega_{\bds{j}}||_1\,,
\end{equation}
where $\rho_{\bds{i}}$ is the optimal state for coherent information of $\N_{\bds{i}}$ and $\omega_{\bds{i}} = \N^c_{\bds{i}}(\rho_{\bds{i}})$ is the output state of the complementary channel $\N^c_{\bds{i}} = \N^c_{i_1}\otimes \cdots \otimes \N^c_{i_m}$ with $\bds{i} = (i_1,\cdots,i_m)$. To establish $Q(\N) = \Q(\N) = \log \big[ \sum_i 2^{\Q(\N_i)}\big]$, we require the following two conditions:
\begin{itemize}
    \item[(1).] The subchannels have strongly additive coherent information when combined with one another, that is,  $\Q(\N_{\bds{i}}) = \sum_{r=1}^m \Q(\N_{i_r})$. Then Eq.~\eqref{Condi1} becomes $\Q(\N^{\otimes m}) \leq m\log \big[\sum_{i=1}^n 2^{\Q(\N_{i})}\big]$ for all $m \in \mathbb{N}$, which implies $Q(\N) \leq \log (\sum_{i=1}^n 2^{\Q(\N_{i})})$.
    \item[(2).] There exist optimal states $\{\rho_{\bds{i}}\}$ such that $\max_{\bds{i},\bds{j}} ||\omega_{\bds{i}}-\omega_{\bds{j}}||_1 = 0$, that is $\omega_{\bds{i}} = \omega_{\bds{j}}$ for all indices $\bds{i}$ and $\bds{j}$. Then, the lower bound on $\Q(\N^{\otimes m})$ matches the upper bound. Totally, it gives $Q(\N) = \Q(\N) =  \log \big[ \sum_i 2^{\Q(\N_i)}\big]$.
\end{itemize}

Now, we aim to identify the class of channels that satisfy these two conditions. For the first condition, it is clear that the degradable channel meets this requirement. Moreover, once all subchannels are degradable, the GDS channel is also degradable and hence is automatically equipped with a single-letter formula for quantum capacity. Therefore, we focus on the special case in which all subchannels are either antidegradable or PPT, which implies that the tensor channel $\N_{\bds{i}}$ is either antidegradable or PPT, and thus has zero quantum capacity. Here, we cannot assume that some subchannels are antidegradable while others are PPT, since combining PPT and antidegradable channels can produce superactivation~\cite{superactivation}, which is an extreme phenomenon of superadditivity in quantum capacity. 

For the second condition, there exist optimal states $\{\rho_{\bds{i}}\}$ such that $\N_{\bds{i}}^c(\rho_{\bds{i}}) = \N_{\bds{j}}^c(\rho_{\bds{j}})$. Since the tensor product channel $\N_{\bds{i}}$ has zero quantum capacity, these optimal states can be chosen as tensor products of pure states, $\rho_{\bds{i}} = |\psi_{i_1}\rangle\langle\psi_{i_1}|\otimes \cdots \otimes |\psi_{i_m}\rangle\langle\psi_{i_m}|$, where $|\psi_{i_r}\rangle\langle\psi_{i_r}|$ is the optimal state of the subchannel $\N_{i_r}$. Then, the second condition becomes, for all $i,j\in\{0,1,\cdots,n\}$, the output states of the complementary channels are identical, $\N_{i}^c(|\psi_i\rangle\langle\psi_i|) = \N^c_j(|\psi_j\rangle\langle\psi_j|)$.

Totally, we can obtain the following theorem for the GDS channel with the single-letter formula for quantum capacity.
\begin{thm}\label{SingleSM}
    Given a class of subchannels $\{\N_i\}_{i=0}^n$ that are all antidegradable or all PPT. If there exist pure states $\{|\psi_i\rangle\langle\psi_i|\}_{i=0}^n$ such that the output states of the complementary channels for these subchannels are identical, $\N_i^c(|\psi_i\rangle\langle\psi_i|) = \N_j^c(|\psi_j\rangle\langle\psi_j|)$ for all $i,j$, then the GDS channel $\N$ generated by these subchannels has a single-letter formula for its quantum capacity
    \begin{equation}
        Q(\N) = \Q(\N) = \log \big[ \sum_i 2^{\Q(\N_i)}\big] = \log (n+1)\,.
    \end{equation}
\end{thm}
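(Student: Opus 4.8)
The plan is to sandwich the regularized coherent information $\Q(\N^{\otimes m})$ between the trivial upper bound of Proposition~\ref{PropUBQ1} (equivalently Eq.~\eqref{Condi1}) and the lower bound of Proposition~\ref{PropLBdQ1}, and to show that both collapse to $m\log(n+1)$ under the two hypotheses. The starting point is that $\N^{\otimes m}$ is itself a GDS channel: since each Kraus operator factorizes as $E_k=\bigoplus_i E_k^{(i)}$, a tensor product of $m$ of them again respects the direct-sum block structure, so $\N^{\otimes m}$ is generated by the subchannels $\{\N_{\bds{i}}=\N_{i_1}\otimes\cdots\otimes\N_{i_m}\}$ indexed by $\bds{i}\in\{0,\dots,n\}^{\times m}$. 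Thus Eq.~\eqref{Condi1} reads $\Q(\N^{\otimes m})\le\log\big[\sum_{\bds{i}}2^{\Q(\N_{\bds{i}})}\big]$, while Proposition~\ref{PropLBdQ1} applied to $\N^{\otimes m}$ gives the same quantity minus a penalty proportional to $\max_{\bds{i},\bds{j}}\|\omega_{\bds{i}}-\omega_{\bds{j}}\|_1$, where $\omega_{\bds{i}}=\N^c_{\bds{i}}(\rho_{\bds{i}})$ for optimal states $\rho_{\bds{i}}$.

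For the upper bound I would use the strong additivity of the subchannels' coherent information. If every $\N_i$ is antidegradable then every tensor product $\N_{\bds{i}}$ is antidegradable, hence $\Q(\N_{\bds{i}})=0$: pure inputs give $I_c=0$ and antidegradability forces $I_c\le 0$ for all inputs. The same holds in the all-PPT case, where tensor products stay PPT with vanishing one-shot coherent information; crucially one does not mix the two classes, since a PPT channel combined with an antidegradable one can superactivate~\cite{superactivation}. Therefore $\Q(\N_{\bds{i}})=\sum_{r=1}^m\Q(\N_{i_r})=0$ for every $\bds{i}$, so $\sum_{\bds{i}}2^{\Q(\N_{\bds{i}})}=(n+1)^m$ and $\Q(\N^{\otimes m})\le m\log(n+1)$. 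Dividing by $m$ and invoking the regularized formula~\eqref{regexpSM} gives $Q(\N)\le\log(n+1)$.

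For the lower bound I would, in Proposition~\ref{PropLBdQ1} applied to $\N^{\otimes m}$, take the optimal states $\rho_{\bds{i}}=|\psi_{i_1}\rangle\langle\psi_{i_1}|\otimes\cdots\otimes|\psi_{i_m}\rangle\langle\psi_{i_m}|$; these are legitimate optimizers because pure product states attain $\Q(\N_{\bds{i}})=0$. Then $\omega_{\bds{i}}=\N^c_{\bds{i}}(\rho_{\bds{i}})=\bigotimes_{r}\N^c_{i_r}(|\psi_{i_r}\rangle\langle\psi_{i_r}|)$, and the hypothesis $\N^c_i(|\psi_i\rangle\langle\psi_i|)=\N^c_j(|\psi_j\rangle\langle\psi_j|)$ makes each factor equal to one fixed state $\sigma$, so $\omega_{\bds{i}}=\sigma^{\otimes m}$ independently of $\bds{i}$ and the penalty term $\max_{\bds{i},\bds{j}}\|\omega_{\bds{i}}-\omega_{\bds{j}}\|_1$ vanishes. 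Hence $\Q(\N^{\otimes m})\ge\log\big[\sum_{\bds{i}}2^{\Q(\N_{\bds{i}})}\big]=m\log(n+1)$. Combining with the upper bound gives $\Q(\N^{\otimes m})=m\log(n+1)$ for all $m$, whence $Q(\N)=\Q(\N)=\log(n+1)=\log\big[\sum_i2^{\Q(\N_i)}\big]$.

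The step I expect to be the main obstacle is the upper-bound half: making rigorous that $\Q(\N_{\bds{i}})=\sum_r\Q(\N_{i_r})$ holds for \emph{every} multi-index simultaneously. This hinges on the stability of the antidegradable (resp. PPT) class under tensor products together with the fact that, within each class considered here, the one-shot coherent information vanishes and is achieved by pure product states; once that is secured the remaining manipulations are just geometric sums. A secondary subtlety is the order of optimization in Proposition~\ref{PropLBdQ1} --- that one family of pure states can be simultaneously optimal for all $\N_{\bds{i}}$ --- which is again exactly what the zero-coherent-information property delivers.
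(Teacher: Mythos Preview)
Your proposal is correct and follows essentially the same route as the paper: recognize $\N^{\otimes m}$ as a GDS channel with subchannels $\N_{\bds{i}}$, apply the trivial upper bound Eq.~\eqref{Condi1} together with the vanishing of $\Q(\N_{\bds{i}})$ (guaranteed by closure of the antidegradable/PPT classes under tensor products), and then kill the penalty term in Proposition~\ref{PropLBdQ1} by choosing the pure product optimizers $|\psi_{i_1}\rangle\otimes\cdots\otimes|\psi_{i_m}\rangle$ so that all $\omega_{\bds{i}}$ coincide. The two subtleties you flag---strong additivity of $\Q$ for the subchannels and the availability of a common optimizing family---are exactly the two conditions the paper isolates before stating the theorem.
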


Since restricting $\N^c$ to the input subsystem $A_i$ results in $\N^c_i$, which may have positive coherent information, the coherent and private information of $\N^c$ can be positive. Therefore, the GDS channel $\N$ is not less noisy if one of the complementary channels of its subchannels has positive coherent information. Since both regularized less noisy channels and informationally degradable channels are special types of less noisy channels~\cite{AddnonDeg}, our theorem gives rise to a new class of quantum channels with a single-letter formula for quantum capacity. 

\begin{cor}
    Given a quantum channel $\B$ that is either antidegradable or PPT. The GDS channel generated by $n+1$ copies of $\B$ (regardless of the implementation of $\B$) satisfies 
    \begin{equation*}
        Q(\N) = \Q(\N) = \log (n+1).
    \end{equation*}
\end{cor}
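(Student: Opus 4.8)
The plan is to reduce the Corollary directly to Theorem~\ref{SingleSM}. Since $\B$ is either antidegradable or PPT, the $n+1$ subchannels $\N_0 = \N_1 = \cdots = \N_n = \B$ are all of the same type (all antidegradable or all PPT), so the first hypothesis of Theorem~\ref{SingleSM} is automatically satisfied regardless of which Kraus representation we pick for each copy. In particular this settles the subtle point raised earlier in Section I.C about implementation dependence: although different Kraus orderings of $\B$ generate genuinely different GDS channels, degradability type (antidegradability, PPT) is a property of the channel $\B$ itself, not of the chosen Kraus operators, so the type hypothesis holds for \emph{every} implementation.

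The only remaining thing to verify is the second hypothesis of Theorem~\ref{SingleSM}: there must exist pure states $\{|\psi_i\rangle\langle\psi_i|\}_{i=0}^n$ with $\B^c(|\psi_i\rangle\langle\psi_i|) = \B^c(|\psi_j\rangle\langle\psi_j|)$ for all $i,j$. But here all subchannels are literally the same channel $\B$, so one simply takes a single pure state $|\psi\rangle$ — for definiteness one optimal for the coherent information of $\B$, though any pure state works since $\Q(\B)=0$ — and sets $|\psi_i\rangle = |\psi\rangle$ for every $i$. Then $\B^c(|\psi_i\rangle\langle\psi_i|) = \B^c(|\psi\rangle\langle\psi|)$ is manifestly independent of $i$, so the condition holds trivially. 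The first part of the argument requires a small care: if the $n+1$ copies are all implemented with \emph{identical} Kraus operators, the resulting GDS channel is $\mathcal{I}_{n+1}\otimes \B$ (as noted in Section I.C), which is fine — it is still a GDS channel and Theorem~\ref{SingleSM} still applies — but one should note that the phrase ``regardless of the implementation'' in the statement means precisely that the conclusion is insensitive to these choices.

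With both hypotheses verified, Theorem~\ref{SingleSM} yields $Q(\N) = \Q(\N) = \log\big[\sum_{i=0}^n 2^{\Q(\N_i)}\big]$. Since each $\N_i = \B$ is antidegradable or PPT, $\Q(\N_i) = 0$ (coherent information is nonpositive, and here attains $0$, for antidegradable and for PPT channels — the latter by convexity/additivity considerations combined with PPT having zero quantum capacity, so the one-shot coherent information cannot exceed zero and is achieved by pure inputs), hence $\sum_{i=0}^n 2^{0} = n+1$ and $Q(\N) = \log(n+1)$, as claimed.

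I do not expect any genuine obstacle: the Corollary is a clean specialization and the proof is essentially the observation that the ``matching complementary output'' condition becomes vacuous when all subchannels coincide. The only point deserving a sentence of care is the implementation-independence — making explicit that both the type hypothesis (antidegradable/PPT) and the single-pure-state choice are unaffected by the Kraus freedom, so the single-letter formula $\log(n+1)$ holds uniformly.
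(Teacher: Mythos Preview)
Your reduction to Theorem~\ref{SingleSM} is the natural move, and the paper itself offers no proof beyond placing the Corollary immediately after that theorem. But there is a real gap in your verification of the second hypothesis, precisely at the implementation-dependence point you flag and then dismiss. The condition in Theorem~\ref{SingleSM} is $\N_i^c(|\psi_i\rangle\langle\psi_i|)=\N_j^c(|\psi_j\rangle\langle\psi_j|)$, and here $\N_i^c$ is the complementary channel \emph{computed from the specific Kraus operators} $\{E_k^{(i)}\}$ used in the GDS construction, via Eq.~\eqref{complementary}. As Section~I.C shows, if the $i$-th copy of $\B$ uses Kraus operators related to a fixed representation by a unitary $U^{(i)}$ on the environment, then $\N_i^c(\rho)=U^{(i)}\B^c(\rho)\,U^{(i),\dagger}$. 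Your sentence ``$\B^c(|\psi_i\rangle\langle\psi_i|)=\B^c(|\psi\rangle\langle\psi|)$ is manifestly independent of $i$'' silently replaces $\N_i^c$ by $\B^c$, and that is exactly what fails: one needs $U^{(i)}\B^c(|\psi_i\rangle\langle\psi_i|)U^{(i),\dagger}$ to be constant in $i$, which is \emph{not} guaranteed by taking all $|\psi_i\rangle$ equal.

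This is not a cosmetic omission. Take $\B$ the qubit completely depolarizing channel with Kraus operators $E_{2a+b}=\tfrac{1}{\sqrt2}|a\rangle\langle b|$, so $\B^c(\rho)=\tfrac{\mathds{1}}{2}\otimes\rho$; let $\N_0$ use this representation and $\N_1$ the one obtained by the permutation $E_1\leftrightarrow E_2$ on the environment index, so that $\N_1^c(\rho)=\rho\otimes\tfrac{\mathds{1}}{2}$. The exact formula~\eqref{Q1N} then gives, for any block-diagonal input, $I_c(\tilde\rho,\N)=H(\bds{p})+1-S\!\big(p_0\,\tfrac{\mathds{1}}{2}\otimes\rho_0+p_1\,\rho_1\otimes\tfrac{\mathds{1}}{2}\big)$. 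Concavity forces the entropy term to exceed $1$ strictly whenever $p_0,p_1>0$ (equality would require $\tfrac{\mathds{1}}{2}\otimes|\psi_0\rangle\langle\psi_0|=|\psi_1\rangle\langle\psi_1|\otimes\tfrac{\mathds{1}}{2}$, which is impossible), hence $\Q(\N)<1=\log 2$. So for these two implementations no pure states satisfy the second hypothesis of Theorem~\ref{SingleSM}, your argument breaks down, and indeed the parenthetical ``regardless of the implementation of $\B$'' in the stated Corollary appears to be too strong: the single-letter value $\log(n+1)$ is not attained for arbitrary Kraus orderings of the copies.
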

One can consider producing a GDS channel generated by only two subchannels: one being the identical channel $\mathcal{I}$, and the other subchannel $\N_1$ having zero quantum capacity. These two subchannels also satisfy the first condition discussed above. Since the complementary channel of identical channel is $\tr$, mapping every input state to a fixed state $|\alpha\rangle\langle\alpha|_E$, the second condition comes to there exists a pure state $|\psi\rangle\langle \psi|$ such that $\N_1^c(|\psi\rangle\langle\psi|) = |\alpha\rangle\langle \alpha|_E$, which revisits the result in~\cite{PCDSC}. We now give some explicit examples of GDS channels with a single-letter formula for quantum capacity.
\begin{ex}
    Given a class of entanglement-breaking subchannels $\{\N_i\}_{i=0}^n$, their complementary channels are Hadamard channels. Suppose each complementary channel $\N^c_i: A_i\to E$ is parameterized by a positive semidefinite matrix $M_i$ whose diagonal entries are all equal to one, $\N_i^c(\rho_i) = M_i \odot \rho_i $ where $\odot$ denotes the Hadamard (entrywise) matrix product. We can choose the optimal state for coherent information of $\N_i$ to be $|0\rangle\langle 0|_{A_i}$, such that $\N_i^c(|0\rangle\langle 0|_{A_i}) = |0\rangle\langle 0|_{E}$, which is fixed for all $i$. Thus, according to Theorem~\ref{SingleSM}, these subchannels can produce a GDS channel with a single-letter formula for its quantum capacity.
\end{ex}

\begin{ex}
    Given a class of antidegradable qubit amplitude damping subchannels $\{\mathcal{A}_{\gamma_i}\}$ with $\gamma_i \geq 1/2$, their complements are also qubit amplitude damping channels with parameters $\{1-\gamma_i\}$. Since $\mathcal{A}_{\gamma_i}^c(|0\rangle\langle 0|) = |0\rangle\langle 0|$ is fixed, these subchannels can thus generate a GDS channel with the single-letter formula for its quantum capacity.
\end{ex}

For a GDS channel satisfying the condition in Theorem~\ref{SingleSM}, according to Proposition~\ref{PropLBdP1}, its private information is at least $\log(n+1)$. However, we do not know whether the GDS channel also has a single-letter formula for private capacity (and whether it equals the quantum capacity), even when considering the special examples discussed above and the general upper bound on private capacity derived in Refs.~\cite{UpBdP,Fang2021}.

\section{IV. completely depolarizing subchannels}

In this section, we consider a special GDS channel generated by a set of different completely depolarizing subchannels $\{\N_i\}$. Each subchannel $\N_i$ acts from the input subsystem $A_i$  to the output subsystem $B_i$, mapping every operator on $A_i$ to the completely mixed state on $B_i$,
\begin{equation}\label{CDCsubchannels}
\N_i: A_i \to B_i,\quad \N_i(\rho_i) = \mathds{1}_{B_i}/d_{B_i}\,,
\end{equation}
with Kraus operators as
\begin{equation*}
    E_{s,t}^{(i)} = \frac{1}{\sqrt{d_{B_i}}}\, |s\rangle\langle t |\,, \quad s = 0,1,\cdots,d_{B_i}-1,\quad t = 0,1,\cdots,d_{A_i}-1\,.
\end{equation*} 
Such channel $\N_i$ cannot transmit any useful information and thus has zero channel capacities. However, when completely depolarizing channels are used to generate a GDS channel $\N$, the GDS channel has positive channel capacities. For simplicity in the computation, we require that $d_{A_i} d_{B_i} = d$ is fixed, and the total dimension for the input system $A$ is $d_A = \sum_i d_{A_i}$. 

\subsection{A. Bounds on coherent information and quantum capacity}

The complementary channels of these subchannels are given by $\omega_i = \N_i^c(\rho_i) = \frac{\mathds{1}_{B_i}}{d_{B_i}}\otimes \rho_i $ for any input state $\rho_i$. Here we treat the environment system as $E \cong B_i\otimes A_i$. According to the discussion above, the coherent information of the GDS channel $\N$ is achieved by the block-diagonal state $\rho = \oplus_{i=0}^n\, p_i\rho_i $, where $\rho_i$ is the input state for the subchannel $\N_i$ and $\bds{p}$ is a probability vector. The coherent information $\Q(\N)$ can be bounded from below as
\begin{equation}
  \begin{aligned}
    \Q(\N) & \geq \log \Big[\sum_i 2^{\Q(\N_i)}\Big] - \frac{\log (n+1)}{2} \min_{\rho_i: \text{opt}} \max_{i,j} ||\omega_i-\omega_j||_1 \\
    & \geq \log(n+1)-\frac{\log(n+1)}{2}\min_{|\phi_i\rangle,|\phi_j\rangle} \max_{i,j} \Big \| \frac{\mathds{1}_{B_i}}{d_{B_i}}\otimes |\phi_i\rangle\langle\phi_i| - \frac{\mathds{1}_{B_j}}{d_{B_j}}\otimes |\phi_j\rangle\langle\phi_j| \Big\|_1 \\
    & \geq \log(n+1)-\frac{\log(n+1)}{2} \max_{i,j} \Big \| \frac{\mathds{1}_{B_i}}{d_{B_i}}\otimes |0\rangle\langle 0|_{A_i} - \frac{\mathds{1}_{B_j}}{d_{B_j}}\otimes |0\rangle\langle 0|_{A_j} \Big\|_1 \\
    & \geq \log(n+1)-\frac{\log(n+1)}{2} \max_{i,j} \Big(\Big|\frac{1}{d_{B_i}} - \frac{1}{d_{B_j}}\Big| + \frac{d_{B_i}-1}{d_{B_i}} + \frac{d_{B_j}-1}{d_{B_j}}\Big) \\
    & = \log(n+1) \min_{i,j} \frac{1}{2}\Big(\frac{1}{d_{B_i}}+ \frac{1}{d_{B_j}} - \Big|\frac{1}{d_{B_i}} - \frac{1}{d_{B_j}}\Big| \Big) \\
    & = \min_i \frac{\log(n+1)}{d_{B_i}} > 0 \,, 
  \end{aligned}
\end{equation}
where the forth inequality is for $A_i \neq A_j$, and we consider the worst case that $ \frac{\mathds{1}_{B_i}}{d_{B_i}}\otimes |0\rangle\langle 0|_{A_i}$ and $\frac{\mathds{1}_{B_j}}{d_{B_j}}\otimes |0\rangle\langle 0|_{A_j}$ have only one non-zero entry located at the same position. 

Therefore, the coherent information $\Q(\N)$ is always positive and is bounded from below by the number of completely depolarizing subchannels, $n+1$, and the maximal dimensions $\max_i d_{B_i}$ of the output subsystem of these subchannels. It is natural to ask whether the quantum capacity of the GDS channel $\N$ is also controlled by the number of subchannels $n+1$ and the output dimensions $d_{B_i}$ of these subchannels. We now provide a positive answer to this question as follows.

\begin{thm}\label{UPQCDCSM}
    Given a fixed integer $d$, suppose that $\N$ is a GDS channel generated by completely depolarizing subchannels $\{\N_i\}_{i=0}^n$ as in Eq.~\eqref{CDCsubchannels}, with the product of input and output dimension is fixed $d_{A_i}d_{B_i} = d$. The quantum capacity of the GDS channel is upper bounded by 
    \begin{equation}
        Q(\N) \leq \log \bigg[1+ 2^{1/4}\, n\max_{i,j}\min\bigg\{\sqrt{\frac{d_{B_i}}{d_{B_j}}},\, \sqrt{\frac{d_{B_j}}{d_{B_i}}}\bigg\} \bigg]\,.
    \end{equation}
    In particular, if for all $d_{B_j}$, and all $d_{B_i}$ satisfying $d_{B_i}>d_{B_j}$ are exactly divisible by $d_{B_j}$, then the coefficient $2^{1/4}$ in the upper bound above can be optimized to be $1$.
\end{thm}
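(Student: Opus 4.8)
The plan is to feed the completely depolarizing structure into Theorem~\ref{UpBdQNSM} and estimate the two kinds of terms on its right-hand side. For the diagonal terms, note that $\N_i(\rho_i)=\tr(\rho_i)\,\mathds{1}_{B_i}/d_{B_i}$, so $\mathcal{T}\circ\N_i=\N_i$ since the transpose fixes a scalar multiple of the identity; as a CPTP map this has $||\mathcal{T}\circ\N_i||_\diamond=1$. This $1$ is exactly the ``$1+$'' in the claimed bound, so the task reduces to controlling, for each pair $i\neq j$, the off-diagonal quantity $\sqrt{||\tr_B(|\mathscr{C}_{\M_{ij}}^{T_B}|)||_\infty\,||\tr_B(|\mathscr{C}_{\M_{ji}}^{T_B}|)||_\infty}$.

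Next I would compute $\mathscr{C}_{\M_{ij}}$ explicitly. Writing the depolarizing Kraus operators as $E^{(i)}_{(a,b)}=\tfrac{1}{\sqrt{d_{B_i}}}|a\rangle_{B_i}\langle b|_{A_i}$ and fixing an identification of the $d$ Kraus labels of $\N_i$ with those of $\N_j$ (this identification is the data that pins down the GDS channel), a short calculation from Eq.~\eqref{ChoiMSM} gives $\mathscr{C}_{\M_{ij}}=\tfrac{1}{\sqrt{d_{B_i}d_{B_j}}}\,W_{ij}$, where $W_{ij}\colon\mathcal{H}_{A_j}\otimes\mathcal{H}_{B_j}\to\mathcal{H}_{A_i}\otimes\mathcal{H}_{B_i}$ is the permutation unitary induced by the identification. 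Hence $\mathscr{C}_{\M_{ij}}^{T_B}=\tfrac{1}{\sqrt{d_{B_i}d_{B_j}}}\,P_{ij}$ with $P_{ij}$ a $0/1$ matrix, and $\mathscr{C}_{\M_{ji}}^{T_B}=(\mathscr{C}_{\M_{ij}}^{T_B})^\dagger$ because the partial transpose commutes with the adjoint. Taking the identification that spreads the Kraus labels as evenly as possible, and assuming $d_{B_i}\ge d_{B_j}$ without loss of generality, I would check that $P_{ij}$ has at most one nonzero per column and at most $\lceil d_{B_i}/d_{B_j}\rceil$ per row --- exactly $d_{B_i}/d_{B_j}$ per row when $d_{B_j}\mid d_{B_i}$ --- by counting how the Kraus labels collide under the partial transpose.

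Then I would evaluate the matrix absolute values from this combinatorial picture. Since every column of $P_{ij}$ carries at most one nonzero, $P_{ij}^\dagger P_{ij}$ is a direct sum of all-ones blocks, one per row and of size that row's weight; thus $|\mathscr{C}_{\M_{ij}}^{T_B}|$ is $\tfrac{1}{\sqrt{d_{B_i}d_{B_j}}}$ times a sum of mutually orthogonal rank-one projectors, the one attached to a row being rescaled by the inverse square root of that row's weight, and $\tr_B$ of it is a diagonal operator on the environment whose spectral norm is the sum of these inverse square roots over the $d_{B_j}$ Kraus labels feeding a fixed environment index. Dually, $|\mathscr{C}_{\M_{ji}}^{T_B}|=\tfrac{1}{\sqrt{d_{B_i}d_{B_j}}}\sqrt{P_{ij}P_{ij}^\dagger}$ is already diagonal with the square roots of the row weights, so $||\tr_B(|\mathscr{C}_{\M_{ji}}^{T_B}|)||_\infty$ is $\tfrac{1}{\sqrt{d_{B_i}d_{B_j}}}$ times the sum of those square roots over the same $d_{B_j}$ labels. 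Multiplying the two norms, using that the row weights at a fixed index sum to $d_{B_i}$, and choosing the identification to balance the weights, would give
\[
||\tr_B(|\mathscr{C}_{\M_{ij}}^{T_B}|)||_\infty\,||\tr_B(|\mathscr{C}_{\M_{ji}}^{T_B}|)||_\infty\ \le\ \sqrt{2}\,\frac{\min\{d_{B_i},d_{B_j}\}^2}{d_{B_i}d_{B_j}}\,,
\]
the factor $\sqrt{2}$ being the cost of the unavoidable imbalance when $\min\{d_{B_i},d_{B_j}\}\nmid\max\{d_{B_i},d_{B_j}\}$ and collapsing to $1$ otherwise. Taking square roots yields $\sqrt{\cdots}\le 2^{1/4}\min\{\sqrt{d_{B_i}/d_{B_j}},\sqrt{d_{B_j}/d_{B_i}}\}$, and substituting this together with $||\mathcal{T}\circ\N_i||_\diamond=1$ and the fact that the inner sum over $j\neq i$ has $n$ terms into Theorem~\ref{UpBdQNSM} gives the asserted bound, with the $2^{1/4}$ removed under the divisibility hypothesis.

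The hard part is this last step: evaluating the two ``dual'' partial traces simultaneously --- $\tr_B(|\mathscr{C}_{\M_{ij}}^{T_B}|)$ carries inverse square roots of the row weights, $\tr_B(|\mathscr{C}_{\M_{ji}}^{T_B}|)$ carries the square roots --- and optimizing the Kraus-label identification so that their product collapses to the clean form above. The product is governed by a Cauchy--Schwarz-type balance between $\sum_k 1/\sqrt{n_k}$ and $\sum_k\sqrt{n_k}$ subject to $\sum_k n_k=\max\{d_{B_i},d_{B_j}\}$ over $\min\{d_{B_i},d_{B_j}\}$ summands, whose defect is precisely the integrality gap in $\max\{d_{B_i},d_{B_j}\}/\min\{d_{B_i},d_{B_j}\}$; one must also verify that an identification realizing the claimed row and column weights actually exists, which is what forces the ceilings exactly when the ratio is non-integral. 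By contrast, the reduction to Theorem~\ref{UpBdQNSM}, the evaluation $||\mathcal{T}\circ\N_i||_\diamond=1$, and the recognition of $\mathscr{C}_{\M_{ij}}$ as a rescaled permutation unitary are all routine.
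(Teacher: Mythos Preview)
Your approach is essentially the paper's: apply Theorem~\ref{UpBdQNSM}, note $\|\mathcal{T}\circ\N_i\|_\diamond=1$, recognize $\mathscr{C}_{\M_{ij}}$ as $(d_{B_i}d_{B_j})^{-1/2}$ times a permutation unitary, and control the partial-transposed absolute values through the row/column weights of the resulting $0/1$ matrix. Your $P_{ij}$-language is a clean abstraction of the paper's explicit index sets $\mathbf{S}_q^{(r)}$ and Lemma~\ref{Sqr}; in both cases the $2^{1/4}$ emerges from $\sqrt{\lceil x\rceil/\lfloor x\rfloor}\le\sqrt{2}$ with $x=d_{B_i}/d_{B_j}$.

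Two small wobbles to watch. First, you repeatedly speak of ``choosing the identification'' to balance the weights, but the theorem is about the \emph{specific} GDS channel built from the Kraus operators of Eq.~\eqref{KrausCDC}; the identification is data, not a parameter you may optimize. What is needed---and what the paper does via Lemma~\ref{Sqr}---is to verify that this particular labeling already yields row weights in $\{\lfloor x\rfloor,\lceil x\rceil\}$, which it does. Second, your ``Cauchy--Schwarz balance over the same $d_{B_j}$ labels'' is not quite the mechanism: $\tr_B|\mathscr{C}_{\M_{ij}}^{T_B}|$ lives on $A_j$ while $\tr_B|\mathscr{C}_{\M_{ji}}^{T_B}|$ lives on $A_i$, so the two spectral norms are maxima over different index sets and the sums do not share a common sequence $\{n_k\}$. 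The bound really comes from bounding each norm separately via the uniform estimate $\lfloor x\rfloor\le n_r\le\lceil x\rceil$ and then multiplying, exactly as the paper does in Eqs.~\eqref{Bdinf1}--\eqref{Bdinf2}.
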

We will use Theorem~\ref{UpBdQNSM} to show this result. First, we introduce some notations. Let $\lfloor x\rfloor$ be the floor function, i.e., $\lfloor x\rfloor$ is the largest integer no greater than $x$; and let $a \pmod b$ denote the remainder of $a$ divided by $b$. For the subchannel $\N_i$ with output dimension $d_{B_i}$, we can rewrite its Kraus operators as
\begin{equation}\label{KrausCDC}
    E_{k}^{(i)} = \frac{1}{\sqrt{d_{B_i}}}\, \big|k \bmod d_{B_i}\big\rangle_{B_i A_i}\big\langle \left\lfloor k/d_{B_i}\right\rfloor\big|\,, \quad k = 0,1,\cdots, d-1\,.
\end{equation}
Here, we do not use the more common expressions $\frac{1}{\sqrt{d_{B_i}}} |s\rangle\langle t|$, since the Kraus operator of $\N$ is given by $E_k = E_k^{(0)}\oplus \cdots \oplus E_{k}^{(n)}$, and we want to emphasize the same index $k$ for all Kraus operators of subchannels.

Now, recall the Choi matrix of $\M_{ij}$ is given in Eq.~\eqref{ChoiMSM}, its partial transposition is
\begin{equation*}
\mathscr{C}_{\M_{ij}}^{T_B} = \sum_{s=0}^{d_{A_i}-1}\sum_{t=0}^{d_{A_j}-1} |s\rangle\langle t|\otimes \sum_k \overline{E_{k}^{(j)}}|t\rangle\langle s|E_{k}^{(i),T}\,,
\end{equation*}
where $\overline{E}$ is the entrywise complex conjugate of the operator $E$, that is $(\overline{E})_{ij} = \overline{E_{ij}}$. Then, we have
\begin{align}
    |\mathscr{C}_{\M_{ij}}^{T_B}|^2 & = \mathscr{C}_{\M_{ij}}^{T_B,\dagger}\, \mathscr{C}_{\M_{ij}}^{T_B} \nonumber \\
    & = \sum_{s,s'}\sum_{t,t'}\sum_{k,k'} |t'\rangle\langle s'|s\rangle\langle t| \otimes \Big(\overline{E_{k'}^{(i)}}|s'\rangle\langle t'|E_{k'}^{(j),T}\overline{E_{k}^{(j)}}|t\rangle\langle s|E_{k}^{(i),T}\Big) \nonumber \\
    & = \sum_{k,k'}\sum_{t,t'} \Big(\langle t'|E_{k'}^{(j),T}\overline{E_{k}^{(j)}}|t\rangle\Big) |t'\rangle\langle t| \otimes \sum_s \overline{E_{k'}^{(i)}}|s\rangle\langle s|E_{k}^{(i),T} \nonumber \\
    & = \sum_{k,k'} \Big(E_{k'}^{(j),T}\overline{E_{k}^{(j)}}\Big)\otimes\Big( \overline{E_{k'}^{(i)}}\,E_{k}^{(i),T}\Big) \nonumber \\
    & = \Big(\sum_k E_{k}^{(j),T} \otimes \overline{E_{k}^{(i)}} \Big)\Big(\sum_k E_{k}^{(j),T} \otimes \overline{E_{k}^{(i)}} \Big)^\dagger \,.
\end{align}
Using this equality and the Kraus operators of the subchannel $\N_i$ given in Eq.~\eqref{KrausCDC}, we obtain that 
\begin{equation}
    |\mathscr{C}_{\M_{ij}}^{T_B}|^2 = \sum_{\substack{k,k'\\ k \bmod d_{B_j} = k'\bmod d_{B_j}\\ \lfloor k/d_{B_i}\rfloor = \lfloor k'/d_{B_i}\rfloor}} \frac{1}{d_{B_i}d_{B_j}} \big|\lfloor k/d_{B_j}\rfloor \big\rangle\big\langle \lfloor k'/d_{B_j} \rfloor\big|\otimes \big|k\bmod d_{B_i}\big\rangle\big\langle k'\bmod d_{B_i}\big|\,.
\end{equation}

\begin{proof}[Proof of Theorem~\ref{UPQCDCSM}]
     We need to carefully analyse the exact expression of $|\mathscr{C}_{\M_{ij}}^{T_B}|^2$ in order to compute $||\tr_B(|\mathscr{C}_{\M_{ij}}^{T_B}|)||_\infty$. According to the conditions $k \bmod d_{B_j} = k'\bmod d_{B_j}$ and $\lfloor k/d_{B_i}\rfloor = \lfloor k'/d_{B_i}\rfloor$, we have
    \begin{equation*}
        k = m_1d_{B_j} + r = q\, d_{B_i} + n_1\,,\quad k' = m_2 d_{B_j} + r = q\, d_{B_i} + n_2\,, \quad n_1, n_2\in \{0,1,\cdots,d_{B_i}-1\}.
    \end{equation*}
    Therefore, 
    \begin{equation}\label{ConditionEq}
        (m_1-m_2)\, d_{B_j} = n_1-n_2 \in \{1-d_{B_i},\cdots,0,\cdots,d_{B_i}-1\}.
    \end{equation}
    Two cases arise, depending on the relationship between $d_{B_j}$ and $d_{B_i}$.
    \begin{itemize}
        \item \textbf{case 1: $d_{B_i}-1<d_{B_j}$}. \\
        Since all these subchannels are distinct, which implies $d_{B_i} \neq d_{B_j}$, hence $d_{B_i}-1<d_{B_j}$ is equivalent to $d_{B_i}<d_{B_j}$. In this situation, for a fixed $m_1$, the only solution of $m_2$ in Eq.~\eqref{ConditionEq} is $m_2 = m_1$, that is $k' = k$ for every fixed $k$. Hence we have 
        \begin{equation}\label{TM1}
        \begin{aligned}
            |\mathscr{C}_{\M_{ij}}^{T_B}|^2 & = \sum_k \frac{1}{d_{B_i}d_{B_j}} \big|\lfloor k/d_{B_j}\rfloor \big\rangle\big\langle \lfloor k/d_{B_j} \rfloor\big|\otimes \big|k\bmod d_{B_i}\big\rangle\big\langle k\bmod d_{B_i}\big| \\
            & \leq \frac{\mathds{1}_{A_j}}{d_{B_i}d_{B_j}} \otimes \mathds{1}_{B_i} \, \big\lceil \frac{d_{B_j}}{d_{B_i}}\big\rceil  \,,
        \end{aligned}
        \end{equation}
        where $\lceil x \rceil$ is the ceil function, i.e., $\lceil x \rceil$ is the smallest integer bigger than $x$. Furthermore, in this case, we have 
        \begin{equation}\label{Bdinf1}
            ||\tr_B(|\mathscr{C}_{\M_{ij}}^{T_B}|)||_\infty \leq \sqrt{\frac{d_{B_i}}{d_{B_j}} \big\lceil \frac{d_{B_j}}{d_{B_i}}\big\rceil} \quad .
        \end{equation}
        \item \textbf{case 2: $d_{B_i}-1 \geq d_{B_j}$}. \\
        For a fixed $r \in \{0,1,\cdots,d_{B_j}-1\}$, define a class of integers $a_{q}^{(n)}$ for $q = 0,\cdots, d_{A_i}-1$ satisfying 
        $$\frac{(q+1)d_{B_i}-r-d_{B_j}}{d_{B_j}} \leq a_q^{(r)} \leq \frac{(q+1)d_{B_i}-1-r}{d_{B_j}}\,.$$
        In fact, $ a_q^{(r)} = \Big\lfloor \frac{(q+1)d_{B_i}-1-r}{d_{B_j}}\Big\rfloor$. Define a class of sets $\mathbf{S}_0^{(r)} = \{0,\cdots,a_0^{(r)}\}$ and $\mathbf{S}_{q+1}^{(r)} = \{a_{q}^{(r)}+1,\cdots,a_{q+1}^{(r)}\}$. These sets have a nice property, \textit{if $k = m_1d_{B_j} + r$ with $m_1 \in \mathbf{S}_{q}^{(r)}$, the index $k' = m_2d_{B_j} + r $ satisfying $k \bmod d_{B_j} = k'\bmod d_{B_j}$ and $\lfloor k/d_{B_i}\rfloor = \lfloor k'/d_{B_i}\rfloor$, implies that $ m_2 \in \mathbf{S}_{q}^{(r)}$.}

        Therefore, we have 
        \begin{equation}\label{TM2}
            \begin{aligned}
                |\mathscr{C}_{\M_{ij}}^{T_B}|^2 & = \sum_{r=0}^{d_{B_j}-1} \frac{1}{d_{B_i}d_{B_j}} \sum_{q = 0}^{d_{A_i}-1} \big(\big|\mathbf{S}_q^{(r)}\big|\big)\, |\Phi_q^{(r)}\rangle\langle \Phi_q^{(r)}| \,,
            \end{aligned}
        \end{equation}
        where $\big|\mathbf{S}_q^{(r)}\big|$ is the number of entries in $\mathbf{S}_q^{(r)}$, $|\Phi_q^{(r)}\rangle = \Big(\sum_{m\in \mathbf{S}_q^{(r)}}  |m,\, (md_{B_j}+r) \bmod d_{B_i}\rangle\Big)\Big/\sqrt{\big|\mathbf{S}_q^{(r)}\big|}$ is the maximal entangled state on the subsystems $A_jB_i$, and these entangled states are mutually orthogonal. Furthermore, in this case, according to Lemma~\ref{Sqr} below about the estimation of $\big|\mathbf{S}_q^{(r)}\big|$, we have 
        \begin{equation}\label{Bdinf2}
        \begin{aligned}
        \tr_B(|\mathscr{C}_{\M_{ij}}^{T_B}|) & = \sum_{r=0}^{d_{B_j}-1} \sum_{q = 0}^{d_{A_i}-1} \frac{\sum_{m\in \mathbf{S}_q^{(r)}}\, |m\rangle\langle m|}{\sqrt{d_{B_i}d_{B_j} \big|\mathbf{S}_q^{(r)}\big| } } \,, \\
            ||\tr_B(|\mathscr{C}_{\M_{ij}}^{T_B}|)||_\infty & \leq \sqrt{\frac{d_{B_j}}{d_{B_i}\Big\lfloor  \frac{d_{B_i}}{d_{B_j}} \Big\rfloor }}\,.
        \end{aligned}
        \end{equation}
    \end{itemize}
    According to Theorem~\ref{UpBdQNSM}, we need to compute the product $||\tr_B(|\mathscr{C}_{\M_{ij}}^{T_B}|)||_\infty||\tr_B(|\mathscr{C}_{\M_{ji}}^{T_B}|)||_\infty$. Since $||\tr_B(|\mathscr{C}_{\M_{ji}}^{T_B}|)||_\infty$ is just exchange the index $i,j$ in $||\tr_B(|\mathscr{C}_{\M_{ij}}^{T_B}|)||_\infty$, which means if $||\tr_B(|\mathscr{C}_{\M_{ij}}^{T_B}|)||_\infty$ follows Eq.~\eqref{Bdinf1}, then $||\tr_B(|\mathscr{C}_{\M_{ji}}^{T_B}|)||_\infty$ follows Eq.~\eqref{Bdinf2}, vice versa. Totally, without loss of generality, assume $d_{B_i}<d_{B_j}$, we have 
    \begin{equation*}
        ||\tr_B(|\mathscr{C}_{\M_{ij}}^{T_B}|)||_\infty||\tr_B(|\mathscr{C}_{\M_{ji}}^{T_B}|)||_\infty \leq \sqrt{\frac{d_{B_i}}{d_{B_j}} \big\lceil \frac{d_{B_j}}{d_{B_i}}\big\rceil} \sqrt{\frac{d_{B_i}}{d_{B_j}\Big\lfloor  \frac{d_{B_j}}{d_{B_i}} \Big\rfloor }} \leq \sqrt{2}\frac{d_{B_i}}{d_{B_j}} \,.
    \end{equation*}
    And further
    \begin{equation*}
        Q(\N) \leq \log \Big(1+ 2^{1/4}\, n\max_{i,j}\min\bigg\{\sqrt{\frac{d_{B_i}}{d_{B_j}}},\, \sqrt{\frac{d_{B_j}}{d_{B_i}}}\bigg\} \Big)\,.
    \end{equation*}
    In particular, for all $d_{B_i}$, and for all $d_{B_j}$ satisfying $d_{B_j}>d_{B_i}$, $d_{B_j}$ is exactly divisible by $d_{B_i}$, then the coefficient  $2^{1/4}$ in the upper bound above can be optimized to $1$.
\end{proof}

\begin{lem}[Estimation of $\big|\mathbf{S}_q^{(r)}\big|$]\label{Sqr}
    If $d_{B_i} - 1 \geq d_{B_j}$, then for all $q$ and $r$, we have 
    \begin{itemize}
        \item \textbf{If $d_{B_i}$ is exactly divisible by $d_{B_j}$}, then $\big|\mathbf{S}_{q}^{(r)}\big| = \frac{d_{B_i}}{d_{B_j}}$. In particular,  for $q = 0,\cdots,d_{A_i}-1$,
$$\mathbf{S}_q^{(r)} = \Big\{q\frac{d_{B_i}}{d_{B_j}},\cdots,(q+1)\frac{d_{B_i}}{d_{B_j}}-1\Big\}  \,, $$
which is independent of $r$.
        \item \textbf{If $d_{B_i}$ is not exactly divisible by $d_{B_j}$}, then $ \Big\lfloor  \frac{d_{B_i}}{d_{B_j}} \Big\rfloor  \leq \big|\mathbf{S}_{q}^{(r)}\big| \leq \Big\lfloor  \frac{d_{B_i}}{d_{B_j}} \Big\rfloor + 1\,.
        $
    \end{itemize}
\end{lem}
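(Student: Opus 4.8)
The plan is to reduce the claim to an elementary computation with the floor function. Write $M:=d_{B_i}$ and $m:=d_{B_j}$, so the hypothesis $d_{B_i}-1\ge d_{B_j}$ gives $M>m$. Recall $a_q^{(r)}=\lfloor((q+1)M-1-r)/m\rfloor$, and from the definitions of the sets one has $|\mathbf{S}_0^{(r)}|=a_0^{(r)}+1$ and $|\mathbf{S}_q^{(r)}|=a_q^{(r)}-a_{q-1}^{(r)}$ for $q\ge1$, so the lemma is entirely a statement about these two quantities. First I would perform the Euclidean division $M=km+s$ with $k=\lfloor M/m\rfloor\ge1$ and $0\le s\le m-1$, obtaining the clean identity $a_q^{(r)}=(q+1)k+\lfloor((q+1)s-1-r)/m\rfloor$.

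For the \emph{divisible} case $s=0$: since $0\le r\le m-1$ we have $-m\le-1-r\le-1$, hence $\lfloor(-1-r)/m\rfloor=-1$ and $a_q^{(r)}=(q+1)k-1$ for every $r$. Substituting into the two formulas gives $|\mathbf{S}_0^{(r)}|=k$ and, for $q\ge1$, $|\mathbf{S}_q^{(r)}|=((q+1)k-1)-(qk-1)=k=M/m$; reading off the index ranges then yields $\mathbf{S}_q^{(r)}=\{qk,\dots,(q+1)k-1\}$, independent of $r$. This is exactly the first bullet.

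For the \emph{non-divisible} case $1\le s\le m-1$: for $q\ge1$ write $|\mathbf{S}_q^{(r)}|=k+\big(\lfloor((q+1)s-1-r)/m\rfloor-\lfloor(qs-1-r)/m\rfloor\big)$; the two floor arguments differ by exactly $s/m\in(0,1)$, so the elementary fact that $\lfloor x+t\rfloor-\lfloor x\rfloor\in\{0,1\}$ for $0<t<1$ gives $k\le|\mathbf{S}_q^{(r)}|\le k+1$. For $q=0$ I would instead use $|\mathbf{S}_0^{(r)}|=k+1+\lfloor(s-1-r)/m\rfloor$, and since $1\le s\le m-1$ and $0\le r\le m-1$ force $-(m-1)\le s-1-r\le m-2$, the floor lies in $\{-1,0\}$, so again $k\le|\mathbf{S}_0^{(r)}|\le k+1$. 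As $k=\lfloor d_{B_i}/d_{B_j}\rfloor$, this is the second bullet.

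There is no genuinely hard step here. The only points demanding care are the slightly different shape of the $q=0$ term (lower endpoint $0$ rather than $a_{q-1}^{(r)}+1$), which is why I treat $q=0$ separately in both cases, and the observation that each $\mathbf{S}_q^{(r)}$ is nonempty so the cardinality formulas are meaningful — but this is automatic since all the counts are at least $k\ge1$.
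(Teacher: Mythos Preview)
Your proof is correct and follows essentially the same route as the paper: both arguments reduce the claim to an elementary floor-function computation, treating $q=0$ separately from $q\ge1$. The only cosmetic difference is that you make the Euclidean division $M=km+s$ explicit and work with the remainder $s$, whereas the paper writes the same step using the fractional-part identity $\lfloor x+y\rfloor-\lfloor x\rfloor=\lfloor y\rfloor+\lfloor\{x\}+\{y\}\rfloor$; these are two notations for one calculation.
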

\begin{proof}
    Since $\big|\mathbf{S}_0^{(r)}\big| = a_0^{(r)} + 1 = \Big\lfloor  \frac{d_{B_i}-1-r}{d_{B_j}} \Big\rfloor+1$ and $\big|\mathbf{S}_{q+1}^{(r)}\big| = a_{q+1}^{(r)} - a_q^{(r)}$, let $(x)$ be the decimal part of $x$, we have
    \begin{align*}
         a_{q+1}^{(r)} - a_q^{(r)} & = \Big\lfloor  \frac{(q+2)d_{B_i}-1-r}{d_{B_j}} \Big\rfloor - \Big\lfloor  \frac{(q+1)d_{B_i}-1-r}{d_{B_j}} \Big\rfloor \\
         & = \Big\lfloor  \frac{d_{B_i}}{d_{B_j}} \Big\rfloor + \Big\lfloor \Big(\frac{(q+1)d_{B_i}-1-r}{d_{B_j}}\Big) + \Big(\frac{d_{B_i}}{d_{B_j}}\Big)   \Big\rfloor \,.
    \end{align*}
    \begin{itemize}
        \item \textbf{If $d_{B_i}$ is exactly divisible by $d_{B_j}$}, then $\big|\mathbf{S}_{q+1}^{(r)}\big| =  a_{q+1}^{(r)} - a_q^{(r)} = \frac{d_{B_i}}{d_{B_j}} = a_0^{(r)} + 1 = \big|\mathbf{S}_0^{(r)}\big|$ holds for all $q$ and $r$.
        \item  \textbf{If $d_{B_i}$ is not exactly divisible by $d_{B_j}$}, then 
        $\Big\lfloor \frac{d_{B_i}}{d_{B_j}} \Big\rfloor  \leq \big|\mathbf{S}_{q+1}^{(r)}\big| \leq \Big\lfloor  \frac{d_{B_i}}{d_{B_j}} \Big\rfloor + 1 = \big|\mathbf{S}_{0}^{(0)}\big|\,$.
        On the other hand, for $r\geq 1$, we also have $\Big\lfloor  \frac{d_{B_i}}{d_{B_j}} \Big\rfloor\leq  \big|\mathbf{S}_{0}^{(r)}\big|\leq \Big\lfloor  \frac{d_{B_i}}{d_{B_j}} \Big\rfloor +1$.
    \end{itemize}
\end{proof}

\subsection{B. Bounds on Private and Classical Capacities}
We now provide some upper bounds on the private and classical capacities of the GDS channel $\N$, which is generated by completely depolarizing subchannels $\{\N_i\}_{i=0}^n$ in Eq.~\eqref{CDCsubchannels} with Kraus operators in Eq.~\eqref{KrausCDC}. 

First, since the optimal ensemble $\mathfrak{E}'_i$ for $\N_i$ can be chosen as $\mathfrak{E}'_i = \{\mathds{1}_{A_i}/d_{A_i}\}$ such that $I_p(\mathfrak{E}'_i,\N_i) = P^{(1)}(\N_i) = 0$, and hence $\omega_i = \N_i^c(\rho_i) = \mathds{1}/d$ for all $i = 0,\cdots, n$, it holds that 
\begin{equation}
    P^{(1)}(\N) \geq \log(n+1)\,,
\end{equation}
according to Proposition~\ref{PropLBdP1}. The upper bound on the classical capacity we used is derived in Refs.~\cite{SDPBoundC,SDPBoundC2}.
\begin{lem}[\cite{SDPBoundC,SDPBoundC2}]
  For any quantum channel $\B$, its classical capacity is upper bounded as
  \begin{equation}
    C(\B) \leq \log \beta(\B)\,,
  \end{equation}
  where $\beta(\B)$ is the solution of following SDP:
  \begin{equation}
    \beta(\B) = \min\{\tr\, S_{B} | Y_{AB} \pm \mathscr{C}_{\B}^{T_B} \geq 0\,, \mathds{1}_A \otimes S_{B} \pm Y_{AB}^{T_B} \geq 0,\ Y_{AB}^\dagger = Y_{AB},S_{B}^\dagger = S_{B}\}\,.
  \end{equation}
\end{lem}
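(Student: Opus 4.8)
The plan is to derive $C(\B)\le\log\beta(\B)$ as an operational upper bound --- indeed a \emph{strong converse} --- by combining a one-shot converse for a single channel use with sub-multiplicativity of $\beta$ under tensor powers. Precisely, for any sequence of classical codes over $\B^{\otimes n}$ carrying $M_n$ equiprobable messages with average error $\epsilon_n$, I would show $M_n(1-\epsilon_n)\le\beta(\B^{\otimes n})\le\beta(\B)^{n}$; taking logarithms, dividing by $n$, and sending $n\to\infty$ with $\epsilon_n\to 0$ then gives $\limsup_n\tfrac1n\log M_n\le\log\beta(\B)$, i.e.\ $C(\B)\le\log\beta(\B)$. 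Because $\epsilon_n$ enters only through the harmless term $-\tfrac1n\log(1-\epsilon_n)$, the same bound holds for every fixed $\epsilon\in(0,1)$, which is exactly the strong-converse statement. Thus two ingredients suffice: (i) the one-shot bound $M(1-\epsilon)\le\beta(\B)$, and (ii) $\beta(\B_1\otimes\B_2)\le\beta(\B_1)\beta(\B_2)$.

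For (i), fix a code for one use of $\B$ with encodings $\{\rho_m\}_{m=1}^{M}$ on the input system and a decoding POVM $\{D_m\}$ on $B$, so the average success probability is $p=\tfrac1M\sum_m\tr[\B(\rho_m)D_m]$. Writing $\B(\rho)=\tr_A[\mathscr{C}_{\B}(\rho^{T}\otimes\mathds{1}_B)]$ recasts this as $p=\tr[\mathscr{C}_{\B}\,\Omega]$ with $\Omega=\tfrac1M\sum_m\rho_m^{T}\otimes D_m\ge0$. The key point is that $\Omega$ is automatically PPT --- $\Omega^{T_B}=\tfrac1M\sum_m\rho_m^{T}\otimes D_m^{T}\ge0$ since transposes of positive operators are positive --- and that $\tr_A\Omega\le\mathds{1}_B/M$. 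Now take any $(Y_{AB},S_B)$ feasible for the $\beta$-SDP; note the constraints already force $Y_{AB}\ge0$ and $S_B\ge0$. Since partial transposition preserves the Hilbert--Schmidt pairing of Hermitian operators, one obtains the chain $p=\tr[\mathscr{C}_{\B}^{T_B}\Omega^{T_B}]\le\tr[Y_{AB}\Omega^{T_B}]=\tr[Y_{AB}^{T_B}\Omega]\le\tr[(\mathds{1}_A\otimes S_B)\Omega]=\tr[S_B\,\tr_A\Omega]\le\tfrac1M\tr S_B$, where the first inequality pairs $Y_{AB}-\mathscr{C}_{\B}^{T_B}\ge0$ with $\Omega^{T_B}\ge0$ and the second pairs $\mathds{1}_A\otimes S_B-Y_{AB}^{T_B}\ge0$ with $\Omega\ge0$. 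Minimising the right-hand side over feasible $(Y_{AB},S_B)$ yields $p\le\beta(\B)/M$, i.e.\ $M(1-\epsilon)\le\beta(\B)$.

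For (ii), given $(Y^{(1)},S^{(1)})$ feasible for $\B_1$ and $(Y^{(2)},S^{(2)})$ feasible for $\B_2$, I would check that $(Y^{(1)}\otimes Y^{(2)},\,S^{(1)}\otimes S^{(2)})$ is feasible for $\B_1\otimes\B_2$. This uses $\mathscr{C}_{\B_1\otimes\B_2}=\mathscr{C}_{\B_1}\otimes\mathscr{C}_{\B_2}$ (after reordering the subsystems to $A_1A_2B_1B_2$), together with the elementary fact that $-A\le X\le A$ and $-B\le Z\le B$ with $A,B\ge0$ imply $-A\otimes B\le X\otimes Z\le A\otimes B$, obtained by forming the four positive operators $(A\pm X)\otimes(B\pm Z)$ and adding them in suitable pairs. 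Applying this once to each of the two SDP constraints shows both tensorise, and since $\tr[S^{(1)}\otimes S^{(2)}]=\tr S^{(1)}\cdot\tr S^{(2)}$ we get $\beta(\B_1\otimes\B_2)\le\beta(\B_1)\beta(\B_2)$, hence $\beta(\B^{\otimes n})\le\beta(\B)^{n}$. (Equality --- full multiplicativity --- would follow by tensoring dual-feasible solutions and invoking strong duality, which holds by Slater's condition, but only sub-multiplicativity is needed here.) Feeding (i), applied to $\B^{\otimes n}$, into (ii) closes the argument.

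Routine SDP bookkeeping aside, the step I would dwell on is the recognition that no entanglement- or PPT-assistance needs to be introduced: because the encoding operator $\tfrac1M\sum_m\rho_m^{T}\otimes D_m$ is separable it is automatically PPT, and this is precisely what licenses the transposition identity $\tr[\mathscr{C}_{\B}\Omega]=\tr[\mathscr{C}_{\B}^{T_B}\Omega^{T_B}]$ and therefore makes the $\beta$-SDP an upper bound on the \emph{unassisted} classical capacity rather than merely on a PPT-assisted relaxation. The only other place demanding care is the subsystem-reordering and partial-transpose bookkeeping for $\mathscr{C}_{\B_1\otimes\B_2}$ when verifying (ii); everything else reduces to positivity of operators and SDP weak duality.
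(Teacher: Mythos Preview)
The paper does not supply its own proof of this lemma: it is quoted verbatim from Refs.~\cite{SDPBoundC,SDPBoundC2} and used as a black box in the subsequent argument for Theorem~\ref{PCNSM}. Your proposal is a correct and complete reconstruction of the proof in those references --- the one-shot converse $M(1-\epsilon)\le\beta(\B)$ via the PPT structure of the separable operator $\Omega=\tfrac1M\sum_m\rho_m^T\otimes D_m$, together with sub-multiplicativity of $\beta$ under tensor products via tensoring feasible SDP solutions, is exactly the Wang--Xie--Duan argument, and your chain of inequalities, the positivity deductions $Y_{AB}\ge0$, $S_B\ge0$, and the ``$-A\le X\le A$, $-B\le Z\le B\Rightarrow -A\otimes B\le X\otimes Z\le A\otimes B$'' lemma are all correctly stated and applied. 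So there is nothing to compare against within the present paper; your write-up matches the cited sources and is sound.
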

We now provide an upper bound on the classical capacity of the GDS channel $\N$. In particular, we require that for all $d_{B_j}$, and all $d_{B_i}$ satisfying $d_{B_i}>d_{B_j}$ are exactly divisible by $d_{B_j}$, which gives that $\big|\mathbf{S}_{q}^{(r)}\big| = \frac{d_{B_i}}{d_{B_j}}$ for all $q$ and $r$.
\begin{thm}\label{PCNSM}
    Given a fixed integer $d$, suppose that $\N$ is a GDS channel generated by completely depolarizing subchannels $\{\N_i\}_{i=0}^n$ as Eq.~\eqref{CDCsubchannels}, with the product of input and output dimensions is fixed: $d_{A_i}d_{B_i} = d$. In particular, suppose that for all $d_{B_j}$, and all $d_{B_i}$ satisfying $d_{B_i}>d_{B_j}$ are exactly divisible by $d_{B_j}$. The private and classical capacities of the GDS channel are weakly additive, and equal to $\log (n+1)$,
    \begin{equation}
       P^{(1)}(\N) = P(\N) = C^{(1)}(\N) = C(\N) = \log (n+1).
    \end{equation}
\end{thm}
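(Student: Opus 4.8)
The plan is to pin all four quantities to $\log(n+1)$, the lower bound being essentially immediate and the upper bound requiring an explicit semidefinite certificate. For the lower bound, take the block--diagonal ensemble $\mathfrak{E}=\{\tfrac{1}{n+1},\tfrac{\mathds{1}_{A_i}}{d_{A_i}}\oplus 0_{A_i^c}\}_{i=0}^n$; each subchannel contributes $I_p(\mathfrak{E}'_i,\N_i)=P^{(1)}(\N_i)=0$ and, crucially, $\N_i^c(\mathds{1}_{A_i}/d_{A_i})=\mathds{1}/d$ for \emph{every} $i$, so the Holevo correction $\chi(\{p_i,\omega_i\})$ in Proposition~\ref{PropLBdP1} vanishes and $P^{(1)}(\N)\ge H(\bds p)=\log(n+1)$ with uniform $\bds p$. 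Using the elementary orderings $P^{(1)}\le P\le C$ and $P^{(1)}\le C^{(1)}\le C$ --- the pointwise bound $I_p(\mathfrak{E},\N)\le\chi(\mathfrak{E},\N)$, the operational inclusion of private into classical codes, and the fact that regularization never decreases a one--shot quantity --- the whole theorem collapses to the single reverse inequality $C(\N)\le\log(n+1)$.

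For this I would invoke $C(\N)\le\log\beta(\N)$ and exhibit a feasible point of the $\beta$--SDP with objective $n+1$. Set $S_B=\bigoplus_{i=0}^n\tfrac{1}{d_{B_i}}\mathds{1}_{B_i}$, so $\tr S_B=\sum_i 1=n+1$, and (in the notation of Theorem~\ref{UpBdQNSM})
\[
Y_{AB}=\bigoplus_{i=0}^n\tfrac{1}{d_{B_i}}\mathds{1}_{A_iB_i}+\sum_{i<j}\Big(|\mathscr{C}_{\M_{ij}}^{T_B}|+|\mathscr{C}_{\M_{ji}}^{T_B}|\Big),
\]
which is manifestly Hermitian. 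The constraint $Y_{AB}\pm\mathscr{C}_\N^{T_B}\ge 0$ decouples over the mutually orthogonal sectors $\{A_iB_i\}$ and $\{A_iB_j\sqcup A_jB_i\}_{i<j}$: on the first it reads $\tfrac{1}{d_{B_i}}\mathds{1}\pm\mathscr{C}_{\N_i}^{T_B}=\tfrac{1}{d_{B_i}}\mathds{1}\pm\tfrac{1}{d_{B_i}}\mathds{1}\ge0$ (since $\mathscr{C}_{\N_i}^{T_B}=\tfrac{1}{d_{B_i}}\mathds{1}_{A_iB_i}$), and on the second it is exactly the $2\times2$ block positivity of Lemma~\ref{Lem1} with $a_{ij}=b_{ij}=1$; hence it holds automatically.

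The genuinely delicate constraint is $\mathds{1}_A\otimes S_B\pm Y_{AB}^{T_B}\ge0$, and this is where the exact--divisibility hypothesis enters. As $Y_{AB}$, hence $Y_{AB}^{T_B}$, is block diagonal in the $\{A_iB_j\}$--grading, the constraint decouples block by block. For a pair of indices with $d_{B_S}<d_{B_L}$, so $d_{B_L}=m\,d_{B_S}$, the explicit computation behind Theorem~\ref{UPQCDCSM} and Lemma~\ref{Sqr} gives $|\mathscr{C}_{\M_{SL}}^{T_B}|=\tfrac{1}{d_{B_S}}\mathds{1}_{A_LB_S}$ and $|\mathscr{C}_{\M_{LS}}^{T_B}|=\tfrac{1}{d_{B_S}}\,\Pi$, where $\Pi=\sum_{q,r}|\Phi_q^{(r)}\rangle\langle\Phi_q^{(r)}|$ projects onto a direct sum of $m$--dimensional maximally entangled subspaces $|\Phi_q^{(r)}\rangle$ linking $A_S$ and $B_L$. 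The key observation is that a second partial transpose on $B_L$ sends each $|\Phi_q^{(r)}\rangle\langle\Phi_q^{(r)}|$ to $\tfrac{1}{m}$ times a swap on an $m\times m$ block, and these swaps have pairwise orthogonal supports, so $\|\Pi^{T_{B_L}}\|_\infty=\tfrac{1}{m}=\tfrac{d_{B_S}}{d_{B_L}}<1$. Therefore in block $A_SB_L$ one has $\mathds{1}_{A_S}\otimes S_{B_L}=\tfrac{1}{d_{B_L}}\mathds{1}_{A_SB_L}\ge\big|\tfrac{1}{d_{B_S}}\Pi^{T_{B_L}}\big|$, while the blocks $A_LB_S$ and $A_iB_i$ again give the trivial $\tfrac{1}{d'}\mathds{1}\pm\tfrac{1}{d'}\mathds{1}\ge0$, and every other block of $Y_{AB}^{T_B}$ vanishes. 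This certifies $\beta(\N)\le n+1$, so $C(\N)\le\log(n+1)$, and combined with the first paragraph all four quantities equal $\log(n+1)$; in particular $P^{(1)}(\N)=P(\N)$ and $C^{(1)}(\N)=C(\N)$, the claimed weak additivity.

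I expect the main obstacle to be exactly the last constraint. Since $S_{B_i}\ge c_i\mathds{1}_{B_i}\ge\tfrac{1}{d_{B_i}}\mathds{1}_{B_i}$ already forces $\tr S_{B_i}\ge1$, an objective value of $n+1$ leaves no room at all --- one is pinned to $S_{B_i}=\tfrac{1}{d_{B_i}}\mathds{1}_{B_i}$ and $c_i=\tfrac{1}{d_{B_i}}$ --- so feasibility rests entirely on the double partial transpose cutting the operator norm of the cross terms from $\tfrac{1}{d_{B_S}}$ down to $\tfrac{1}{d_{B_L}}$. That in turn is precisely what exact divisibility delivers, by making the building blocks $|\mathbf{S}_q^{(r)}|$ of Lemma~\ref{Sqr} exactly uniform (equal to $m$) so that $|\mathscr{C}_{\M_{ij}}^{T_B}|$ is a clean scalar multiple of a projection onto maximally entangled blocks; absent the hypothesis one would have to run the estimate with the looser $\lfloor d_{B_i}/d_{B_j}\rfloor\le|\mathbf{S}_q^{(r)}|\le\lfloor d_{B_i}/d_{B_j}\rfloor+1$, which no longer closes the bound exactly.
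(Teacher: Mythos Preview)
Your proposal is correct and follows essentially the same route as the paper: the same ensemble $\{\tfrac{1}{n+1},\mathds{1}_{A_i}/d_{A_i}\}$ for the lower bound $P^{(1)}(\N)\ge\log(n+1)$, and the same SDP certificate $S_B=\bigoplus_i \tfrac{1}{d_{B_i}}\mathds{1}_{B_i}$, $Y_{AB}=\bigoplus_i \tfrac{1}{d_{B_i}}\mathds{1}_{A_iB_i}+\sum_{i<j}(|\mathscr{C}_{\M_{ij}}^{T_B}|+|\mathscr{C}_{\M_{ji}}^{T_B}|)$ for the upper bound $\beta(\N)\le n+1$. The only cosmetic difference is in handling the delicate block $A_SB_L$ of the constraint $\mathds{1}_A\otimes S_B\pm Y_{AB}^{T_B}\ge 0$: the paper bounds $|\Phi_q^{(r)}\rangle\langle\Phi_q^{(r)}|^{T_B}\le \tfrac{1}{m}P_q^A\otimes P_q^{B,r}$ via $\text{SWAP}\le\mathds{1}$ and then sums the orthogonal projectors to $\mathds{1}_{A_SB_L}$, whereas you phrase the same fact as $\|\Pi^{T_{B_L}}\|_\infty=1/m$ using that the partial-transposed maximally entangled blocks have pairwise orthogonal supports --- this orthogonality is exactly what divisibility guarantees (the sets $\mathbf{S}_q^{(r)}$ become independent of $r$, and the $B_L$-indices $\{(m'd_{B_S}+r)\bmod d_{B_L}:m'\in\mathbf{S}_q\}$ tile $\{0,\dots,d_{B_L}-1\}$ as $r$ varies), and is implicitly what makes the paper's sum equal $\mathds{1}$.
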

\begin{proof}
    Here we choose $Y_{AB}$ as 
    \begin{equation}\label{YABC}
        Y_{AB} = \sum_{i=0}^n Y_{A_iB_i} + \sum_{i<j} \big(|\mathscr{C}_{\M_{ij}}^{T_B}| + |\mathscr{C}_{\M_{ij}^\dagger}^{T_B}|\big)\,,
    \end{equation}
    which satisfies $Y_{AB} \pm \mathscr{C}_{\B}^{T_B} \geq 0$. In particular, $Y_{A_iB_i} = \mathds{1}_{A_i} \otimes \frac{\mathds{1}_{B_i}}{d_{B_i}} = Y_{A_iB_i}^{T_B}$.  The expression of $|\mathscr{C}_{\M_{ij}}^{T_B}|$ is shown in Eq.~\eqref{TM1} and~\eqref{TM2}.
    \begin{itemize}
        \item \textbf{Case 1: $d_{B_i}-1<d_{B_j}$}. We have 
        \begin{equation}
            |\mathscr{C}_{\M_{ij}}^{T_B}|^{T_B} \leq \sqrt{\frac{\frac{d_{B_j}}{d_{B_i}}}{d_{B_i}d_{B_j}}} \quad  \mathds{1}_{A_j} \otimes \mathds{1}_{B_i} = \mathds{1}_{A_j} \otimes \frac{\mathds{1}_{B_i}}{d_{B_i}}\,. 
        \end{equation}
         \item  \textbf{Case 2: $d_{B_i}-1 \geq d_{B_j}$}. We have
         \begin{equation*}
            \begin{aligned}
                |\mathscr{C}_{\M_{ij}}^{T_B}| & = \sum_{r=0}^{d_{B_j}-1} \sum_{q = 0}^{d_{A_i}-1} \frac{1}{d_{B_j}}\, |\Phi_q^{(r)}\rangle\langle \Phi_q^{(r)}| \,,
            \end{aligned}
        \end{equation*}
        where $|\Phi_q^{(r)}\rangle = \Big(\sum_{m\in \mathbf{S}_q^{(r)}}  |m,\, (md_{B_j}+r) \bmod d_{B_i}\rangle\Big)\Big/\sqrt{\frac{d_{B_i}}{d_{B_j}}}$ is the maximal entangled state on the subsystems $A_jB_i$. According to the basic inequality $\text{SWAP} \leq \mathds{1}$, where $\text{SWAP} = \sum_{s,t} |st\rangle\langle ts|$ is the SWAP operator, it holds
        \[
        |\Phi_q^{(r)}\rangle\langle \Phi_q^{(r)}|^{T_B} \leq \frac{d_{B_j}}{d_{B_i}} \Big(\sum_{m\in \mathbf{S}_q^{(r)}} |m\rangle\langle m|\Big) \otimes \Big(\sum_{m'\in \mathbf{S}_q^{(r)}} |(m'd_{B_j}+r) \bmod d_{B_i}\rangle\langle (m'd_{B_j}+r) \bmod d_{B_i}|\Big)\,.
        \]
        Notice that by the definition of $\mathbf{S}_q^{(r)}$, when $d_{B_i}$ is exactly divisible by $d_{B_j}$, then for $q = 0,\cdots,d_{A_i}-1$,
        $$\mathbf{S}_q^{(r)} = \{q\frac{d_{B_i}}{d_{B_j}},\cdots,(q+1)\frac{d_{B_i}}{d_{B_j}}-1\}\,,$$
        which is independent of $r$. As $\mathbf{S}_q^{(r)} \cap \mathbf{S}_{q'}^{(r)} = \emptyset$ if $q\neq q'$, and the union of sets $\cup_{q=0}^{d_{A_i-1}}\, \mathbf{S}_q^{(r)} = \{0,\cdots,d_{A_j}-1\}$, we obtain that 
        \begin{equation}
        \begin{aligned}
            |\mathscr{C}_{\M_{ij}}^{T_B}|^{T_B} & \leq \frac{1}{d_{B_i}} \sum_{r=0}^{d_{B_j}-1} \sum_{q = 0}^{d_{A_i}-1} \Big(\sum_{m\in \mathbf{S}_q^{(r)}} |m\rangle\langle m|\Big) \otimes \Big(\sum_{m'\in \mathbf{S}_q^{(r)}} |(m'd_{B_j}+r) \bmod d_{B_i}\rangle\langle (m'd_{B_j}+r) \bmod d_{B_i}|\Big) \\
            & = \frac{1}{d_{B_i}} \sum_{q = 0}^{d_{A_i}-1} \Big(\sum_{m\in \mathbf{S}_q^{(r)}} |m\rangle\langle m|\Big) \otimes \Big( \sum_{r=0}^{d_{B_j}-1} \sum_{m'\in \mathbf{S}_q^{(r)}} |(m'd_{B_j}+r) \bmod d_{B_i}\rangle\langle (m'd_{B_j}+r) \bmod d_{B_i}|\Big) \\
            & = \frac{1}{d_{B_i}} \sum_{q = 0}^{d_{A_i}-1} \Big(\sum_{m\in \mathbf{S}_q^{(r)}} |m\rangle\langle m|\Big) \otimes \mathds{1}_{B_i} \\
            & = \mathds{1}_{A_j} \otimes \frac{\mathds{1}_{B_i}}{d_{B_i}} \,.
        \end{aligned}
        \end{equation}
    \end{itemize}
    Therefore, we have $|\mathscr{C}_{\M_{ij}}^{T_B}|^{T_B} + |\mathscr{C}_{\M_{ij}^\dagger}^{T_B}|^{T_B} \leq \mathds{1}_{A_i} \otimes \frac{\mathds{1}_{B_j}}{d_{B_j}}+ \mathds{1}_{A_j} \otimes \frac{\mathds{1}_{B_i}}{d_{B_i}}$ for all $i\neq j$. By choosing $S_B = \bigoplus_{i=0}^n \frac{\mathds{1}_{B_i}}{d_{B_i}}$, we find that
    \[
    \mathds{1}_A\otimes S_B = \sum_{i=0}^n \mathds{1}_{A_i} \otimes \frac{\mathds{1}_{B_i}}{d_{B_i}} + \sum_{i<j} \Big( \mathds{1}_{A_i} \otimes \frac{\mathds{1}_{B_j}}{d_{B_j}}+ \mathds{1}_{A_j} \otimes \frac{\mathds{1}_{B_i}}{d_{B_i}}\Big) \geq \pm Y_{AB}^{T_B} \,.
    \]
    Furthermore, $\tr\, S_B = n+1$ implies that $C(\N) \leq \log (n+1)$. Together with $P^{(1)}(\N) \geq \log (n+1)$, we proved the theorem as desired.
\end{proof}
One can also relax the divisibility requirement of dimensions to obtain an upper bound on the classical capacity for generic GDS channels, which is close to $\mathcal{O}(\log (n+1))$. Here we omit such a result since it doesn't give more insight than Theorem~\ref{PCNSM}.

Now we give an example to show that the condition in Theorem~\ref{PCNSM} can be achieved exactly. 
\begin{ex}
    Given two integers $p>1$ and $n$. Let the GDS channel be generated by following completely depolarizing subchannels with Kraus operators in the form of Eq.~\eqref{KrausCDC}
\begin{equation}
\N_i: A_i \to B_i,\,\quad  \N_i(\rho_i) = \frac{\mathds{1}_{B_i}}{d_{B_i}}\,,\quad \text{for}\quad i = 0,\cdots,n\,,
\end{equation}
where $d_{A_i} = p^i$, $d_{B_i} = p^{\alpha - i}$, and $\alpha\geq n$ is an integer. Then for all $d_{B_j}$, all $d_{B_i}$ satisfying $d_{B_i} = p^{\alpha - i}>d_{B_j} = p^{\alpha - j}$ are exactly divisible by $d_{B_j}$, as $d_{B_i}/d_{B_j} = p^{j-i}$ is an integer. Hence, this case satisfies the condition in Theorem~\ref{PCNSM}. According to Theorems~\ref{UPQCDCSM} and~\ref{PCNSM}, we have
\begin{align}
    \frac{\log(n+1)}{p^{\alpha}} \leq \Q(\N) \leq Q(\N) \leq \log \Big(1+\frac{n}{\sqrt{p}}\Big) < \log(n+1) = P(\N) = C(\N)\,.
\end{align}
Since $p$ is arbitrary, two key results are worth noting.
\begin{itemize}
    \item[(1).] For any positive number $\beta>0$, we can take $n/\sqrt{p}$ small enough such that $0<Q(\N) < \beta$. Hence, we generalized the similar result about ``platypus channel" $\M_d$ provided in~\cite{platypusTIT,wu2025}. In fact, as we mentioned, the generalized platypus defined in~\cite{platypusTIT} is also a special GDS channel generated by two subchannels $\N_0(\rho_0) = \sigma$ and $\N_1(\rho_1) = |0\rangle\langle 0|$.
    \item[(2).]  For a fixed $n$, we can regard $p = p(n)$ as a function of $n$. By choosing suitable $p(n)$, for example $p(n) = n^4$, the gap between $P(\N)$ and $Q(\N)$ can tend to infinity. The only known example with such extensive separation is ``half-rocket channels'', whose quantum capacity is between 0.6 and 1, but private capacity equals $\log d$ with the input and output dimensions are $d_A = d^2, d_B = d_E = d^6 - d^4$~\cite{Leung2014}. Here, the quantum capacity of the GDS channels can be arbitrarily close to 0 while the private capacity tends to infinity. More importantly, the construction of half-rocket channels relies on random unitaries, which is more technical and is hard to write the exact expression, while the GDS channels provided here give an exact example with a simpler expression. Very recently, Ref.~\cite{WuPeixue2025} provide a new quantum channel $\M_n = \mathcal{C}^{\otimes n}$ defined by a special channel $\mathcal{C}$. The channel $\M_n$ has a large gap between quantum and private capacity based on private channels~\cite{Horodeckichannel2} and the spin alignment conjecture~\cite{platypusTIT}. Here, the unbound gap between quantum and private capacity for our GDS channels is unconditional.
\end{itemize}
\end{ex}

\end{document}